\newenvironment{talign}
 {\align}
 {\endalign}
\newenvironment{talign*}
 {\csname align*\endcsname}
 {\endalign}
\newcommand{\eqnum}{\refstepcounter{equation}\textup{\tagform@{\theequation}}}
\newtheorem{Proposition}{Proposition}
\newtheorem{Theorem}[Proposition]{Theorem}
\newtheorem{Corollary}[Proposition]{Corollary}
\newtheorem{Lemma}[Proposition]{Lemma}
\theoremstyle{definition}
\newtheorem{Definition}[Proposition]{Definition}
\newtheorem{Example}[Proposition]{Example}
\newcommand{\BR}{\operatorname{BR}}
\newcommand{\xa}{\textnormal{a}}    
\newcommand{\xd}{\textnormal{d}}
\newcommand{\xl}{\textnormal{L}}
\newcommand{\xf}{\textnormal{F}}
\newcommand{\EoP}{\operatorname{EoP}}
\newcommand{\eop}{\operatorname{EoP}}
\newcommand{\piqr}{\pi^{\textsc{qr}}}
\newcommand{\cc}{\mathbf{c}}
\newcommand{\pp}{\mathbf{p}}
\newcommand{\rr}{\mathbf{r}}
\newcommand{\xx}{\mathbf{x}}
\newcommand{\yy}{\mathbf{y}}
\newcommand{\zz}{\mathbf{z}}
\newcommand{\hh}{\mathbf{h}}
\newcommand{\ol}{\overline}
\begin{document}

\title{Manipulating a Learning Defender and Ways to Counteract%
\thanks{Jiarui Gan was supported by the EPSRC International Doctoral Scholars Grant EP/N509711/1.}}

\author[1]{Jiarui Gan}
\author[2]{Qingyu Guo}
\author[3]{Long Tran-Thanh}
\author[2]{Bo An}
\author[1]{Michael Wooldridge}

\affil[1]{University of Oxford}
\affil[2]{Nanyang Technological University}
\affil[3]{University of Southampton}

\renewcommand\Authands{ and }

\date{}

\maketitle

\begin{abstract}
In Stackelberg security games when information about the attacker's payoffs is uncertain, algorithms have been proposed to \emph{learn} the optimal defender commitment by interacting with the attacker and observing their best responses. In this paper, we show that, however, these algorithms can be easily manipulated if the attacker responds untruthfully. As a key finding, attacker manipulation normally leads to the defender learning a \emph{maximin strategy}, which effectively renders the learning attempt meaningless as to compute a maximin strategy requires no additional information about the other player at all. We then apply a game-theoretic framework at a higher level to counteract such manipulation, in which the defender commits to a \emph{policy} that specifies her strategy commitment according to the learned information. We provide a polynomial-time algorithm to compute the optimal such policy, and in addition, a heuristic approach that applies even when the attacker's payoff space is infinite or completely unknown. Empirical evaluation shows that our approaches can improve the defender's utility significantly as compared to the situation when attacker manipulation is ignored.
\end{abstract}

\section{Introduction}

Stackelberg security games (SSGs) are Stackelberg game models developed for deriving optimal security resource allocation in strategic scenarios.
In the AI community, a line of work applying SSG models forms the algorithmic basis of resource scheduling systems, that are in use by the Los Angeles Airport, the US Cost Guard, the Federal Air Marshal Service, etc, to assist in protecting high-profile infrastructures, and public and natural resources~\cite{tambe2011security}.

The standard solution concept of SSG, the \emph{strong Stackelberg equilibrium} (SSE) captures the situation where a defender (the leader) commits to her optimal strategy, assuming that an attacker (the follower) will respond optimally to her commitment.
There are many algorithms designed to compute SSEs in different SSG models when complete information about the attacker's type (i.e., his payoff parameters) is provided.
While payoff information may be incomplete in many real environments, algorithms were also designed for the defender to \emph{learn} the optimal commitment through interacting with the attacker: by committing to a series of carefully chosen defender strategies and observing the attacker's best responses to these strategies~\cite{letchford2009learning,blum2014learning,haghtalab2016three,roth2016watch,peng2019learning}.
The optimality of the learned commitment thus relies crucially on the assumption of a truthful attacker, one who responds to the defender's commitment optimally according to their actual payoffs. Unfortunately, when there is no guarantee that the attacker will indeed be truthful, a strategic attacker can easily manipulate the learning algorithm by using fake best responses --- typically by imitating the responses of a different attacker type. The defender will then learn a commitment that is optimal with respect to the imitated type but, very likely, suboptimal with respect to the true attacker type.
As we will show in this paper, the attacker is often incentivized to imitate a type that makes the game zero-sum; a credulous defender would then only learn a \emph{maximin strategy} (i.e., the optimal commitment in a zero-sum game). Effectively, the learning attempt now becomes meaningless: to compute a maximin strategy, one needs no additional information about the other player's payoffs at all!

Driven by this issue, we study what can be done to reduce the defender's loss due to attacker manipulation. We apply a game-theoretic framework at a higher level. In the framework, the defender commits to a \emph{policy} that specifies her strategy commitment according to the learned information. A strategic attacker then takes into account the defender's policy, choosing optimally what he wants the defender to learn so that the policy outputs a strategy that benefits him the most.
We make several other contributions under this framework.
(i) We propose a novel quality measure of the defender's policy and argue why it is a reasonable choice in the context of SSGs.
(ii) We develop a polynomial-time algorithm to compute the optimal policy with respect to this quality measure, as well as
(iii) a heuristic approach which applies even when the attacker type space is infinite or completely unknown. The heuristic approach is inspired by the famous \emph{quantal response} model that was initially proposed to model bounded rationality of human players. It suggests a rationality of behaving in a ``boundedly rational'' manner in the presence of attacker manipulation.
(iv) Finally, through empirical evaluation we show that our approaches can improve the defender's utility significantly in randomly generated games, as compared to the situation when attacker manipulations are ignored.

Our work follows some recent research effort on understanding manipulation of leader learning algorithms in general Stackelberg games \cite{gan2019imitative} and shed light on the same issue in SSGs.
The SSG model offers us an appropriate level of specification that enables us to derive a richer set of results than from a general model (consider, e.g., when the interests of the leader and the follower completely align, there is simply no incentive for the follower to manipulate the leader), while it also captures sufficiently many real-world applications of significant practical value.

\paragraph{Additional Related Work}
Apart from \cite{gan2019imitative}, manipulation of leader learning algorithms remains largely an under-explored topic, though there are many papers focusing on the design learning algorithms for the leader. In addition to the aforementioned efforts to learn the optimal leader commitment against a fixed but unknown follower type~\cite{letchford2009learning,blum2014learning,haghtalab2016three,roth2016watch,peng2019learning},
a couple of papers also take the regret-minimization perspective and design online learning algorithms for the leader to use in the adversarial setting~\cite{balcan2015commitment,xu2016playing}.
Our work can be seen as a middle-ground approach between the overoptimistic assumption of a truthful follower adopted by the former line of work and the pessimistic assumption of a worst-case opponent by the latter.
Our approach to deal with attacker manipulation is, in a nutshell, to reduce information uncertainty by acquiring additional information while bearing in mind that information acquired may be manipulated by the attacher.
There are also approaches to deal with uncertainties without additional information retrieval attempts; they are immune to manipulation as a result. In particular, algorithms were designed to compute robust leader strategies when the leader can bound the follower's payoffs in certain intervals~\cite{letchford2009learning,kiekintveld2013security,nguyen2014regret}, or knows a probability distribution of the follower's type~\cite{conitzer2006computing,paruchuri2008playing,jain2008bayesian,pita2009effective}.
Follower manipulation is not a concern in applying these approaches but leader strategies yielded are weaker.
In a broader sense, our work is also related to poisoning attacks in adversarial machine learning, where an attacker manipulates the training data (in our model, their payoffs) to undermine the performance of learning algorithms; see, e.g., pioneering work in this area \cite{biggio2011support,biggio2012poisoning} and some recent surveys~\cite{chakraborty2018adversarial,liu2018survey}.

\section{SSG Preliminaries}
\label{sec:model}

An SSG is played between a \emph{defender} (the leader) and an \emph{attacker} (the follower). The defender allocates $m$ security resources to a set of targets $T = \{1,\dots,n\}$ (without loss of generality, $n > m$), and the attacker chooses a target to attack.
In the pure strategy setting, an attack on a target $i$ is unsuccessful as long as one resource is allocated to $i$, in which case the attacker receives a penalty $p_{i}^{\xa}$ and the defender a reward $r_{i}^{\xd}$. Otherwise, i.e., when no resource is allocated to $i$, the attack is successful, in which case the attacker receives a reward $r_{i}^{\xa}$ and the defender a penalty $p_{i}^{\xd}$. We say that a target is \emph{protected}, or \emph{covered}, if at least one resource is assigned to it; and \emph{unprotected}, or \emph{uncovered}, otherwise.
It is assumed that $r_{i}^\xa > p_{i}^\xa$ and $r_{i}^\xd > p_{i}^\xd$ for all $i$, so the attacker always prefers a successful attack and the defender prefers the opposite.

The defender can further randomize the resource allocation and commit to a mixed strategy, i.e., a probability distribution over pure strategies.
The structure of SSGs allows a defender mixed strategy to be represented more compactly as a \emph{coverage (vector)} $\mathbf{c} =  (c_{i})_{i \in T}$, with each $c_{i}$ representing the probability that target $i$ is protected. We will stick to this representation and use the terms \emph{coverage} and \emph{defender mixed strategy} interchangeably. Under the constraint that the defender can use at most $m$ resources, the set of feasible mixed strategies available for the defender to use is $\mathcal{C} = \{\mathbf{c} \in \mathbb{R}^n : 0\le \mathbf{c} \le 1, \sum\nolimits_{i\in T} c_{i} \le m  \}$; provably, any coverage vector in $\mathcal{C}$ can be implemented as a distribution of pure strategies each involving at most $m$ resources, and any such distribution results in a coverage vector in $\mathcal{C}$. When the defender plays a mixed strategy $\cc$ and the attacker attacks target $i$, let $u^\xd ( \mathbf{c}, i )$ and $u^\xa ( \mathbf{c}, i )$ be the expected utilities of the defender and the attacker, respectively. With slight abuse of notation, we write
\begin{align}
    & u^\xd ( \mathbf{c}, i ) = u^\xd ( c_i, i ) = c_i \cdot r_{i}^\xd + (1 - c_i) \cdot p_{i}^\xd, \label{eq:ud} \\
    & u^\xa ( \mathbf{c}, i ) = u^\xa ( c_i, i ) = (1 - c_i) \cdot r_i^\xa + c_i \cdot p_i^\xa. \label{eq:ua}
\end{align}
It is worth noting that $u^\xd ( \mathbf{c}, i )$ is strictly increasing with respect to $c_i$, and $u^\xa ( \mathbf{c}, i )$ strictly decreasing.
By the standard assumption, in an SSG the attacker is able to observe the defender's mixed strategy through surveillance before he launches an attack, but the instantiated pure strategy is not observable.

The \emph{strong Stackelberg equilibrium} (SSE) is the standard solution concept of SSGs. In an SSE, the defender commits to an optimal mixed strategy, taking into account that the attacker will observe this strategy and respond optimally. It is assumed that ties are broken in favor of the defender when the attacker has multiple best responses; hence, without loss of generality we can assume that the attacker always responds by playing a \emph{pure} strategy. The assumption is justified by the fact that the attacker's strict preference to the favored target can be induced if the defender reduces the coverage of this target by an infinitesimal amount.\footnote{See \cite{von2004leadership} and \cite{tambe2011security} (Chapter~8) for more discussion about the SSG and the solution concepts.}
Formally, a strategy profile $(\hat{\cc}, \hat{i})$ forms an SSE if:
\begin{talign*}
( \hat{\cc}, \hat{i} ) \in \arg\max_{\mathbf{c} \in \mathcal{C}, i \in \BR(\mathbf{c}) } u^\xd (\mathbf{c}, i ), 
\end{talign*}
where $\BR(\mathbf{c}) = \arg\max_{i\in T} u^\xa (\mathbf{c}, i)$ denotes the set of attacker best responses to $\mathbf{c}$.
An SSE always exists and can be computed in polynomial time, e.g., using a multiple-LP approach~\cite{conitzer2006computing}.

We will refer to a full set of attacker payoffs $(\mathbf{r}^\xa, \mathbf{p}^\xa)$ as an \emph{attacker type}. To distinguish, we extend \eqref{eq:ua} and define $u_\theta^\xa ( \mathbf{c}, i ) = (1 - c_i) \cdot x_i + c_i \cdot y_i$ to be the utility function parameterized by attacker type $\theta = (\xx, \yy)$.
Definition of the best response set is extended likewise, with $\BR_\theta(\mathbf{c}) = \arg\max_{i\in T} u_\theta^\xa (\mathbf{c}, i)$.
We will refer to an SSE in a game where the attacker's type is $\theta$ \emph{an SSE on attacker type $\theta$}.

\begin{Example}
\label{exp:ssg}

Consider an SSG where the defender allocates one security guard to protect two targets $A$ and $B$. The defender has three pure strategies: to assign the guard to protect $A$ or $B$, or to send the guard on vacation; the corresponding mixed strategy space is $\mathcal{C} = \{\cc \in \mathbb{R}^2: 0 \le \cc \le 1, c_1 + c_2 \le 1\}$. The attacker can choose to attack $A$ or $B$. In this game, the targets are of equal importance to the defender: a successful attack on any target $i\in \{A,B\}$ results in utility $p_i^\xd = -1$ for the defender, and an unsuccessful one results in $r_i^\xd = 0$.
For the attacker, the payoffs are $r_A^\xa = 3$, $r_B^\xa = 1$, and $p_A^\xa = p_B^\xa = 0$.
The bi-matrix representation of the game is shown below, in which the defender and the attacker are the row and column players, respectively.

\begin{figure}[h]
\centering
\small
\begin{tabular}{l|r|r|}
\multicolumn{1}{l}{}&	\multicolumn{1}{c}{\emph{attack} $A$}	&  \multicolumn{1}{c}{\emph{attack} $B$} \\\cline{2-3}
	  \rule[-0.5em]{0pt}{1.5em} \emph{protect} $A$	&	0,\quad 0	&	-1,\quad 1 \\\cline{2-3}
	  \rule[-0.5em]{0pt}{1.5em} \emph{protect} $B$	&	-1,\quad 3	&	0,\quad 0 \\\cline{2-3}
     \rule[-0.5em]{0pt}{1.5em} \emph{protect} $\varnothing$	&	-1,\quad 3	&	-1,\quad 1 \\\cline{2-3}
\end{tabular}
\end{figure}

The SSE of this game can be identified using the indifference rule, i.e., by identifying a point where the attacker is indifferent of attacking $A$ and $B$, while the defender cannot improve coverage of the targets any further (however, not in every game can an SSE be found in this way).
In the only SSE of this game, the defender protects $(A,B)$ with probabilities ${\cc} = (\frac{3}{4}, \frac{1}{4})$ (which is equivalent to a mixed strategy $\xx=(\frac{3}{4}, \frac{1}{4}, 0)$ as in the bi-matrix representation), to which the attacker finds his best responses to be $\BR({\cc}) = \{A,B\}$ and, by the SSE assumption, breaks the tie in favor of the defender by attacking $A$. The defender gets utility $u^\xd({\cc}, A) = - \frac{1}{4}$ and the attacker gets $u^\xa({\cc}, A) = \frac{3}{4}$.
\end{Example}

\section{Manipulating a Learning Defender}

We investigate how attacker manipulation can take place. Let us begin with a warm-up example.

\begin{Example}
\label{exp:manipulation}
Consider now the attacker in Example~\ref{exp:ssg} pretends to have payoff $r_A^\xa =1$ (all other parameters remain the same) and ``best'' responds to queries of the defender's learning algorithm according to this fake parameter. Let the resultant fake attacker type be $\beta$. The defender will be misled into learning an SSE on type $\beta$, in which her commitment is $\tilde{\cc} = (\frac{1}{2}, \frac{1}{2})$. We have $\BR_\beta(\tilde{\cc}) = \{A,B\}$. The attacker can respond (still with ties broken in favor of the defender) by attacking $A$, and this results in the attacker's utility to increase to $\frac{3}{2}$, but the defender's to drop to $-\frac{1}{2}$. There is a loss of $\frac{1}{4}$ for the defender compared to the truthful situation! Note that the attacker behaves consistently according to the fake type $\beta$ even after having misled the defender into learning the fake commitment. Hence, there is no way to distinguish him from a truthful type-$\beta$ attacker.
\end{Example}

In the above example, the attacker actually lies to the defender that the game they are playing is zero-sum. It turns out that this is not a coincidence specific to this example but a general phenomenon in SSGs. We show next that it is always optimal for the attacker to mislead the defender into playing her \emph{maximin strategy}. A maximin strategy $\ol{\cc}$ maximizes the defender's utility against the worst possible attacker type, i.e., $\ol{\cc} \in \arg\max_{\cc}\min_i u^\xd(\cc, i)$; it is exactly the defender's optimal commitment in a zero-sum game (see Lemma~\ref{lmm:maximin-sse} in the appendix).

A couple of ``disclaimers'' would be appropriate before we delve into our analysis:
First, in line with previous work (e.g.,~\cite{letchford2009learning,blum2014learning,peng2019learning}), we only consider the players' utilities in the (fake) SSE the defender learns. The cost incurred for both players during the learning process is omitted as we expect the learning algorithm to run efficiently and the learned SSE to repeat in sufficiently many rounds.
Without loss of generality, we view the learning process as a reporting step in which the attacker simply \emph{reports} his type to the defender. To manipulate the defender, the attacker reports a fake type, and we refer to this as his \emph{reporting strategy}.

Second, we assume that the attacker behaves consistently according to the reported type. This means that the attacker may be playing a fake best response --- hence, a suboptimal one --- in the learned SSE and may thus exploit the defender for an even higher utility by switching back to his true best response. Nevertheless, since such a change in his behavior will inevitably make the defender aware of the manipulation and further complicates the interaction, we ignore the possibility of such a behavior change and adopt this cleaner model to capture the essence of the manipulation problem.

\paragraph{Optimal Attacker Report}

The following program computes the optimal reporting strategy of a type-$\theta$ attacker.
\begin{subequations}\label{eq:op1}
\begin{talign}
  \max_{\beta, \zz, t} \quad & u_\theta^\xa( \zz, t ) \tag{\ref{eq:op1}} \\
  \text{s.t.} \quad & ( \zz, t ) \in \arg\max_{\mathbf{c} \in \mathcal{C},\ i \in \BR_\beta(\mathbf{c}) } u^\xd (\mathbf{c}, i ) \label{eq:op1-a} \\
  & \beta \in \Theta \label{eq:op1-c}
\end{talign}
\end{subequations}
Here $\Theta = \{ (\mathbf{r}^\xa, \mathbf{p}^\xa) \in \mathbb{R}^{n\times n}: r_i^\xa > p_i^\xa \text{ for all } i \in T\}$ contains all types that adhere to the basic assumption that an attacker always prefers a successful attack (we will show a stronger result that allows for other specifications of $\Theta$).
In the program, the attacker reports a fake type $\beta$ that results in the defender to learn an SSE $(\zz, t)$ on type $\beta$ (by \eqref{eq:op1-a}). An optimal solution thus yields a reporting strategy for a type-$\theta$ attacker, that maximizes his \emph{true} utility as the objective function specifies.

Using Program~\eqref{eq:op1}, we show with Theorem~\ref{thm:maximin} that it is always optimal for the attacker to mislead the defender into playing her maximin strategy.
This result is surprising as the defender essentially learns nothing: she can well compute the maximin strategy without any additional knowledge about the attacker's payoffs.
We present a proof sketch. All omitted proofs can be found in the appendix.

\begin{Theorem}
\label{thm:maximin}
There exists an optimal solution $(\beta, \zz, t)$ of Program~\eqref{eq:op1} such that $\zz$ is a maximin strategy of the defender, i.e., $\zz \in \arg\max_{\cc \in \mathcal{C}} \min_{i\in T} u^\xd(\cc, i)$.
\end{Theorem}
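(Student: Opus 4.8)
The plan is to combine an upper bound on the attacker's attainable utility with a matching report that forces the defender onto a maximin strategy. Write $v=\max_{\cc\in\mathcal C}\min_{i\in T}u^\xd(\cc,i)$ for the maximin value. The first ingredient is the standard fact that, \emph{for every} type $\beta\in\Theta$, the defender's utility in any SSE on $\beta$ is at least $v$: by committing to a maximin strategy $\ol\cc$ the defender secures $\max_{i\in\BR_\beta(\ol\cc)}u^\xd(\ol\cc,i)\ge\min_{i\in T}u^\xd(\ol\cc,i)=v$ no matter how $\beta$ responds, and the SSE maximizes the defender's utility over all commitments. Hence in any feasible triple $(\beta,\zz,t)$ of Program~\eqref{eq:op1} the attacked target obeys $u^\xd(\zz,t)\ge v$.

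Second, I would turn this into a bound on the objective. Since $u^\xd(\cc,i)$ is strictly increasing in $c_i$, the inequality $u^\xd(\zz,t)\ge v$ forces $z_t\ge\hat z_t:=\max\{0,(v-p^\xd_t)/(r^\xd_t-p^\xd_t)\}$, the least coverage of $t$ consistent with defender utility $v$. As $u^\xa_\theta(\cc,i)$ is strictly decreasing in $c_i$, this yields $u^\xa_\theta(\zz,t)\le u^\xa_\theta(\hat z_t,t)=:g(t)$, so the optimum of Program~\eqref{eq:op1} is at most $\max_{t\in T}g(t)$.

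Third, I would exhibit a maximin solution meeting this bound. Consider the coverage $\hat\cc=(\hat z_i)_{i\in T}$; by the definition of $v$ it lies in $\mathcal C$, has $\min_i u^\xd(\hat\cc,i)=v$, and is therefore maximin. For the attacked target $t$ of any optimum we have $\hat c_t=\hat z_t\le z_t$, hence $u^\xa_\theta(\hat\cc,t)=g(t)\ge u^\xa_\theta(\zz,t)$; so it suffices to realize $(\hat\cc,t)$ (or another maximin pair attacking $t$) as an SSE on some $\beta'\in\Theta$. When $t$ is \emph{active}, i.e.\ $u^\xd(\hat\cc,t)=v$, Lemma~\ref{lmm:maximin-sse} already does this via the zero-sum type, whose SSE is exactly the maximin strategy with an active target attacked. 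For a target with $u^\xd(\hat\cc,t)>v$ I would build $\beta'$ that makes the fake attacker indifferent between $t$ and every target attaining the minimum defender utility at $\hat\cc$, while every remaining target is made strictly less attractive; tie-breaking in the defender's favor then selects $t$, because $u^\xd(\hat\cc,t)>v$ exceeds the utility of the active targets. The point is that any attempt by the defender to raise $c_t$ (to approach $r^\xd_t$) renders $t$ strictly inferior for the fake attacker, diverting the attack to an active target worth only $v<u^\xd(\hat\cc,t)$; together with the budget constraint this pins the defender's optimal commitment to a maximin strategy. A feasible solution of value $g(t)$ results, which by the upper bound is optimal.

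The crux --- and the step I expect to fight with --- is the realizability construction and, above all, verifying the defender's \emph{optimality} under $\beta'$ when the attacked target is not active. Two subtleties need care: the resource budget may be slack at $\hat\cc$ (so the defender has room to increase $c_t$), and several non-active targets may compete to be the attacked one. I would handle these by exploiting that the active targets are kept maximally attractive to the fake attacker, so the defender must keep each of them covered to at least its maximin level; consequently every target retains utility $\ge v$ while at least one active target stays at exactly $v$, which forces the learned commitment to be maximin regardless of how the remaining budget is spent. Pinning down the precise coverage of $t$, and hence that the bound $g(t)$ is met (not merely approached), is where the monotonicity of $u^\xa,u^\xd$ in the coverage and the tie-breaking rule must be used most carefully.
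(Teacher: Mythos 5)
Your proposal is correct and takes essentially the same route as the paper's proof: your $\hat\cc=(\hat z_i)_{i\in T}$ is exactly the paper's constructed $\zz$, your choice of $t$ maximizing $g$ is exactly $t\in\BR_\theta(\zz)$, and your fake type $\beta'$ (indifference between $t$ and the active targets, strictly worse everywhere else) is exactly the paper's $\beta=(\rr,\pp)$ with $r_i=-p_i^\xd$ for $i\neq t$, $r_t=-\min\{p_t^\xd,\ol{u}\}$ and $p_i=-r_i^\xd$, with your active/non-active case split matching the paper's $p_t^\xd\le\ol{u}$ versus $p_t^\xd>\ol{u}$. The verification you flag as the crux is completed in the paper by precisely the mechanism you sketch: any $(\zz',t')$ with $t'\in\BR_{\beta'}(\zz')$ and $u^\xd(\zz',t')>u^\xd(\hat\cc,t)$ would force $u^\xd(\zz',i)>\ol{u}$ for every $i$ (using zero-sumness for $i\neq t$ and $u^\xd(\zz',t)\ge p_t^\xd>\ol{u}$ in the non-active case), contradicting the definition of the maximin value.
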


\begin{proof}[Proof sketch]
Let $\ol{\cc}$ be a maximin strategy of the defender and $\ol{u}$ be her maximin utility.
Consider a solution $(\beta, \zz, t)$ such that:
$z_i = \max \left\{ 0, \, \frac{\ol{u} - p_i^\xd}{r_i^\xd - p_i^\xd}\right \}$ for all $i \in T$; $t \in \BR_\theta(\zz)$;
and $\beta = (\rr, \pp)$, such that
$r_i = \begin{cases}
          -p_i^\xd, & \mbox{if } i \neq t \\
          -\min\{p_t^\xd,\, \ol{u}\}, & \mbox{if } i = t
\end{cases}$,
and $p_i = -r_i^\xd$ for all $i \in T$.
It can be verified that $\zz$ is indeed also a maximin strategy of the defender and $(\beta, \zz, t)$ is an optimal solution.
\end{proof}

Theorem~\ref{thm:maximin} can be further strengthened under the assumption that the defender's maximin strategy $\ol{\cc}$ is fully mixed, i.e., $0 < \ol{c}_i < 1$ for all $i$. (In this case the maximin strategy is also unique; see Lemma~\ref{lmm:maximin-unique} in the appendix.)
The assumption is mild as it is normally expected that no target would be too worthless to the extent that the defender would leave it wide open for the attacker to attack, while on the other hand resources are normally insufficient to allow any target to be fully protected.
The strengthening is two-fold:
(i) under the additional assumption, the defender's maximin strategy is her \emph{only} SSE strategy induced by any optimal attacker report, so the equilibrium selection issue that arises when a reported type induces multiple SSEs is avoided;
(ii) one optimal attacker report, in particular, is the type that makes the game zero-sum, so our result holds even for a more stringent specification of $\Theta$ (e.g., when the defender has more precise knowledge about possible attacker types) as long as $\Theta$ contains the zero-sum attacker type. (Indeed, it is very natural for an attacker to have the zero-sum type given the adversarial nature of SSGs.)
We state the result in Theorem~\ref{thm:maximin-fully-mixed}.

\begin{Theorem}
\label{thm:maximin-fully-mixed}
Suppose $\ol{\cc}$ is a maximin defender strategy and it is fully mixed, i.e., $0< \ol{c}_i < 1$ for all $i\in T$. Let $(\beta, \zz, t)$ be an arbitrary optimal solution of Program~\eqref{eq:op1}.
For every SSE $(\hat{\cc}, \hat{i})$ on type $\beta$, it holds that $\hat{\cc} = \ol{\cc}$.
In addition, there exists an optimal solution $(\beta', \zz', t')$ such that $\beta'= (- \pp^\xd, - \rr^\xd)$.
\end{Theorem}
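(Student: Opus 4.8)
The plan is to prove the two assertions separately, after recording a few consequences of full-mixedness. Write $\ol{u}$ for the maximin value. Since $\ol{\cc}$ is fully mixed, the box constraints are slack, so the only active constraints of the maximin LP are the budget and the indifference constraints; this forces $u^\xd(\ol{\cc}, i) = \ol{u}$ for every $i \in T$ and $\sum_i \ol{c}_i = m$. I would also note two facts used repeatedly: uniqueness of $\ol{\cc}$ (Lemma~\ref{lmm:maximin-unique}), and that for \emph{every} type $\beta$ the defender's SSE value on $\beta$ is at least $\ol{u}$, because committing to $\ol{\cc}$ collects exactly $\ol{u}$ against any best response (all $u^\xd(\ol{\cc}, i)$ equal $\ol{u}$).

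For the second assertion I would verify that $\beta' = (-\pp^\xd, -\rr^\xd)$ makes the game zero-sum, since then $u^\xa_{\beta'}(\cc, i) = -u^\xd(\cc, i)$. By Lemma~\ref{lmm:maximin-sse} the SSE defender strategy on $\beta'$ is the maximin strategy, which is the unique $\ol{\cc}$. As $u^\xa_{\beta'}(\ol{\cc}, i) = -\ol{u}$ is constant in $i$, every target is a best response at $\ol{\cc}$, so the SSE pairs on $\beta'$ are exactly $(\ol{\cc}, i)$, $i \in T$. The attacker can therefore realize the target $t' \in \arg\max_i u^\xa_\theta(\ol{\cc}, i)$ and obtain true utility $\max_i u^\xa_\theta(\ol{\cc}, i)$, which by Theorem~\ref{thm:maximin} equals the optimum of Program~\eqref{eq:op1}. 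Hence $\beta'$ is an optimal report, giving the desired optimal solution $(\beta', \zz', t')$.

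For the first assertion, fix an arbitrary optimal solution $(\beta, \zz, t)$. Step one is to show the SSE value on $\beta$ equals $\ol{u}$. If $u^\xd(\zz, t) > \ol{u}$, then $z_t > \ol{c}_t$, and since $u^\xa_\theta(\cdot, t)$ is strictly decreasing in coverage (because $x_t > y_t$ for $\theta \in \Theta$) we would get $u^\xa_\theta(\zz, t) < u^\xa_\theta(\ol{\cc}, t) \le \max_i u^\xa_\theta(\ol{\cc}, i)$, contradicting optimality via Theorem~\ref{thm:maximin}. With the lower bound from the first paragraph this pins the common SSE value of $\beta$ to $\ol{u}$, so any SSE $(\hat{\cc}, \hat{i})$ on $\beta$ satisfies $u^\xd(\hat{\cc}, \hat{i}) = \ol{u}$, i.e.\ $\hat{c}_{\hat{i}} = \ol{c}_{\hat{i}}$. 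Step two is to upgrade this to $\hat{\cc} = \ol{\cc}$ by proving $\hat{\cc}$ is itself a maximin strategy, namely $u^\xd(\hat{\cc}, i) = \ol{u}$ for all $i$, and then invoking uniqueness. The tool I would bring in is the dual of the maximin LP: it yields an attacker mixed strategy $q^*$ over targets with $\sum_i q^*_i\, u^\xd(\cc, i) \le \ol{u}$ for all $\cc \in \mathcal{C}$, and full-mixedness of $\ol{\cc}$ forces $q^*$ to have \emph{full support} (and the inequality to be tight exactly when $\sum_i c_i = m$).

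The hard part is Step two. The value-$\ol{u}$ property pins down only the coverage on the attacked target, and, through the best-response inequalities, bounds the coverage of the \emph{other} best responses; it does not on its own forbid an SSE that over-covers some unattacked target while under-covering another, and for the zero-sum report this is immediate (under-coverage makes a target strictly preferred, contradicting $\hat{i} \in \BR_{\beta}(\hat{\cc})$) but for a general optimal $\beta$ the best-response structure need not be aligned with $u^\xd$. The crux is therefore to rule out such straddling for arbitrary optimal $\beta$. I expect to combine the full-support dual $q^*$ with the fact that an SSE strategy must exhaust the budget — otherwise the defender could raise the coverage of $\hat{i}$ while suppressing the competing responses with spare resources and push her utility above $\ol{u}$ — so that $\sum_i q^*_i\, u^\xd(\hat{\cc}, i) = \ol{u}$; paired with an exchange argument that drives any over-coverage of a non-best-response target back toward $\ol{\cc}$ and shows the associated under-coverage would let some target with defender value exceeding $\ol{u}$ become a best response, contradicting the value-$\ol{u}$ property. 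Once the defender utilities at $\hat{\cc}$ are shown not to straddle $\ol{u}$, full support of $q^*$ together with the tight average forces $u^\xd(\hat{\cc}, i) = \ol{u}$ for all $i$, hence $\hat{c}_i = \ol{c}_i$ for every $i$ and $\hat{\cc} = \ol{\cc}$; this exchange/straddling step is where I anticipate the main technical effort.
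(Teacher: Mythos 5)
Your second assertion and your Step one are correct, and the second assertion follows the paper's own route: full-mixedness plus Lemma~\ref{lmm:maximin-unique} pins the maximin strategy constructed in Theorem~\ref{thm:maximin} to $\ol{\cc}$, so the optimum of Program~\eqref{eq:op1} equals $\max_{i} u_\theta^\xa(\ol{\cc}, i)$, and $(\beta', \ol{\cc}, t')$ is feasible by Lemma~\ref{lmm:maximin-sse} (with $\BR_{\beta'}(\ol{\cc}) = T$) and hence optimal. Step one is a sound defender-side reformulation not in the paper: since $\hat{u}(\beta) \ge \ol{u}$ always, and $u^\xd(\zz,t) > \ol{u}$ would force $z_t > \ol{c}_t$ and thus attacker utility strictly below the optimum, every SSE on an optimally reported $\beta$ has defender value exactly $\ol{u}$.

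Step two, however — which you yourself flag as the crux — is not a proof, and the sketch as written would not go through. Two concrete failures. First, your claim that an SSE strategy must exhaust the budget is false in general: it holds only when $\hat{c}_i < 1$ for all $i$; when some coverage saturates at $1$ the budget can be slack, and your repair (``raise the coverage of $\hat{i}$'') is unavailable precisely because $\hat{c}_{\hat{i}}$ may already equal $1$. Second, your exchange argument points the wrong way: an \emph{under-covered} target has defender value \emph{below} $\ol{u}$, so under-coverage cannot ``let some target with defender value exceeding $\ol{u}$ become a best response''; and since $\beta$'s payoffs are arbitrary, coverage comparisons with $\ol{\cc}$ license no best-response conclusions on their own. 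The missing ingredient is exactly the paper's Lemma~\ref{lmm:sseBR}, proved via KKT conditions on the SSE LP: (i) if $\hat{c}_i < 1$ for all $i$, then every positively covered target lies in $\BR_\beta(\hat{\cc})$ and $\sum_i \hat{c}_i = m$; (ii) if some $\hat{c}_i = 1$, then some $j \in \BR_\beta(\hat{\cc})$ has $\hat{c}_j = 1$. With this lemma your duality machinery is unnecessary: if $\hat{\cc} \neq \ol{\cc}$, either $\hat{\cc} \le \ol{\cc}$ pointwise with a strict inequality (so all $\hat{c}_i < 1$, and part (i) gives $\sum_i \hat{c}_i = m$, contradicting $\sum_i \hat{c}_i < \sum_i \ol{c}_i \le m$), or some target is over-covered, in which case part (i) or (ii) produces an over-covered $j \in \BR_\beta(\hat{\cc})$, whence
\begin{align*}
u^\xd(\hat{\cc}, \hat{i}) \;\ge\; u^\xd(\hat{\cc}, j) \;>\; u^\xd(\ol{\cc}, j) \;=\; \ol{u} \;=\; \hat{u}(\beta)
\end{align*}
by Corollary~\ref{crl:maximin-sse}, contradicting your Step one. (The paper instead closes this case on the attacker side, showing $u_\theta^\xa(\hat{\cc},\hat{i}) < u_\theta^\xa(\ol{\cc},t')$ contradicts optimality; either ending works once Lemma~\ref{lmm:sseBR} is in hand.) So the gap is a single but essential structural lemma about SSE coverage vectors, which your exchange sketch gestures at but neither states correctly nor proves.
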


\section{Handling Attacker Manipulation --- A New Playbook}
\label{sec:handling-manipulations}

Recall our analysis. The key to the success of the attacker's trick is the naive playbook the defender follows --- to always play the learned optimal commitment \emph{as is}.
It appears that the defender can be more strategic. Consider Example~\ref{exp:manipulation}. Suppose the defender tweaks her strategy slightly, playing $\tilde{\cc} = (\frac{1}{2}, \frac{49}{100})$ even when she learns that $(\frac{1}{2}, \frac{1}{2})$ is optimal. The attacker, who imitates type $\beta$ (that makes the game zero-sum), should then attack $B$ as now the best response set of $\beta$ becomes $\BR_{\beta}(\tilde{\cc}) = \{B\}$. The attacker obtains utility $\frac{1}{2}$, which is even lower than his utility $\frac{3}{4}$ in the truthful situation.
Therefore, if the defender commits to playing, e.g., $(c_1, c_2 - \frac{1}{100})$ whenever she learns that $(c_1, c_2)$ is optimal, the attacker will at least lose the incentive to mislead the defender into playing her maximin strategy.
The question then becomes: what is the best the defender can achieve by revising her playbook in similar ways?
We formalize this problem as finding an optimal \emph{policy} to commit to.

\paragraph{Committing to a Policy}

Formally, a policy is a function $\pi: \Theta \to \mathcal{C} \times T$ that maps a reported attacker type to an \emph{outcome} $ (\cc, i) \in \mathcal{C} \times T$.
An outcome $(\cc, i)$ is a strategy profile consisting of a defender strategy $\cc$ and a best response $i \in \BR_{\theta}(\cc)$ of the reported attacker type $\theta$.\footnote{We also specify an attacker response in the output of a policy in order to deal with the tie-breaking issue explicitly. Same as in SSEs, the defender can induce specific attacker responses through infinitesimal deviations.}
As an example, the way the defender plays when she ignores attacker manipulation can itself be viewed as a policy that maps every reported type $\theta$ to an SSE on $\theta$; we will refer to this policy as the \emph{SSE policy}.

We assume that a policy can be observed or learned by the attacker through constant interaction with the defender, or the defender can simply announce it to the attacker. In response, a strategic attacker of true type $\theta$ chooses to report an optimal type $\beta^* \in \arg\max_{\beta \in \Theta } u_{\theta}^\xa(\pi(\beta))$ that will maximize his utility in the outcome of the policy. At a higher level, this can be seen as a Stackelberg game in which the defender commits to a policy and the attacker reports optimally in response to this commitment.

To find the optimal policy, we need a good measure of the quality of a policy.
When there is no other prior information about the attacker's type, the worst-case analysis seems to be appropriate and a straightforward choice of quality measure is the utility the defender obtains when playing against the worst attacker type.
However, as Proposition~\ref{prp:maximin-is-best} suggests, this measure disallows us to well distinguish the quality of many policies. Specifically, when playing against the zero-sum attacker type, no feasible policy can achieve anything better than the maximin utility, so if we take the worst-case utility as the measure, the quality of all policies would be hindered by this attacker type, and the SSE policy, which achieves exactly the maximin utility in the worst case, would then be the best policy we can hope for. Essentially, there will be no room for improvement other than letting the attacker lie.

\begin{Proposition}\label{prp:maximin-is-best}
Let $\ol{\cc}$ be a maximin strategy of the defender, and $\ol{u} = \min_{i\in T} u^\xd(\ol{\cc}, i)$ be the maximin utility. For any policy $\pi$, let $\gamma \in \arg\max_{\theta\in\Theta} u_\beta^\xa(\pi(\theta))$ be an optimal report of a type-$\beta$ attacker, $\beta = (-\pp^\xd, -\rr^\xd)$. Then $u^\xd(\pi(\gamma)) \le \ol{u}$.
\end{Proposition}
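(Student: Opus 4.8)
The plan is to exploit the fact that the report $\beta = (-\pp^\xd, -\rr^\xd)$ makes the game exactly zero-sum. First I would verify the identity $u_\beta^\xa(\cc, i) = (1-c_i)(-p_i^\xd) + c_i(-r_i^\xd) = -u^\xd(\cc, i)$ for every coverage $\cc \in \mathcal{C}$ and target $i$, so that the type-$\beta$ attacker's utility is always the negation of the defender's. The target inequality $u^\xd(\pi(\gamma)) \le \ol{u}$ then becomes equivalent to the lower bound $u_\beta^\xa(\pi(\gamma)) \ge -\ol{u}$, and it suffices to argue that the type-$\beta$ attacker can always secure at least $-\ol{u}$ under the policy $\pi$.

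Next I would translate the maximin definition into a minimax statement about $\beta$. By definition of $\ol{u}$ as the maximin utility, every coverage $\cc \in \mathcal{C}$ satisfies $\min_{i\in T} u^\xd(\cc, i) \le \ol{u}$; negating and applying the identity above gives $\max_{i\in T} u_\beta^\xa(\cc, i) = -\min_{i\in T} u^\xd(\cc, i) \ge -\ol{u}$. In words, no matter which coverage the defender ends up playing, the zero-sum attacker's best response to it is worth at least $-\ol{u}$ to him.

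The crux is then a single deviation argument. Since $\gamma$ is an optimal report of the type-$\beta$ attacker, it does at least as well for him as reporting his true type $\beta$, i.e.\ $u_\beta^\xa(\pi(\gamma)) \ge u_\beta^\xa(\pi(\beta))$. Writing $\pi(\beta) = (\cc_\beta, i_\beta)$, the outcome-validity condition forces $i_\beta \in \BR_\beta(\cc_\beta) = \arg\max_{i\in T} u_\beta^\xa(\cc_\beta, i)$ --- here it is essential that the response recorded in an outcome is a best response of the \emph{reported} type, which for the truthful report is $\beta$ itself. Hence $u_\beta^\xa(\pi(\beta)) = \max_{i\in T} u_\beta^\xa(\cc_\beta, i) \ge -\ol{u}$ by the previous paragraph. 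Chaining the inequalities yields $u_\beta^\xa(\pi(\gamma)) \ge -\ol{u}$, and converting back through the zero-sum identity gives $u^\xd(\pi(\gamma)) = -u_\beta^\xa(\pi(\gamma)) \le \ol{u}$, as required.

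I expect the only subtle point --- rather than a genuine obstacle --- to be the use of outcome validity: one must take care that when $\beta$ reports truthfully, the recorded response in $\pi(\beta)$ is guaranteed to be a $\beta$-best response, so that the attacker actually collects $\max_{i\in T} u_\beta^\xa(\cc_\beta, i)$ rather than some smaller value. Everything else is the standard maximin--minimax duality of zero-sum games specialized to the SSG payoff structure, together with the elementary observation that the optimal report can never be worse than truth-telling.
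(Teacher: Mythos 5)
Your proof is correct, and it shares the paper's overall skeleton: both arguments use the zero-sum identity $u_\beta^\xa(\cc,i) = -u^\xd(\cc,i)$ to reduce the claim to showing that truthful reporting guarantees the type-$\beta$ attacker at least $-\ol{u}$, and then conclude from the optimality of $\gamma$ that $u_\beta^\xa(\pi(\gamma)) \ge u_\beta^\xa(\pi(\beta)) \ge -\ol{u}$ (the paper phrases this last step as a contradiction, you phrase it as a direct chain --- immaterial). Where you genuinely diverge is in how the bound on the truthful outcome is obtained. The paper routes through Lemma~\ref{lmm:maximin-sse}: since $\ol{\cc}$ is an SSE strategy on type $\beta$, the feasible outcome $\pi(\beta) = (\zz,j)$ with $j \in \BR_\beta(\zz)$ satisfies $u^\xd(\zz,j) \le u^\xd(\ol{\cc},t) = \ol{u}$, whence $u_\beta^\xa(\pi(\beta)) \ge -\ol{u}$. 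You bypass that lemma entirely with the elementary minimax observation that \emph{every} coverage $\cc \in \mathcal{C}$ satisfies $\max_{i\in T} u_\beta^\xa(\cc,i) = -\min_{i\in T} u^\xd(\cc,i) \ge -\ol{u}$ by the very definition of $\ol{u}$, and then use outcome validity --- the recorded response for the truthful report must lie in $\BR_\beta(\cc_\beta)$ --- to get $u_\beta^\xa(\pi(\beta)) = \max_{i\in T} u_\beta^\xa(\cc_\beta,i) \ge -\ol{u}$. Your route is more self-contained (it needs nothing about SSEs and never mentions $\ol{\cc}$ itself, only the value $\ol{u}$), while the paper's buys economy within the larger document by reusing a lemma it needs anyway for Theorems~\ref{thm:maximin} and~\ref{thm:maximin-fully-mixed}. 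Two small confirmations of details you handled well: the subtlety you flag is exactly the load-bearing one, since for a non-truthful report $\gamma$ the recorded response is only a $\gamma$-best response, so the analogous bound fails there and the pointwise identity $u^\xd(\pi(\gamma)) = -u_\beta^\xa(\pi(\gamma))$ is all one can use; and both proofs implicitly need $\beta \in \Theta$, which holds because $r_i^\xd > p_i^\xd$ gives $-p_i^\xd > -r_i^\xd$. (Incidentally, the paper's proof states the identity as $u_\beta^\xa(\cc,i) = u^\xd(\cc,i)$ with a dropped minus sign; its displayed chain, and your version, use the correct signed identity.)
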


This is unreasonable: even in the truthful situation, it is impossible for the defender to achieve more than the maximin utility when the attacker is of the zero-sum type, so it would be unfair to underrate a policy simply because it underperforms against the zero-sum type.
For this reason, we propose an alternative measure, termed \emph{the efficiency of a policy (EoP)}, which takes into consideration the hardness of playing against each attacker type in the truthful setting. As in Definition~\ref{def:eop}, the EoP is the worst-case ratio between the utility the defender obtains and what she should have obtained had the attacker been truthful. A higher EoP indicates a smaller loss due to attacker manipulation, and the value of the EoP always lies between $0$ and $1$ according to Proposition~\ref{prp:eop-0-1}.
For the EoP to be meaningful, we shift all payoffs to be non-negative. Without loss of generality, we will hereafter also assume $\Theta$ --- previously defined as the set of types the attacker is allowed to report --- to be also the set of possible attacker types, which is common knowledge to both players.

\begin{Definition}[\textbf{EoP}]
\label{def:eop}
For each $\theta \in \Theta$, let $\beta_{\theta}^\pi = \arg\max_{\beta\in \Theta } u_{\theta}^\xa(\pi(\beta))$ be the attacker's optimal reporting strategy in response to a policy $\pi$ (tie-breaking in favor of the defender).
The efficiency of $\pi$ \emph{on attacker type $\theta$} is
$\eop_\theta(\pi) = \frac{ u^\xd (\pi(\beta_{\theta}^\pi)) } { \hat{u}(\theta) }$, where $\hat{u}(\theta) = \max_{\cc \in \mathcal{C}, i \in \BR_\theta(\cc)} u^\xd( \cc, i)$ is the defender's utility in an SSE on type $\theta$.
The (overall) efficiency of $\pi$ is $\EoP(\pi) = \min_{\theta \in \Theta} \eop_\theta(\pi)$.
\end{Definition}

\begin{Proposition}
\label{prp:eop-0-1}
$\EoP(\pi) \in [0,1]$ for any feasible defender policy $\pi$.
\end{Proposition}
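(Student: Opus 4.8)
The plan is to prove the two inequalities $\EoP(\pi)\ge 0$ and $\EoP(\pi)\le 1$ separately. Since $\EoP(\pi)=\min_{\theta\in\Theta}\eop_\theta(\pi)$ with $\eop_\theta(\pi)=u^\xd(\pi(\beta_\theta^\pi))/\hat{u}(\theta)$, the lower bound reduces to showing that every individual ratio $\eop_\theta(\pi)$ is non-negative, while the upper bound reduces to exhibiting a single type $\theta$ with $\eop_\theta(\pi)\le 1$.

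For the lower bound I would use the non-negativity convention directly. Any outcome $(\cc,i)=\pi(\beta)$ gives the defender $u^\xd(\cc,i)=c_i r_i^\xd+(1-c_i)p_i^\xd\ge 0$, because $c_i\in[0,1]$ and, after shifting, $r_i^\xd,p_i^\xd\ge 0$; so every numerator is non-negative. I would then check the denominator is strictly positive so the ratio is well defined: committing to the uniform coverage $c_i=m/n$ (feasible as $n>m$), any attacker best response $i \in \BR_\theta(\cc)$ still yields $u^\xd(\cc,i)\ge (m/n)\, r_i^\xd>0$, using $r_i^\xd>p_i^\xd\ge 0$; hence $\hat{u}(\theta)>0$ for every $\theta$. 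Therefore each $\eop_\theta(\pi)\ge 0$ and their minimum is non-negative.

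The core is the upper bound. I would single out the \emph{constant-sum} attacker type $\beta^*=(\xx,\yy)$ given by $x_i=K-p_i^\xd$ and $y_i=K-r_i^\xd$ with $K=\max_i r_i^\xd$; this is the post-shift counterpart of the zero-sum type $(-\pp^\xd,-\rr^\xd)$ from Proposition~\ref{prp:maximin-is-best}, and it lies in $\Theta$ since its payoffs are non-negative and $x_i>y_i\Leftrightarrow r_i^\xd>p_i^\xd$. For this type $u_{\beta^*}^\xa(\cc,i)=K-u^\xd(\cc,i)$, so $\BR_{\beta^*}(\cc)=\arg\min_i u^\xd(\cc,i)$. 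Two facts follow. First, $\hat{u}(\beta^*)=\max_{\cc}\min_i u^\xd(\cc,i)=\ol{u}$, the maximin utility (the SSE-equals-maximin identity, Lemma~\ref{lmm:maximin-sse}). Second, Proposition~\ref{prp:maximin-is-best} gives $u^\xd(\pi(\beta_{\beta^*}^\pi))\le\ol{u}$. Dividing, $\eop_{\beta^*}(\pi)\le\ol{u}/\ol{u}=1$, whence $\EoP(\pi)=\min_\theta\eop_\theta(\pi)\le 1$.

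The step I expect to be the main obstacle is the second fact, namely that a constant-sum attacker, reporting optimally against $\pi$, cannot push the defender above the maximin utility. This is exactly Proposition~\ref{prp:maximin-is-best}, so if I invoke it the bound is immediate; if I instead wanted a self-contained argument I would observe that the $\beta^*$-attacker's optimal report minimizes the defender's utility over all policy outputs, and that already the honest report $\pi(\beta^*)=(\cc',i')$ satisfies $u^\xd(\cc',i')=\min_i u^\xd(\cc',i)\le\ol{u}$ by definition of maximin. The accompanying subtlety to get right is the interaction with the payoff-shift convention: Proposition~\ref{prp:maximin-is-best} and the identity $u^\xa=-u^\xd$ are phrased for the literal zero-sum type, which has negative payoffs and need not lie in the shifted $\Theta$; using the constant-sum type $\beta^*$ above keeps all payoffs non-negative while preserving both facts, since only the constant-sum structure (not the value of the constant) is used.
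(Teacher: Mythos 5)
Your lower bound is fine (and in fact more careful than the paper's one-line remark, since you also verify $\hat{u}(\theta) > 0$ via the uniform coverage), but your upper bound has a genuine gap: it hinges on the constant-sum type $\beta^*$ being an element of $\Theta$, which is not available in the setting where this proposition lives. Just before Definition~\ref{def:eop} the paper redefines $\Theta$ to be the common-knowledge set of \emph{possible attacker types}, and Proposition~\ref{prp:eop-0-1} must hold for an arbitrary such $\Theta$ --- in particular for the arbitrary finite sets $\Theta = \{\theta_1,\dots,\theta_\lambda\}$ of Section~\ref{sec:opt-finite-types}, where Algorithm~\ref{alg:optimal-policy-finite}'s binary search over $[0,1]$ relies on exactly this proposition, and where the experiments explicitly distinguish whether or not the zero-sum type is added to $\Theta$. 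If $\beta^* \notin \Theta$, then $\eop_{\beta^*}(\pi)$ does not appear in the minimum defining $\EoP(\pi)$, the policy $\pi$ is not even defined at $\beta^*$, and your appeal to (the shifted analogue of) Proposition~\ref{prp:maximin-is-best} --- whose proof compares the manipulation against the \emph{truthful} report $\pi(\beta^*)$ --- is unavailable. So your argument establishes $\EoP(\pi) \le 1$ only for type spaces containing the constant-sum type, which is strictly weaker than the claim.

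The paper's proof avoids the special type entirely and is both simpler and more general: take $\beta \in \arg\max_{\theta\in\Theta} \hat{u}(\theta)$ \emph{within the given $\Theta$}, and let $\gamma$ be type $\beta$'s optimal report. Feasibility of the policy forces $\pi(\gamma) = (\zz, t)$ with $t \in \BR_\gamma(\zz)$, hence $u^\xd(\pi(\gamma)) \le \max_{\cc\in\mathcal{C},\, i \in \BR_\gamma(\cc)} u^\xd(\cc,i) = \hat{u}(\gamma) \le \hat{u}(\beta)$, and therefore $\EoP(\pi) \le \eop_\beta(\pi) \le 1$ --- no maximin machinery, no shift bookkeeping. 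To repair your proof you would either have to assume $\beta^* \in \Theta$ (a real restriction) or replace the constant-sum type by the $\hat{u}$-maximizing type as the paper does, at which point the maximin detour becomes unnecessary. (When $\beta^* \in \Theta$ your route is correct, and it even identifies a concrete hindering type with $\hat{u}(\beta^*) = \ol{u}$ and $\eop_{\beta^*}(\pi) \le \ol{u}/\ol{u} = 1$; but that extra information is bought at the cost of generality the proposition cannot spare.)
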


Another challenge we face is the representation of a policy, which is a function to be optimized. We follow a modeling approach in the literature and consider a discrete version of the problem where the set $\Theta$ of attacker types is finite.
This approach has been widely adopted to model Bayesian games (e.g.,~\cite{conitzer2006computing,paruchuri2008playing,jain2008bayesian,pita2009effective}).
A finite type set can be seen as an approximation to the continuous type space, while in some scenarios attacker types might also be discrete by nature. For example, in defense against poaching, payoffs of the poachers may depend on the type of animal products they are interested in, which falls in a finite set.
In addition to this approach, we also propose a heuristic policy that applies to an infinite or even an unknown type set. We present these approaches next.

\section{Computing the Optimal Policy}

\subsection{Optimal Policy for Finite Attacker Types}
\label{sec:opt-finite-types}

When $\Theta$ is a finite set, a defender policy can be represented as a list of $\lambda = |\Theta|$ outcomes; we will therefore also write a policy as $\pi = (\cc^\theta, i^\theta)_{\theta \in \Theta}$, meaning that $\pi(\theta) = (\cc^\theta, i^\theta)$ for each $\theta\in\Theta$.
Our analysis reveals that to compute the EoP maximizing policy is NP-hard in general Stackelberg games (see Section~\ref{sec:complexity-optimal-leader-policy} in the appendix), but thanks to the special utility structure of SSGs, the problem admits a polynomial-time algorithm when the underlying game is an SSG.

We consider the decision version of the optimization problem: for a given value $\xi$, decide whether any defender policy $\pi$ achieves $\EoP(\pi) \ge \xi$. Trivially, once we have an efficient algorithm for this decision problem, the best EoP can be found efficiently using binary search (in particular, we already know that the value always lies in $[0,1]$). Our algorithm for this decision problem, presented as Algorithm~\ref{alg:optimal-policy-finite}, is constructive and produces a satisfying policy when there exists one.
In the remainder of this section, we will let $\Theta = \{\theta_1, \dots, \theta_\lambda\}$ such that $\theta_1, \dots, \theta_\lambda$ are ordered by the utility they offer the defender in an SSE, i.e., $\hat{u}(\theta_1) \ge \hat{u}(\theta_2) \dots \ge \hat{u}(\theta_\lambda)$;
the order can be obtained efficiently given that an SSE can be computed in polynomial time.
We call a policy \emph{$\ell$-compatible} if truthful report is incentivized for every attacker type $\theta_j$, $j \le \ell$ (Definition~\ref{def:ell-compatible}).

The correctness of Algorithm~\ref{alg:optimal-policy-finite} is shown via Theorem~\ref{thm:alg-finite-types-correct}.
Briefly speaking, Algorithm~\ref{alg:optimal-policy-finite} can be viewed as a process of repeatedly replacing the $\ell$-th outcome of a satisfying policy (suppose we are given one) with the outcome generated in the $\ell$-th iteration of Step 3. The observation in Lemma~\ref{lmm:theta-ell-optimal-report} ensures that the new policy obtained after every replacement will still be a satisfying one. Hence, eventually, we will obtain a satisfying policy that consists of outcomes generated all through the algorithm, and this means that we do not actually need to be provided a satisfying policy to begin with.
Interestingly, the policy generated by Algorithm~\ref{alg:optimal-policy-finite} is also \emph{incentive compatible (IC)} ({$\lambda$-compatible} as in Definition~\ref{def:ell-compatible}); it always incentivizes the attacker to report their true type.

\begin{algorithm}[t]
\caption{Decide if there exists a policy $\pi$ such that $\EoP(\pi) \ge \xi$.}
\label{alg:optimal-policy-finite}

\centering
\begin{enumerate}[leftmargin=1em]
\item
For each $\theta\in\Theta$, compute an SSE $(\hat{\cc}^\theta, \hat{i}^\theta)$ on type $\theta$. Let $\hat{u}(\theta) = u^\xd(\hat{\cc}^\theta, \hat{i}^\theta)$.

\item
Sort attacker types in $\Theta$ by $\hat{u}(\theta)$, so that $\hat{u}(\theta_1) \ge \hat{u}(\theta_2) \dots \ge \hat{u}(\theta_\lambda)$, $\lambda = |\Theta|$.

\item
For each $\ell = 1, \dots, \lambda$, let $\pi(\theta_\ell) = (\zz, t)$, where
$z_i = \min\{ \hat{c}_i^{\theta_\ell}, \, h_i \}$, $t = \BR_{\theta_\ell} (\hh)$,
and $h_i = \max\left\{\ 0, \ \frac{ \xi \cdot \hat{u}(\theta_\ell) - p_i^\xd } { r_i^\xd - p_i^\xd}, \  \max_{\theta \in\{\theta_1,\dots,\theta_{\ell-1}\}} \frac{ u_{\theta}^\xa( \pi(\theta) ) - r_i^{\theta} } { p_i^{\theta} - r_i^{\theta}} \ \right\}$.

\item
If $\EoP(\pi) \ge \xi$, return $\pi$ as a satisfying policy; Otherwise, claim that no such policy exists.
\end{enumerate}
\end{algorithm}

\begin{Definition}
\label{def:ell-compatible}
A policy $\pi$ is \emph{$\ell$-compatible} ($0 \le \ell \le \lambda$), if in response to $\pi$, it is optimal for every attacker type $\theta\in\{\theta_1, \dots, \theta_\ell\}$ to report truthfully, i.e.,
$u_{\theta}^\xa(\pi(\theta)) \ge u_{\theta}^\xa(\pi(\beta))$ for all $\beta \in \Theta$.
\end{Definition}

\begin{Lemma}
\label{lmm:theta-ell-optimal-report}
Let $\pi$ be the policy generated in Step~3 of Algorithm~\ref{alg:optimal-policy-finite}.
Suppose that there exists an $(\ell-1)$-compatible policy $\pi^*$, $\EoP(\pi^*) \ge \xi$.
Then the policy $\tilde{\pi}$, such that
$\tilde{\pi}(\theta) =
\begin{cases}
  \pi^*(\theta), & \mbox{if } \theta \in \Theta \setminus\{\theta_\ell\} \\
  \pi(\theta), & \mbox{if } \theta = \theta_\ell
\end{cases}$, is feasible and $\ell$-compatible, and $\EoP(\tilde{\pi}) \ge \xi$.
\end{Lemma}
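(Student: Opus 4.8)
The plan is to verify the three assertions about $\tilde\pi$ separately — feasibility, $\ell$-compatibility, and $\EoP(\tilde\pi)\ge\xi$ — treating the construction as a single-coordinate exchange in which only the $\theta_\ell$-entry of $\pi^*$ is overwritten by the algorithm's outcome $(\zz,t)$, so that everything hinges on controlling this one new outcome. Feasibility of $\zz$ is the easy half: the ``$0$'' branch of $h_i$ gives $z_i\ge 0$, while $z_i\le\hat c_i^{\theta_\ell}$ together with $\hat{\cc}^{\theta_\ell}\in\mathcal C$ yields $z_i\le 1$ and $\sum_i z_i\le\sum_i\hat c_i^{\theta_\ell}\le m$, so $\zz\in\mathcal C$. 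What remains for feasibility is to certify that $t$ is genuinely a best response to $\zz$, i.e. $t\in\BR_{\theta_\ell}(\zz)$, which is the first nontrivial point.

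For the best-response claim I would use that $u^\xa_{\theta_\ell}(\cdot,i)$ is strictly decreasing in coverage, so that the componentwise minimum $z_i=\min\{\hat c_i^{\theta_\ell},h_i\}$ gives the pointwise identity $u^\xa_{\theta_\ell}(z_i,i)=\max\{u^\xa_{\theta_\ell}(\hat c_i^{\theta_\ell},i),\,u^\xa_{\theta_\ell}(h_i,i)\}$. Maximising over $i$ then reduces $\BR_{\theta_\ell}(\zz)$ to a contest between the SSE best response $\hat i^{\theta_\ell}$ (through the SSE-coverage branch) and the target $t=\BR_{\theta_\ell}(\hh)$ (through the $\hh$ branch). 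The crux sub-step is to show the $\hh$-side wins, i.e. $u^\xa_{\theta_\ell}(h_t,t)\ge\max_i u^\xa_{\theta_\ell}(\hat c_i^{\theta_\ell},i)$; since $u^\xa_{\theta_\ell}$ is injective in coverage this simultaneously forces $z_t=h_t$, the identity that drives everything downstream. Here I would lean on $\xi\le 1$, which keeps the second term of $h_i$ below the SSE coverage at the SSE target, while checking that the deterrence (third) term does not overturn this there. With $z_t=h_t$ in hand, the efficiency on $\theta_\ell$ is immediate: the second term of $h_t$ guarantees $u^\xd(z_t,t)=u^\xd(h_t,t)\ge\xi\,\hat u(\theta_\ell)$, so $\eop_{\theta_\ell}(\tilde\pi)\ge\xi$ provided $\theta_\ell$ reports truthfully.

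To establish $\ell$-compatibility (Definition~\ref{def:ell-compatible}) I would isolate exactly the deviations the exchange newly creates. For an earlier type $\theta_j$ ($j<\ell$) only the temptation to report $\theta_\ell$ is new, and the third term of $h_t$ is precisely the coverage at $t$ that makes $u^\xa_{\theta_j}(h_t,t)\le u^\xa_{\theta_j}(\pi(\theta_j))$; combined with $z_t=h_t$ and the fact that, along the inductive chain in which this lemma is invoked, $\pi^*$ coincides with $\pi$ on $\theta_1,\dots,\theta_{\ell-1}$, this rules out the deviation. The genuinely delicate direction is that $\theta_\ell$ should not want to deviate at all. The lever I would use is that $\EoP(\pi^*)\ge\xi$ forces every outcome $\pi^*$ can present to $\theta_\ell$ to cover its attacked target at least to the second-term level, capping its attacker value; I would then combine this cap with the cap $z_i\le\hat c_i^{\theta_\ell}$ (which secures $\theta_\ell$ at least its SSE attacker value through the SSE-coverage branch) to argue that the low-coverage truthful outcome $(\zz,t)$ dominates any report $\beta\ne\theta_\ell$ for $\theta_\ell$.

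The remaining efficiency bound splits the types cleanly only in part. Types $\theta_1,\dots,\theta_{\ell-1}$ inherit their efficiency from $\pi^*$ unchanged, and for any type $\theta$ whose optimal report under $\tilde\pi$ is $\theta_\ell$ the sorted order does the work, since $u^\xd(z_t,t)\ge\xi\,\hat u(\theta_\ell)\ge\xi\,\hat u(\theta)$ whenever $\hat u(\theta)\le\hat u(\theta_\ell)$. I expect the main obstacle to be the coupling hidden in the last case: overwriting the $\theta_\ell$-entry changes the attacker's incentives globally, so a later type $\theta_k$ ($k>\ell$) that previously found $\theta_\ell$ its best report may now switch to a different $\beta\ne\theta_\ell$ whose $\pi^*$-outcome carries no efficiency guarantee on its own. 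To close this I would show that $\theta_k$'s old optimum under $\pi^*$ remains available and still optimal under $\tilde\pi$ whenever it was not $\theta_\ell$, so that the defender-favoring tie-break cannot do worse than $\pi^*$; the residual subcase — where $\theta_k$'s $\pi^*$-optimum was exactly $\theta_\ell$ and is abandoned under $\tilde\pi$ — is where the argument is tightest, and I would attempt to dispatch it by again invoking the second-term coverage floor forced by $\EoP(\pi^*)\ge\xi$ together with $z_t=h_t$ to show the fallback report still meets the defender's $\xi$-fraction requirement.
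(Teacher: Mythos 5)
Your high-level decomposition (feasibility, $\ell$-compatibility, EoP) matches the paper's, and several pieces are sound: $\zz\in\mathcal{C}$, the deterrence (third) term of $h_t$ blocking deviations of earlier types, and the sorted-order argument $u^\xd(\zz,t)\ge\xi\,\hat u(\theta_\ell)\ge\xi\,\hat u(\theta)$ for any type whose optimal report becomes $\theta_\ell$. But the crux step fails as you set it up. You propose to prove $u_{\theta_\ell}^\xa(h_t,t)\ge\max_i u_{\theta_\ell}^\xa(\hat c_i^{\theta_\ell},i)$ (hence $z_t=h_t$) by a target-wise comparison at the SSE target: $\xi\le 1$ keeps the second term of $h_i$ below $\hat c_{\hat i^{\theta_\ell}}^{\theta_\ell}$, ``while checking that the deterrence term does not overturn this there.'' That check cannot succeed: the third term of $h_i$ at the SSE target is the coverage needed to make that target no more attractive to each earlier type $\theta_j$ ($j<\ell$) than $u_{\theta_j}^\xa(\pi(\theta_j))$, and an earlier type strongly drawn to that very target can force this coverage above $\hat c_{\hat i^{\theta_\ell}}^{\theta_\ell}$, so $h_i\le\hat c_i^{\theta_\ell}$ simply need not hold at the SSE target. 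The paper never compares at a fixed target; it routes through the report $\beta$ that a type-$\theta_\ell$ attacker optimally makes against $\pi^*$ and chains three facts: Lemma~\ref{lmm:report-truthfully} (proved by a separate perturbation argument) that under \emph{any} feasible policy truthful reporting guarantees the attacker his SSE utility; optimality of $\beta$; and the contrapositive claim that $u_{\theta_\ell}^\xa(\hh,i^\beta)\ge u_{\theta_\ell}^\xa(\pi^*(\beta))$, because coverage of $\pi^*(\beta)$ below $h_{i^\beta}$ at $i^\beta$ would contradict either $\EoP(\pi^*)\ge\xi$ (efficiency term) or $(\ell-1)$-compatibility (deterrence term). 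Your proposal has no analogue of Lemma~\ref{lmm:report-truthfully}, and your ``cap'' for $\theta_\ell$'s deviations misquantifies what $\EoP(\pi^*)\ge\xi$ buys: it constrains only the outcome at each type's \emph{optimal} report (not ``every outcome $\pi^*$ can present''), and even there excluding low coverage needs the deterrence branch via compatibility, not the efficiency term alone. Since this same chain also underlies $t\in\BR_{\theta_\ell}(\zz)$, $\theta_\ell$'s truthfulness under $\tilde\pi$, and $\eop_{\theta_\ell}(\tilde\pi)\ge\xi$, the omission propagates through most of your argument.

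On the last case, your instinct is sharp — a later type $\theta_k$ whose optimum under $\pi^*$ was $\theta_\ell$ could in principle abandon it when the entry is overwritten — but your proposed closure is unsound for the same quantifier reason: a fallback report that was strictly suboptimal for $\theta_k$ under $\pi^*$ carries no ``second-term coverage floor'' whatsoever from $\EoP(\pi^*)\ge\xi$ (EoP constrains only optimal-report outcomes, and only relative to the reporting type's own $\hat u$), so $z_t=h_t$ gives you nothing about that outcome. For comparison, the paper resolves this case in one line, asserting that since $\tilde\pi$ and $\pi^*$ differ only in the $\theta_\ell$ entry, a changed optimal report ``will only change to $\theta_\ell$'' — it does not engage the abandonment scenario you describe (note that when the new and old optimal values tie, the defender-favoring tie-break in Definition~\ref{def:eop} does let one select $\theta_\ell$ and apply the sorted-order bound; the strict-abandonment subcase is exactly what your patch would need to, but does not, handle). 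In sum: you correctly identified the delicate points, but the mechanism that actually closes them in the paper — Lemma~\ref{lmm:report-truthfully} plus the two-branch contradiction through the optimal report $\beta$ — is absent from your plan, and the substitutes you offer do not go through.
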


\begin{Theorem}
\label{thm:alg-finite-types-correct}
In time polynomial in $m$, $n$, and $|\Theta|$, Algorithm~\ref{alg:optimal-policy-finite} either outputs a policy $\pi$ with $\EoP(\pi) \ge \xi$, or decides correctly that no such policy exists. The policy generated is IC.
\end{Theorem}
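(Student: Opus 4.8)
The plan is to treat the three assertions of the theorem separately---the polynomial running time, the correctness of the yes/no decision, and the incentive compatibility of the returned policy---and to let Lemma~\ref{lmm:theta-ell-optimal-report} carry the combinatorial weight while I supply only the surrounding induction and bookkeeping.

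For the running time I would simply walk through the four steps. Step~1 performs $\lambda = |\Theta|$ SSE computations, each polynomial by the multiple-LP approach cited in Section~\ref{sec:model}; Step~2 is a sort; Step~3 runs $\lambda$ iterations, and in iteration $\ell$ it computes $n$ values $h_i$ (each a maximum over at most $\ell-1 < \lambda$ already-fixed earlier outcomes together with two closed-form quantities), forms $\zz$ coordinatewise, and sets $t = \BR_{\theta_\ell}(\hh)$ by a maximization over $n$ targets; Step~4 evaluates $\EoP(\pi)$, which for each of the $\lambda$ types requires finding the best report among $\lambda$ candidates and forming one ratio. All of this is polynomial in $m$, $n$, and $|\Theta|$, so the bound follows.

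The heart is correctness, which I would obtain by induction on $\ell$ using the hypothesis: \emph{if some policy with $\EoP \ge \xi$ exists, then for every $\ell \in \{0, 1, \dots, \lambda\}$ there is an $\ell$-compatible policy $\pi^{(\ell)}$ with $\EoP(\pi^{(\ell)}) \ge \xi$ that agrees with the algorithm's generated policy $\pi$ on $\theta_1, \dots, \theta_\ell$.} The base case $\ell = 0$ is immediate, since every policy is vacuously $0$-compatible and any satisfying policy serves as $\pi^{(0)}$. For the inductive step I apply Lemma~\ref{lmm:theta-ell-optimal-report} with $\pi^*$ taken to be $\pi^{(\ell-1)}$: the lemma returns a feasible, $\ell$-compatible policy $\tilde\pi =: \pi^{(\ell)}$ with $\EoP(\pi^{(\ell)}) \ge \xi$ that equals $\pi^{(\ell-1)}$ off $\theta_\ell$ and equals the algorithm's $\pi(\theta_\ell)$ at $\theta_\ell$. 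Since $\pi^{(\ell-1)}$ already matches $\pi$ on $\theta_1, \dots, \theta_{\ell-1}$, the new policy $\pi^{(\ell)}$ matches $\pi$ on the whole prefix $\theta_1, \dots, \theta_\ell$, closing the induction. Taking $\ell = \lambda$, the policy $\pi^{(\lambda)}$ agrees with $\pi$ on \emph{all} types, hence $\pi^{(\lambda)} = \pi$, so $\EoP(\pi) = \EoP(\pi^{(\lambda)}) \ge \xi$. Thus whenever a satisfying policy exists, the algorithm's own constructed $\pi$ is satisfying and Step~4 returns it; conversely, if Step~4 finds $\EoP(\pi) < \xi$, no satisfying policy can exist (otherwise the argument just given would force $\EoP(\pi) \ge \xi$), so the negative answer is correct.

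For incentive compatibility I observe that in the returning case $\pi = \pi^{(\lambda)}$ is $\lambda$-compatible, and by Definition~\ref{def:ell-compatible} with $\ell = \lambda$ this is precisely the statement that every attacker type is incentivized to report truthfully, i.e.\ the IC property. The one point that requires care---and which I regard as the main obstacle in assembling the argument---is the consistency of the bookkeeping: the quantities $h_i$ the algorithm uses to build $\pi(\theta_\ell)$ depend on the earlier outcomes $\pi(\theta_1), \dots, \pi(\theta_{\ell-1})$, and I must ensure these coincide with the earlier outcomes implicitly referenced inside Lemma~\ref{lmm:theta-ell-optimal-report}. This is exactly why the induction hypothesis tracks agreement with $\pi$ on the growing prefix rather than merely asserting existence of some $\ell$-compatible policy: it guarantees that when the lemma is invoked at stage $\ell$, the policy $\pi^{(\ell-1)}$ feeding into it coincides with the algorithm on $\theta_1, \dots, \theta_{\ell-1}$, so the replacement the lemma performs installs precisely the outcome the algorithm computes. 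All remaining verifications reduce to the content of the lemma, which I take as given.
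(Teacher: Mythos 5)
Your proposal is correct and follows essentially the same route as the paper's proof: the paper likewise starts from a hypothetical satisfying policy $\tilde{\pi}^0 = \pi^*$ and applies Lemma~\ref{lmm:theta-ell-optimal-report} iteratively to replace the outcome at $\theta_\ell$ with the algorithm's $\pi(\theta_\ell)$, concluding that $\tilde{\pi}^\lambda = \pi$ is $\lambda$-compatible (hence IC) with $\EoP(\pi) \ge \xi$, and deciding the negative case by the same contrapositive. Your explicit induction hypothesis tracking agreement with $\pi$ on the prefix $\theta_1, \dots, \theta_\ell$ is a slightly more careful formalization of what the paper's construction guarantees implicitly (each $\tilde{\pi}^\ell$ agrees with $\pi$ on the prefix by construction), and it correctly pins down why the $h_i$ values computed from the algorithm's earlier outcomes are the ones the lemma needs.
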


\begin{proof}
The polynomial runtime is readily seen.
By Lemma~\ref{lmm:theta-ell-optimal-report}, $\pi(\theta_\ell)$ generated in Step~3 must be a feasible outcome, so $\pi$ is a feasible policy. When no feasible policy can achieve EoP $\xi$, we have $\EoP(\pi) < \xi$ and Algorithm~\ref{alg:optimal-policy-finite} will decide correctly in Step~4 that no satisfying policy exists.

Suppose that there exists a policy $\pi^*$, $\EoP(\pi^*) \ge \xi$.
Let $\tilde{\pi}^0 = \pi^*$, and for each $\ell = 1,\dots,\lambda$, we iteratively construct a policy $\tilde{\pi}^\ell$ by replacing $\tilde{\pi}^{\ell-1}(\theta_\ell)$ in policy $\tilde{\pi}^{\ell-1}$ with $\pi(\theta_\ell)$ generated in Step~3; thus, $\tilde{\pi}^\lambda = \pi$.
Trivially, $\tilde{\pi}^0$ is $0$-compatible as is any feasible policy. Applying Lemma~\ref{lmm:theta-ell-optimal-report} iteratively, we can also conclude that $\tilde{\pi}^\ell$ is $\ell$-compatible and $\EoP(\tilde{\pi}^\ell) \ge \xi$ for every $\ell$; in particular, $\tilde{\pi}^\lambda$ is $\lambda$-compatible and $\EoP(\pi^\lambda) \ge \xi$. Algorithm~\ref{alg:optimal-policy-finite} outputs $\pi = \tilde{\pi}^\lambda$ as a satisfying policy.
\end{proof}

\subsection{Beyond Finite Attacker Types}
\label{sec:opt-infinite-types}

The above approach only applies to a finite type set, we present a heuristic approach to deal with a continuous or even unknown type set. The approach is inspired by the \emph{quantal response (QR)} model that is developed to study bounded rationality of human players~\cite{mckelvey1995quantal}. In a QR equilibrium, players are assumed to play not only their optimal pure strategy but also every other strategy with a probability positively related to the utility the player gets from playing that strategy.

The QR policy imitates the irrational behavior in a QR equilibrium. Recall that in an SSE, a rational defender commits to an optimal strategy $\cc$ and induces a type-$\theta$ attacker to choose a response $i^* \in \BR_\theta(\cc)$ that maximizes the defender's utility. The QR policy, however, induces the attacker's tie-breaking choice in an ``irrational'' way. It induces the attacker to choose not only $i^*$, but also \emph{every} target in $\BR_\theta(\cc)$ with some probability; the probability a target being chosen is positively related to the defender's utility when this target is attacked.
The idea is to add some uncertainty in the outcome, so that the attacker cannot benefit from being induced to choose a particular response with certainty, which is crucial for the success of his manipulation. This also encourages truthful report to some extent: a truthful attacker, who reports his true type $\theta$, is indifferent of which response he is induced to choose in $\BR_\theta(\cc)$ and hence immune to such uncertainty.
The QR policy is as follows.

\begin{Definition}[\textbf{QR policy}]
For each type $\theta$, let $\hat{\cc}^\theta$ be the defender strategy in an SSE on attacker type $\theta$.
A QR policy $\piqr$ maps a report $\theta$ to a distribution $\sigma$ over outcomes in $\{(\hat{\cc}^\theta, i): i \in \BR_\theta(\hat{\cc}^\theta)\}$; the probability $\sigma(i)$ of each outcome $(\hat{\cc}^\theta, i)$ is
$\sigma(i) = \frac{ f_i } {\sum_{j \in \BR_\theta(\hat{\cc}^\theta)} f_j} $, where $f_j = e^{ \varphi \cdot {u^\xd(\hat{\cc}^\theta, j)}}$ for each $j\in T$,
with $\varphi > 0$ being a parameter that represents a player's rationality level in the QR model.\footnote{When $\varphi \to 0$, a player behaves completely irrationally, playing each strategy uniformly at random; when $\varphi \to + \infty$, a player becomes perfectly rational, choosing the optimal strategy with certainty.}
\end{Definition}

A defender who uses $\piqr$ then samples an outcome from $\piqr(\theta)$ to implement when $\theta$ is reported.
The players are now concerned with their expected utility over the outcome distribution, e.g., for the defender:
$u^\xd(\piqr(\theta)) = \sum_{i \in \BR_\theta(\hat{\cc}^\theta)} \sigma(i) \cdot u^\xd(\hat{\cc}^\theta, i)$.
The EoP can also be redefined accordingly.
Since $\hat{\cc}^\theta$ and $\sigma$ are independent of other types in $\Theta$, the QR policy can be implemented on-the-fly for the type reported, and is thus able to handle infinite or unknown type sets.

Intuitively, the QR policy strikes a balance between two unaligned aspects of playing against attacker manipulation: 1) it adds some uncertainty to discourage attacker manipulation; meanwhile, 2) the softmax function that defines $\sigma$ loosely strings the induced attacker response to the optimal one for the defender, so that the cost of achieving 1) is kept away from being too high. We empirically evaluate the performance of the QR policy in randomly generated games in the next section.

\section{Empirical Evaluation}

\newlength\figurewidth
\setlength\figurewidth{.29\linewidth}
\newlength\figureheight
\setlength\figureheight{.27\linewidth}

\definecolor{myRed}{RGB}{230,20,0}
\definecolor{myGreen}{RGB}{20,190,0}
\definecolor{myBlue}{RGB}{0,20,210}

\newenvironment{rtplot}[1]
{
	\begin{tikzpicture}
	\begin{axis}[
	width=\figurewidth,
	height=\figureheight,
	xlabel={\empty},
	ylabel={\empty},
	xmin=50, xmax=500,
	ymin=0.4, ymax=1.0,
    xtick={ 100, 200, 300, 400, 500},
	minor x tick num=1,
    minor y tick num=1,
    minor tick length=0mm,
    xticklabels={$100$,$200$,$300$,$400$,$500$},
	ytick={0.4, 0.5,0.6,0.7,0.8,0.9,1.0},
    yticklabels={$.4$,,$.6$,,$.8$,,$1$},
    tick label style = {font={\small}},
	major tick length=0.6mm,
	every tick/.style={
		black,
	},
	legend pos=north east,
	xmajorgrids=true,
    ymajorgrids=true,
	grid style=dotted,
	style={font=\small},
	legend style={font=\scriptsize, /tikz/every even column/.append style={column sep=3mm}, legend columns=7, inner sep=0.5mm, anchor=north east, },
	legend cell align=left,
	x label style={font={\small},at={(axis description cs:0.5,0.035)}},
	y label style={font={\small},at={(axis description cs:0.15,0.48)}},
    x tick label style = {font=\scriptsize},
    y tick label style = {font=\scriptsize},
	cycle list={%
		{myBlue, dashed, semithick},    
		{myBlue, densely dotted, thick}, 
		{myBlue, semithick}, 
		{myRed, semithick, mark=o, mark options={solid}, mark size=1.0pt}, 
		{myGreen, semithick, mark=x, mark options={solid}, mark size=1.7pt}, 
	},
	title style={font={\small}, at={(axis description cs:0.5,0.97)}},
	#1,
	]
}{
\end{axis}
\end{tikzpicture}
}

\newenvironment{rtplotxalpha}[1]
{
\begin{rtplot}
{			
    xmin=0, xmax=1,
	xtick={0, 0.2, 0.4, 0.6, 0.8, 1.0},
	xticklabels={$0$, $.2$, $.4$, $.6$, $.8$, $1$},
    #1,
}
}
{
\end{rtplot}
}

\begin{NoHyper}

\begin{figure*}[t!]
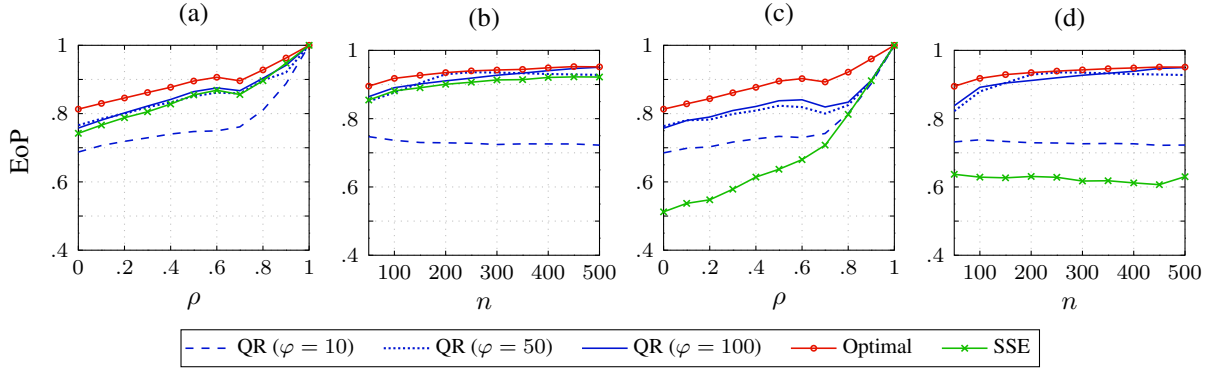

\newlength\seplength
\setlength\seplength{-8mm}

\center
\hspace{-4mm}
\begin{rtplotxalpha}
{
    title={(a)},
    xlabel={$\rho$},
	ylabel={EoP},
	legend entries={ {QR ($\varphi=10$)}, {QR ($\varphi=50$)}, {QR ($\varphi=100$)}, {Optimal}, {SSE}},%
	legend to name=leg:all,	
}
	\addplot coordinates{	(0,0.687378781887923)	(0.1,0.706488818621362)	(0.2,0.718761502935184)	(0.3,0.729117362739888)	(0.4,0.740219888942436)	(0.5,0.747490279353388)	(0.6,0.749440921679936)	(0.7,0.76116504187501)	(0.8,0.810178187790508)	(0.9,0.887752251334961)	(1,1)	};	
	\addplot coordinates{	(0,0.765951367996269)	(0.1,0.784975035267434)	(0.2,0.799560449657951)	(0.3,0.819045939354441)	(0.4,0.832765347607253)	(0.5,0.850858581444037)	(0.6,0.861419295785018)	(0.7,0.859070215835397)	(0.8,0.897451629951157)	(0.9,0.920799735937399)	(1,1)	};	
	\addplot coordinates{	(0,0.757649528911991)	(0.1,0.780944785682825)	(0.2,0.801993500462567)	(0.3,0.822114528624615)	(0.4,0.841395539545744)	(0.5,0.864305898248015)	(0.6,0.875191813540233)	(0.7,0.867119373037623)	(0.8,0.903497045642352)	(0.9,0.941371092571287)	(1,1)	};	
	\addplot coordinates{	(0,0.81304950316747)	(0.1,0.829848333994547)	(0.2,0.845712387164434)	(0.3,0.861740760207176)	(0.4,0.877096464633942)	(0.5,0.895386632084846)	(0.6,0.906455095211665)	(0.7,0.896044267813364)	(0.8,0.927881801923116)	(0.9,0.962797886133194)	(1,0.999999940395355)	};	
	\addplot coordinates{	(0,0.742282190708239)	(0.1,0.766370179024589)	(0.2,0.788010643784038)	(0.3,0.805327068312445)	(0.4,0.828002656755261)	(0.5,0.85459196975989)	(0.6,0.868408073699799)	(0.7,0.855851648693613)	(0.8,0.8963510184396)	(0.9,0.947716659098634)	(1,1)	};	
	\end{rtplotxalpha}
    \hspace{\seplength}
	\begin{rtplot}{title={(b)},
xlabel={$n$},
}
	\addplot coordinates{	(50,0.747298248400765)	(100,0.73656288076387)	(150,0.730260810139295)	(200,0.729223205912855)	(250,0.727969909263494)	(300,0.724312891223111)	(350,0.726208613971219)	(400,0.726019252563704)	(450,0.72589432241031)	(500,0.722463823972416)		};	
	\addplot coordinates{	(50,0.850621441547878)	(100,0.876508537427778)	(150,0.904350110169679)	(200,0.930111450794455)	(250,0.936010490951409)	(300,0.934850681312129)	(350,0.933086458805098)	(400,0.930891125759197)	(450,0.929374174801427)	(500,0.928212817403222)		};	
	\addplot coordinates{	(50,0.864483598098201)	(100,0.890703281277673)	(150,0.901681770401618)	(200,0.910735338252883)	(250,0.918670751586879)	(300,0.927146408344923)	(350,0.933397836029766)	(400,0.940560006946029)	(450,0.946801562342032)	(500,0.950589846811421)		};	
	\addplot coordinates{	(50,0.89525712949889)	(100,0.918051744699478)	(150,0.927085968255997)	(200,0.934903141260147)	(250,0.939955327510834)	(300,0.942471233606338)	(350,0.944275958538055)	(400,0.948939366340637)	(450,0.952424833774567)	(500,0.951631376743317)		};	
	\addplot coordinates{	(50,0.854620531940889)	(100,0.881234128876898)	(150,0.89055450469061)	(200,0.900654645763403)	(250,0.906737343170658)	(300,0.913416070712142)	(350,0.914382010644971)	(400,0.920686534463909)	(450,0.922359703388188)	(500,0.921970272264824)		};	
	\end{rtplot}
	\hspace{\seplength}
	\begin{rtplotxalpha}
    {
        title={(c)},
        xlabel={$\rho$},
    }
	\addplot coordinates{	(0,0.685034485323082)	(0.1,0.698379025870598)	(0.2,0.702712397853521)	(0.3,0.716634788145236)	(0.4,0.725936768072865)	(0.5,0.733365932480213)	(0.6,0.729771896214634)	(0.7,0.741800740042775)	(0.8,0.800576318434387)	(0.9,0.887785225833828)	(1,1)	};	
	\addplot coordinates{	(0,0.763263610682397)	(0.1,0.780436803114389)	(0.2,0.782360573434314)	(0.3,0.7988112824352)	(0.4,0.80872979324325)	(0.5,0.823088660732965)	(0.6,0.818875287128862)	(0.7,0.800118815415026)	(0.8,0.824775846532029)	(0.9,0.895739006038191)	(1,1)	};	
	\addplot coordinates{	(0,0.757280015904533)	(0.1,0.779758811089474)	(0.2,0.790026553897489)	(0.3,0.809010067889191)	(0.4,0.820903399209539)	(0.5,0.837600204163358)	(0.6,0.840439748311128)	(0.7,0.81939768388814)	(0.8,0.833498737799376)	(0.9,0.897100635806999)	(1,1)	};	
	\addplot coordinates{	(0,0.81311219483614)	(0.1,0.828628754615784)	(0.2,0.843676309734583)	(0.3,0.861009861081838)	(0.4,0.876965850889683)	(0.5,0.895192077457905)	(0.6,0.902419411242008)	(0.7,0.892697500437498)	(0.8,0.921424704492092)	(0.9,0.960241207480431)	(1,0.999999940395355)	};	
	\addplot coordinates{	(0,0.512262800984233)	(0.1,0.537110145102925)	(0.2,0.547581180731545)	(0.3,0.578592997788854)	(0.4,0.61412352667972)	(0.5,0.637010256212241)	(0.6,0.665214337963338)	(0.7,0.707621276696857)	(0.8,0.797814890067558)	(0.9,0.896343378725488)	(1,1)	};	
	\end{rtplotxalpha}
    \hspace{\seplength}
	\begin{rtplot}{title={(d)},
xlabel={$n$},
}
	\addplot coordinates{	(50,0.731859425832728)	(100,0.738060688026097)	(150,0.733254100021292)	(200,0.729413608647969)	(250,0.728686025563181)	(300,0.726486698504219)	(350,0.727485014889813)	(400,0.726423109848429)	(450,0.721951071689262)	(500,0.722776960781122)		};	
	\addplot coordinates{	(50,0.822039111105165)	(100,0.879512930955067)	(150,0.904636433035023)	(200,0.928878122086341)	(250,0.935596396619819)	(300,0.934671508173614)	(350,0.932783346317729)	(400,0.930401248970525)	(450,0.929349446024813)	(500,0.927712344459168)		};	
	\addplot coordinates{	(50,0.837902994205072)	(100,0.891801603820219)	(150,0.904357987857433)	(200,0.911787964335114)	(250,0.919701350442054)	(300,0.927077460277779)	(350,0.932905997736883)	(400,0.938707427548254)	(450,0.94718798808762)	(500,0.94959298706699)		};	
	\addplot coordinates{	(50,0.894850058853626)	(100,0.918057377040386)	(150,0.929041848778725)	(200,0.934876990616321)	(250,0.939299845695496)	(300,0.942818291485309)	(350,0.946325578689575)	(400,0.948574209809303)	(450,0.951177855730057)	(500,0.951002247333526)		};	
	\addplot coordinates{	(50,0.637067983766298)	(100,0.628742321337077)	(150,0.626714100214383)	(200,0.630779927780373)	(250,0.628312804572322)	(300,0.617371596342635)	(350,0.618168671852344)	(400,0.611923787175852)	(450,0.606712136639548)	(500,0.630113900700728)		};	
	\end{rtplot}
\\[1mm]
\ref{leg:all}\\[1mm]
\caption{Comparison of the EoP. In (a), results are obtained with other parameters set to $\lambda=100$, $m=10$, and $n=50$; and in (b) with $m=n/5$, $\rho = 0.5$, and $\lambda=100$. Figures (c) and (d) repeat (a) and (b), respectively, with the difference that the zero-sum attacker type is always included in $\Theta$ in the experiments.}\label{fig:exp-main}
\end{figure*}

\end{NoHyper}

In our evaluations, attacker types are randomly generated using the covariance model~\cite{nudelman2004run}, with a parameter $\rho \in [0,1]$ to control the closeness of the generated game to a zero-sum game. That is, we shift each payoff parameter $x$ towards the corresponding one $y$ of a zero-sum attacker type, letting $x \leftarrow (1-\rho) \cdot x + \rho \cdot y$. Thus, when $\rho = 1$ the game generated is exactly zero-sum, and when $\rho=0$ all payoffs are generated uniformly at random. All evaluations are conducted on finite type sets.
This also simulates situations with an unknown (but finite) type set, though situations with infinite type sets requires more advanced approaches (in this case, it is unclear to us how to compute the optimal attacker report in response to the QR policy). All results shown are the average of at least 50 runs.

We compare the EoP achieved by our optimal and heuristic policies, using the SSE policy as a benchmark (i.e., the situation when attacker manipulation is ignored). The first set of results, Figure~\ref{fig:exp-main} (a) and (b), shows the variance of the EoP with respect to $\rho$ and the size of the game. Except for the QR policy with $\varphi = 10$, performance of all other policies is very close to each other, though there is a discernable gap between the optimal policy and the SSE policy. In general, in these results, the loss due to ignoring attacker manipulation appears to be very marginal.

A more interesting set of results is shown (c) and (d), in which we slightly tweak the randomly generated type set, by always adding a zero-sum attacker type in it. This small change leads to a very different pattern in the results. There is a wide gap between the optimal and the SSE policies, and the QR policies normally rest in between them, exhibiting good performance as well. The results corroborate our theoretical analysis, that all attacker types will be incentivized to report the zero-sum type when they are allowed to, which undermines the performance of the SSE policy significantly. The optimal policy, however, is able to achieve very high EoP, sometimes close to recovering the defender's utility in the truthful situation ($\EoP=1$).

\section{Conclusion}

In this paper, we investigate manipulation of algorithms that are designed to learn the optimal strategy to commit to in Stackelberg security games, and aim at remedying the overoptimistic assumption of a truthful attacker adopted by these algorithms.
We propose exact and heuristic approaches to reduce the loss due to manipulation. The effectiveness of our approaches are evaluated both theoretically and empirically.
One promising direction for future work is to look at similar problems in other variants of Stackelberg games, where our framework and approaches may apply.

\bibliographystyle{plain}
\bibliography{deception}

\clearpage
\appendix

\section*{\LARGE Appendix}

\section{Optimal Attacker Report}

\begin{Lemma}
\label{lmm:sseBR}
  Suppose $(\hat{\cc}, \hat{i})$ is an SSE on attacker type $\theta$. The following holds:
  \begin{itemize}
    \item[(i)] If $\hat{c}_i < 1$ for all $i\in T$, then $\{i\in T: \hat{c}_i > 0\} \subseteq \BR_\theta(\hat{\cc})$ and $\sum_{i \in T} \hat{c}_{i} = m$.
    \item[(ii)] If $\hat{c}_i = 1$ for some $i\in T$, then there exists $j \in \BR_\theta(\hat{\cc})$ such that $\hat{c}_j = 1$.
  \end{itemize}
\end{Lemma}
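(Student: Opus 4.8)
The plan is to prove both parts by a single local-improvement argument that contradicts the optimality of the SSE $(\hat\cc,\hat i)$, so my first step is to isolate the improvement ``engine'' and then invoke it three times. Write $B=\BR_\theta(\hat\cc)$ for the best-response set and $v^\ast=\max_{j} u_\theta^\xa(\hat\cc,j)$ for the attacker's equilibrium value, and recall that $u^\xd(\cc,i)$ is strictly increasing in $c_i$ (slope $r_i^\xd-p_i^\xd>0$) while $u_\theta^\xa(\cc,i)$ is strictly decreasing in $c_i$ (slope $-(x_i-y_i)<0$). The engine assumes the defender has spare budget, $\sum_i \hat c_i<m$, and that the attacked target satisfies $\hat c_{\hat i}<1$. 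I would raise $\hat c_{\hat i}$ by a small $\delta>0$, which lowers $u_\theta^\xa(\cdot,\hat i)$ by $\delta(x_{\hat i}-y_{\hat i})$; to keep all of $B$ indifferent I would simultaneously raise $c_j$ for every other $j\in B$ by the positive $O(\delta)$ amount $\delta(x_{\hat i}-y_{\hat i})/(x_j-y_j)$, leaving every target outside $B$ untouched. For $\delta$ small this keeps each coordinate below $1$ and the total within budget (the slack absorbs the $O(\delta)$ increase), and since every target of $B$ is pushed to the common new value $v^\ast-\delta(x_{\hat i}-y_{\hat i})$, which still strictly exceeds the attacker value of every target outside $B$, we get $\BR_\theta(\cc')=B\ni\hat i$. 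Now $u^\xd(\cc',\hat i)=u^\xd(\hat\cc,\hat i)+\delta(r_{\hat i}^\xd-p_{\hat i}^\xd)>u^\xd(\hat\cc,\hat i)$, contradicting that $(\hat\cc,\hat i)$ attains $\max_{\cc, i\in\BR_\theta(\cc)}u^\xd(\cc,i)$.

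With this engine, part (i) follows quickly under the hypothesis $\hat c_i<1$ for all $i$ (so in particular $\hat c_{\hat i}<1$). For the budget claim, if $\sum_i\hat c_i<m$ the engine fires immediately and contradicts optimality, so $\sum_i\hat c_i=m$. For the containment claim, I would suppose some $k$ with $\hat c_k>0$ lies outside $B$; since $k\notin B$ its attacker value is strictly below $v^\ast$, so shaving $\hat c_k$ by a small $\epsilon$ leaves $\BR_\theta$ unchanged (only target $k$'s attacker value moves, and it stays strictly below $v^\ast$) and leaves $u^\xd(\cdot,\hat i)$ unchanged, while creating budget slack $\epsilon$. Feeding this perturbed strategy to the engine again yields a strict improvement and contradicts optimality; hence $\{i:\hat c_i>0\}\subseteq B$.

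Part (ii) is the same idea in the presence of a fully covered target. Assuming for contradiction that no fully covered target is a best response, every $j\in B$ has $\hat c_j<1$, so in particular $\hat c_{\hat i}<1$. Picking any fully covered $i_0$ (not a best response by assumption, hence with attacker value strictly below $v^\ast$) and shaving its coverage below $1$ again frees budget without disturbing $\BR_\theta$ or $u^\xd(\cdot,\hat i)$, after which the engine delivers a strictly better feasible profile --- a contradiction; therefore at least one fully covered target lies in $B$. The main obstacle, and the step I expect to verify most carefully, is the feasibility and best-response bookkeeping inside the engine: that the compensating bumps on $B$ can be applied simultaneously (each is an independent, explicitly computable $O(\delta)$ quantity, so this is routine), that no coordinate crosses $1$ and the budget is respected (guaranteed for small $\delta$ by the strict inequalities $\hat c_j<1$ together with the available slack), and, crucially, that the non-best-responses remain strictly dominated so that $\hat i$ genuinely stays a best response after the perturbation --- which holds because their attacker values sit a fixed positive distance below $v^\ast$ while the value on $B$ drops only by $O(\delta)$.
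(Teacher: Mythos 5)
Your proof is correct, but it takes a genuinely different route from the paper. The paper characterizes $\hat{\cc}$ as the optimal solution of a linear program (maximize $u^\xd(\cc,\hat{i})$ subject to $\hat{i}$ remaining a best response, the budget constraint, and box constraints) and then reads off both parts from the KKT conditions: the sign pattern of the stationarity equation at coordinate $\hat{i}$, combined with complementary slackness, forces $\beta>0$ (hence a tight budget) and rules out positive coverage on, or full coverage confined to, non-best-response targets. You instead argue by explicit local improvement: raise $c_{\hat{i}}$ by $\delta$ while compensating every other $j\in\BR_\theta(\hat\cc)$ by $\delta(x_{\hat i}-y_{\hat i})/(x_j-y_j)$ to preserve the indifference on the best-response set, and, where budget slack is needed, first shave coverage from a non-best-response target (which moves only that target's attacker value and leaves it strictly below the top). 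Your version is more elementary and makes the economic content transparent --- slack budget, or coverage parked on non-best-response targets, can always be recycled into the attacked target for a strict defender gain --- at the cost of the $\epsilon$-then-$\delta$ bookkeeping you correctly flag; the paper's KKT route avoids that bookkeeping and is more mechanical, but hides the intuition behind multiplier sign-chasing. One small precision point: your ``engine'' as stated assumes only $\hat c_{\hat i}<1$, but the compensating bumps require $\hat c_j<1$ for \emph{every} $j\in\BR_\theta(\hat\cc)$; this stronger hypothesis does hold in all three invocations (in part (i) by assumption, in part (ii) by the contradiction hypothesis), so nothing breaks, but the engine's stated preconditions should include it.
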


\begin{proof}
Since $(\hat{\cc}, \hat{i})$ is an SSE, by definition, we have $\hat{i}\in \BR_\theta(\hat{\cc})$, and $u^\xd(\hat{\cc}, \hat{i}) \ge u^\xd(\cc, i)$ for all $\cc\in\mathcal{C}, i\in\BR_\theta(\cc)$.
We claim that $\hat{\cc}$ is the optimal solution to the following linear program:
\begin{subequations}
\label{lp:c-SSE}
\begin{talign}
\max_{\mathbf{c}} \quad & u^\xd(\cc, \hat{i}) \nonumber \tag{\ref{lp:c-SSE}}\\
\text{s.t.} \quad & u_\theta^\xa(\mathbf{c}, i) \le u_\theta^\xa(\mathbf{c}, \hat{i}) & \text{for all } i \in T \setminus \{\hat{i}\} \label{lp:c-SSE-con1}\\
& \sum_{i\in T} c_{i} \le m \label{lp:c-SSE-con2}\\
& 0 \le c_{i} \le 1 & \text{for all } i \in T \label{lp:c-SSE-con3}
\end{talign}
\end{subequations}
To see this, note that, first $\hat{\cc}$ is a feasible solution because it satisfies all the constraints above: \eqref{lp:c-SSE-con1} is equivalent to $\hat{i} \in \BR_\theta(\cc)$, and \eqref{lp:c-SSE-con2} and \eqref{lp:c-SSE-con3} combined are equivalent to ${\cc} \in \mathcal{C}$.
Second, $\hat{\cc}$ is optimal because if it is not, there would exist another feasible solution $\zz\neq \hat{\cc}$, such that $u^\xd(\zz, \hat{i}) > u^\xd(\hat{\cc}, \hat{i})$; this contradicts the assumption that $(\hat{\cc}, \hat{i})$ forms an SSE.

By the Karush-Kuhn-Tucker (KKT) conditions, $\hat{\cc}$ is an optimal solution only if there exists constants (i.e., KKT multipliers) $\alpha_i$, $\beta$, $\gamma_i$ and $\delta_i$, each corresponding to an inequality constraint in \eqref{lp:c-SSE-con1}--\eqref{lp:c-SSE-con3}, such that for all $i\in T$ (let $w_i^\xa = p_i^\theta - r_i^\theta$ and $w_i^\xd = r_i^\xd - p_i^\xd$ for each $i$ below):
\begin{align}
&\begin{cases}
 \phantom{w_{\hat{i}}^\xd}\ - \ w_{i}^\xa \cdot \alpha_i \phantom{\sum_{i\neq \hat{i}}} \,\ -\ \beta + \gamma_i - \delta_i = 0  & \mbox{for all $i \neq \hat{i}$}\\
 w_{\hat{i}}^\xd \ +\ w_{\hat{i}}^\xa \cdot \sum_{i\neq \hat{i}} \alpha_i \ -\ \beta + \gamma_{\hat{i}} - \delta_{\hat{i}} = 0 &
\end{cases}& \text{(by stationarity\footnotemark)} \label{eq:kkt-s}\\[1.5mm]
& \alpha_i, \beta, \gamma_i, \delta_i \ge 0 & \text{(by dual feasibility)} \label{eq:kkt-df}\\[1.5mm]
& \begin{cases}
    \alpha_i \cdot \left(u_\theta^\xa(\hat{\cc}, i) - u_\theta^\xa(\hat{\cc}, \hat{i}) \right) = 0 \\
    \beta \cdot \left(\sum_{j\in T} \hat{c}_j - m \right) = 0 \\
    \gamma_i \cdot \hat{c}_i = 0 \\
    \delta_i \cdot (\hat{c}_i - 1) = 0
  \end{cases}
& \text{(by complementary slackness\footnotemark)} \label{eq:kkt-cs}
\end{align}

\addtocounter{footnote}{-2}
\stepcounter{footnote}\footnotetext{A solution $x$ satisfies stationarity if $\nabla f(x) = \lambda_1 \cdot \nabla g_1(x) + \cdots + \lambda_\ell \cdot \nabla g_\ell(x)$, where $f$ is the objective function (minimization), each $g_i$ corresponds to an inequality constraint (in the form $g_i(x) \le 0$), and each $\lambda_i$ is a KKT multiplier.}

\stepcounter{footnote}\footnotetext{A solution $x$ satisfies complementary slackness if $\lambda_i \cdot g_i(x) = 0$ for each KKT multiplier $\lambda_i$ and their corresponding inequality constraint function $g_i$.}

Now we show (i) and (ii) in the statement of the lemma separately.

\paragraph{Part (i).}
Since $\hat{c}_i < 1$ for all $i \in T$, we have $\delta_i = 0$ for all $i \in T$ by the last equation in \eqref{eq:kkt-cs}.
Suppose towards a contradiction that $\sum_{i \in T} \hat{c}_i < m$.
We would have $\beta = 0$ by the second equation in \eqref{eq:kkt-cs}; and further, by the second equation in \eqref{eq:kkt-s}, $w_{\hat{i}}^\xd + w_{\hat{i}}^\xa \cdot \sum_{i\neq \hat{i}} \alpha_i + \gamma_{\hat{i}} = 0$, which is a contradiction as $w_{\hat{i}}^\xd = r_{\hat{i}}^\xd - p_{\hat{i}}^\xd > 0$, $w_{\hat{i}}^\xa = r_{\hat{i}}^\xa - p_{\hat{i}}^\xa > 0$, and $\alpha_i, \gamma_i \ge 0$ for all $i$ by \eqref{eq:kkt-s}. Thus, $\beta > 0$ and $\sum_{i \in T} \hat{c}_i = m$.

Similarly, if we suppose $\hat{c}_t > 0$ for some $t \in T$, but $t \notin \BR_\theta(\hat{\cc})$, we would have $u_\theta^\xa(\hat{\cc}, t) < \max_{i \in T} u_\theta^\xa(\hat{\cc}, i) = u_\theta^\xa(\hat{\cc}, \hat{i})$. Thus, $\gamma_t = 0$ and $\alpha_t = 0$ by the third and the first equations in \eqref{eq:kkt-cs}, and then $\beta + \delta_t = 0$ by the first equation in \eqref{eq:kkt-s} (note that $t \neq \hat{i}$ since $\hat{i} \in \BR_\theta(\hat{\cc})$), which contradicts $\beta > 0$ and $\delta_i = 0$ for all $i \in T$ as we show above.

\paragraph{Part (ii).}
Suppose that $\hat{c}_i = 1$ for some $i\in T$, but $\hat{c}_j < 1$ for all $j \in \BR_\theta(\cc)$ (in particular, $\hat{c}_{\hat{i}} < 1$).
Thus, $i \notin \BR_\theta(\hat{\cc})$, so $u_\theta^\xa(\hat{\cc}, i) < \max_{t \in T}  u_\theta^\xa(\hat{\cc}, t) = u_\theta^\xa(\hat{\cc}, \hat{i})$.
We have $\gamma_i = 0$ and $\alpha_i = 0$ by \eqref{eq:kkt-cs}, which then implies $ \beta + \delta_i = 0$ by \eqref{eq:kkt-s}; thus, $\beta = 0$.
In addition, $\hat{c}_{\hat{i}} < 1$ implies $\delta_{\hat{i}} = 0$ by \eqref{eq:kkt-cs}. Again, by \eqref{eq:kkt-s}, we end up with the contradiction that $w_{\hat{i}}^\xd + w_{\hat{i}}^\xa \cdot \sum_{i\neq \hat{i}} \alpha_i + \gamma_{\hat{i}} = 0$.
\end{proof}

\begin{Lemma}
\label{lmm:maximin-sse}
Suppose $\ol{\cc}$ is a maximin strategy of the defender, i.e., $\ol{\cc} \in \arg\max_{\cc \in \mathcal{C}} \min_{i\in T} u^\xd(\cc, i)$.
Then $(\ol{\cc},i)$ forms an SSE in a zero-sum game for any $i\in \BR_\beta(\ol{\cc})$, where $\beta = (- \pp^\xd, - \rr^\xd)$.
\end{Lemma}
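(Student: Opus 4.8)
The plan is to show that for the zero-sum attacker type $\beta = (-\pp^\xd, -\rr^\xd)$ the SSE objective collapses \emph{exactly} onto the maximin objective, so that any maximin strategy paired with any best response of $\beta$ attains the SSE value. The argument has three moving parts: confirming that $\beta$ makes the game zero-sum, showing the attacker's tie-break becomes irrelevant to the defender's value, and matching the resulting optimization to the maximin problem that $\ol{\cc}$ solves.

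First I would verify that $\beta$ indeed renders the game zero-sum. Substituting $\xx = -\pp^\xd$ and $\yy = -\rr^\xd$ into the parameterized attacker utility gives $u_\beta^\xa(\cc, i) = (1-c_i)(-p_i^\xd) + c_i(-r_i^\xd) = -u^\xd(\cc, i)$ for every $\cc \in \mathcal{C}$ and $i \in T$, using \eqref{eq:ud}. Consequently $\BR_\beta(\cc) = \arg\max_{i\in T} u_\beta^\xa(\cc, i) = \arg\min_{i\in T} u^\xd(\cc, i)$: against the zero-sum type the attacker best-responds by minimizing the defender's utility.

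The key step is the observation that, for any fixed $\cc$, every best response $i \in \BR_\beta(\cc)$ yields the \emph{same} defender utility, namely $u^\xd(\cc, i) = \min_{j \in T} u^\xd(\cc, j)$. Hence the inner maximization in the SSE definition — the tie-break in the defender's favor among best responses — does not change the value, and $\max_{i \in \BR_\beta(\cc)} u^\xd(\cc, i) = \min_{j\in T} u^\xd(\cc, j)$. Taking the outer maximum over $\cc$, the SSE objective $\max_{\cc \in \mathcal{C},\, i \in \BR_\beta(\cc)} u^\xd(\cc, i)$ equals $\max_{\cc\in\mathcal{C}} \min_{j\in T} u^\xd(\cc, j)$, which is precisely the maximin value; call it $\ol{u}$.

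Finally I would conclude. Since $\ol{\cc}$ is a maximin strategy it attains this maximum, so $\min_{j\in T} u^\xd(\ol{\cc}, j) = \ol{u}$; and for any $i \in \BR_\beta(\ol{\cc})$ we have $u^\xd(\ol{\cc}, i) = \min_{j\in T} u^\xd(\ol{\cc}, j) = \ol{u}$, which matches the SSE value established above. Therefore $(\ol{\cc}, i) \in \arg\max_{\cc \in \mathcal{C},\, i \in \BR_\beta(\cc)} u^\xd(\cc, i)$, i.e.\ it forms an SSE in the zero-sum game. I do not anticipate a genuine obstacle; the only point requiring care is the tie-breaking convention, but it is exactly the zero-sum structure — all best responses are minimizers and hence utility-equivalent for the defender — that makes the tie-break irrelevant to the value and lets the reduction to the maximin problem go through cleanly.
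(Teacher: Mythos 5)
Your proposal is correct and rests on the same key fact as the paper's proof: for $\beta = (-\pp^\xd,-\rr^\xd)$ one has $u_\beta^\xa(\cc,i) = -u^\xd(\cc,i)$, so every $i \in \BR_\beta(\cc)$ attains $u^\xd(\cc,i)=\min_{j\in T} u^\xd(\cc,j)$ and the SSE objective collapses onto the maximin objective. The only difference is presentational: you argue directly that $(\ol{\cc},i)$ attains the common optimal value, whereas the paper supposes $(\ol{\cc},i)$ is not an SSE and derives from a putatively better profile $(\zz,t)$ the contradiction $\min_{j\in T} u^\xd(\zz,j) > \max_{\cc\in\mathcal{C}}\min_{j\in T} u^\xd(\cc,j)$.
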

\begin{proof}
Suppose $(\ol{\cc}, i)$, $i \in \BR_\beta(\ol{\cc})$, is not an SSE.
Thus, there exists $\zz \in \mathcal{C}$ and $t \in \BR_\beta(\zz)$, such that $u^\xd(\zz, t) > u^\xd(\ol{\cc}, i)$; equivalently, $u_\beta^\xa(\zz, t) < u_\beta^\xa(\ol{\cc}, i)$ as $\beta$ makes the game zero-sum.
Since $i \in \BR_\beta(\ol{\cc})$ and $t\in\BR_\beta(\zz)$ are the attacker's best responses, we have
$u_\beta^\xa (\ol{\cc}, i) = \max_{j \in T} u_\beta^\xa (\ol{\cc}, j)$ and
$u_\beta^\xa (\zz, t) = \max_{j \in T} u_\beta^\xa (\zz, j)$;
thus, $\max_{j \in T} u_\beta^\xa (\zz, j) < \max_{j \in T} u_\beta^\xa (\ol{\cc}, j)$.
This leads to the following contradiction:
\begin{align*}
\max_{\cc \in \mathcal{C}} \min_{j\in T} u^\xd(\cc, j)
= \min_{j \in T} u^\xd (\ol{\cc}, j)
= - \max_{j \in T} u_\beta^\xa (\ol{\cc}, j)
< - \max_{j \in T} u_\beta^\xa(\zz, j)
= \min_{j \in T} u^\xd(\zz, j). & \qedhere
\end{align*}
\end{proof}

\begin{Corollary}
\label{crl:maximin-sse}
Suppose $\ol{\cc}$ is a maximin strategy of the defender and $\ol{\cc}$ is fully mixed, i.e., $0 < \ol{c}_i < 1$ for all $i \in T$. Then $(\ol{\cc},i)$ forms an SSE in a zero-sum game for all $i\in T$, and $u^\xd(\ol{\cc}, i) = \min_{j \in T} u^\xd(\ol{\cc}, j)$ for all $i \in T$.
\end{Corollary}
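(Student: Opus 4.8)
The plan is to leverage the two preceding lemmas to pin down the best-response structure at $\ol{\cc}$ exactly. First I would invoke Lemma~\ref{lmm:maximin-sse}: since $\ol{\cc}$ is a maximin strategy, $(\ol{\cc}, i)$ is an SSE in the zero-sum game with attacker type $\beta = (-\pp^\xd, -\rr^\xd)$ for any $i \in \BR_\beta(\ol{\cc})$. Fix one such best response $i^*$, so that $(\ol{\cc}, i^*)$ is a bona fide SSE on type $\beta$ to which Lemma~\ref{lmm:sseBR} can be applied.

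The crux of the argument is then to show that $\BR_\beta(\ol{\cc}) = T$, i.e., every target is a best response. Because $\ol{\cc}$ is fully mixed, we have $\ol{c}_i < 1$ for all $i$, which is precisely the hypothesis of Lemma~\ref{lmm:sseBR}(i). Applying that lemma to the SSE $(\ol{\cc}, i^*)$ yields $\{i \in T : \ol{c}_i > 0\} \subseteq \BR_\beta(\ol{\cc})$. But full mixing also gives $\ol{c}_i > 0$ for every $i$, so the left-hand set is all of $T$; hence $\BR_\beta(\ol{\cc}) = T$. This immediately upgrades Lemma~\ref{lmm:maximin-sse} from ``any $i \in \BR_\beta(\ol{\cc})$'' to ``all $i \in T$'', establishing the first claim that $(\ol{\cc}, i)$ is an SSE in the zero-sum game for every $i \in T$.

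For the second claim I would exploit that $\beta$ makes the game zero-sum, so $u_\beta^\xa(\ol{\cc}, i) = -u^\xd(\ol{\cc}, i)$ for all $i$. Since $\BR_\beta(\ol{\cc}) = T$, every target maximizes $u_\beta^\xa(\ol{\cc}, \cdot)$, so $u_\beta^\xa(\ol{\cc}, i) = \max_{j\in T} u_\beta^\xa(\ol{\cc}, j)$ is the same value for all $i$. Translating through the identity $u_\beta^\xa = -u^\xd$, the quantity $u^\xd(\ol{\cc}, i)$ is constant in $i$ and equals $\min_{j\in T} u^\xd(\ol{\cc}, j)$, which is exactly the second claim.

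I do not anticipate a genuine obstacle here; the corollary is essentially a specialization of Lemma~\ref{lmm:maximin-sse} once the full-mixing hypothesis is used to force $\BR_\beta(\ol{\cc}) = T$. The one subtlety worth double-checking is that Lemma~\ref{lmm:sseBR}(i) really does apply --- it requires $(\ol{\cc}, i^*)$ to be an SSE (supplied by Lemma~\ref{lmm:maximin-sse}) together with $\ol{c}_i < 1$ for all $i$ (supplied by full mixing) --- and that the zero-sum identity $u_\beta^\xa = -u^\xd$ is used consistently, so that maximizers of the attacker's utility at $\ol{\cc}$ correspond precisely to minimizers of the defender's.
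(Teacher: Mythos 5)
Your proof is correct and follows essentially the same route as the paper's own: Lemma~\ref{lmm:maximin-sse} supplies the SSE on type $\beta = (-\pp^\xd, -\rr^\xd)$, full mixing combined with Lemma~\ref{lmm:sseBR}(i) forces $\BR_\beta(\ol{\cc}) = T$, and the zero-sum identity $u_\beta^\xa = -u^\xd$ converts the constant attacker-side maximum into the defender-side minimum. No gaps; the one subtlety you flagged (verifying the hypotheses of Lemma~\ref{lmm:sseBR}(i)) is handled exactly as in the paper.
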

\begin{proof}
By Lemma~\ref{lmm:maximin-sse}, $(\ol{\cc}, i)$ forms an SSE on attacker type $\beta = (-\pp^\xd, -\rr^\xd)$ with any $i \in \BR_\beta(\ol{\cc})$.
Since $\ol{\cc}$ is fully mixed, $T = \{i\in T: \ol{c}_i > 0\}$, and by Lemma~\ref{lmm:sseBR}~{(i)}, $T \subseteq \BR_\beta(\ol{\cc}) \subseteq T$.
Thus, $\BR_\beta(\ol{\cc}) = T$, so $(\ol{\cc}, i)$ forms an SSE on attacker type $\beta$ (which makes the game zero-sum) with any $i \in T$; we have $u_\beta^\xa(\ol{\cc}, i) = \max_{j \in T} u_\beta^\xa(\ol{\cc}, j)$.
It follows that $u^\xd(\ol{\cc}, i) = - u_\beta^\xa(\ol{\cc}, i) = - \max_{j \in T} u_\beta^\xa(\ol{\cc}, j) = \min_{j \in T} u^\xd(\ol{\cc}, j)$.
\end{proof}

\begin{Lemma}
\label{lmm:maximin-unique}
Suppose $\ol{\cc}$ is a maximin strategy of the defender and $\ol{\cc}$ is fully mixed, i.e., $0 < \ol{c}_i < 1$ for all $i \in T$.
Then $\ol{\cc}$ is the only maximin strategy of the defender.
\end{Lemma}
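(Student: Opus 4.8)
The plan is to show that any maximin strategy is forced to equal $\ol{\cc}$ coordinate by coordinate, by combining two structural facts about the fully mixed maximin strategy with the defender's budget constraint. First I would record the structure of $\ol{\cc}$ itself. Since $\ol{\cc}$ is fully mixed, Lemma~\ref{lmm:maximin-sse} shows $(\ol{\cc},i)$ is an SSE on the zero-sum type $\beta=(-\pp^\xd,-\rr^\xd)$, and Corollary~\ref{crl:maximin-sse} then gives $u^\xd(\ol{\cc},i)=\min_{j}u^\xd(\ol{\cc},j)=\ol{u}$ for every $i\in T$; that is, $\ol{\cc}$ equalizes the defender's payoff across all targets at the maximin value. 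Moreover, because $\ol{c}_i<1$ for all $i$, Lemma~\ref{lmm:sseBR}(i) (applied with $\theta=\beta$) yields $\sum_{i\in T}\ol{c}_i=m$. Solving the linear equation $u^\xd(\ol{c}_i,i)=\ol{u}$ and using $r_i^\xd>p_i^\xd$, this pins down $\ol{c}_i=\frac{\ol{u}-p_i^\xd}{r_i^\xd-p_i^\xd}$ for each $i$.

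Next I would compare an arbitrary maximin strategy $\cc'$ against $\ol{\cc}$. By definition $\min_{i}u^\xd(\cc',i)=\ol{u}$, so $u^\xd(\cc',i)\ge\ol{u}$ for every $i$. Since $u^\xd(\cdot,i)$ is strictly increasing in the coverage and $\ol{c}_i$ is exactly the coverage at which this payoff equals $\ol{u}$, the inequality $u^\xd(c'_i,i)\ge\ol{u}$ is equivalent to $c'_i\ge\ol{c}_i$; thus $\cc'\ge\ol{\cc}$ coordinatewise. Finally I would invoke the budget constraint: $\cc'\in\mathcal{C}$ gives $\sum_i c'_i\le m=\sum_i\ol{c}_i$, while coordinatewise dominance gives $\sum_i c'_i\ge\sum_i\ol{c}_i$. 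Hence $\sum_i(c'_i-\ol{c}_i)=0$ with every summand nonnegative, forcing $c'_i=\ol{c}_i$ for all $i$, i.e.\ $\cc'=\ol{\cc}$.

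There is no substantial obstacle here once the two structural facts about $\ol{\cc}$ are in place; the argument is essentially a squeeze between coordinatewise dominance and the budget constraint. The only point requiring a little care is the translation $u^\xd(c'_i,i)\ge\ol{u}\iff c'_i\ge\ol{c}_i$, which relies on the strict monotonicity of $u^\xd(\cdot,i)$ (so the map is injective and order-preserving) together with the assumption $r_i^\xd>p_i^\xd$ guaranteeing $\ol{c}_i$ is well defined as the unique preimage of $\ol{u}$. It is also worth noting that full mixedness is used twice and is genuinely needed: it is what lets Corollary~\ref{crl:maximin-sse} equalize all payoffs at $\ol{u}$, and what makes Lemma~\ref{lmm:sseBR}(i) applicable to conclude $\sum_i\ol{c}_i=m$; without it, a maximin strategy could leave some target strictly above $\ol{u}$ and the squeeze would fail.
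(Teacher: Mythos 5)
Your proof is correct and uses essentially the same ingredients as the paper's: Corollary~\ref{crl:maximin-sse} to equalize $u^\xd(\ol{\cc},i)=\ol{u}$ across all targets, Lemma~\ref{lmm:sseBR}(i) to get $\sum_{i\in T}\ol{c}_i = m$, and strict monotonicity of $u^\xd(\cdot,i)$ together with the budget constraint to pin down any competing maximin strategy. The only difference is presentational: the paper argues by contradiction over two cases (coordinatewise dominance with a strict coordinate violates the budget; $z_i<\ol{c}_i$ somewhere violates maximin-ness via the equalized payoffs), whereas you package the same two facts as a direct dominance-plus-squeeze argument.
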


\begin{proof}
Suppose $\zz = \arg\max_{\cc \in \mathcal{C}} \min_{i\in T} u^\xd(\cc, i)$ is a maximin strategy and $\zz \neq \ol{\cc}$.
Thus, either:
(i) $z_i \ge \ol{c}_i$ for all $i\in T$, and this is strictly satisfied by some $i$; or
(ii) $z_i < \ol{c}_i$ for some $i \in T$.
We show either of them leads to a contradiction.

Since $\ol{\cc}$ is a maximin strategy, it is also an SSE defender strategy in a zero-sum game by Lemma~\ref{lmm:maximin-sse}; and by Lemma~\ref{lmm:sseBR}~{(i)}, $\sum_{i \in T} \ol{c}_i = m$. Thus, in Case (i), it follows immediately that $\sum_{i\in T} z_i > \sum_{i \in T} \ol{c}_i = m$, which contradicts $\zz \in \mathcal{C}$.
In Case (ii), we have $u^\xd(\zz, i) < u^\xd(\ol{\cc}, i)$ by monotonicity of $u^\xd(\cdot, i)$, which implies $\min_{j\in T} u^\xd(\zz, j) \le u^\xd(\zz, i) < u^\xd(\ol{\cc}, i) = \min_{j\in T} u^\xd(\ol{\cc}, j)$, where the last equality follows by Corollary~\ref{crl:maximin-sse}.
This contradicts the assumption that $\zz$ is a maximin strategy.
\end{proof}

\subsection{Proof of Theorem~\ref{thm:maximin}}

\begin{proof}
Let $\ol{\cc}$ be a maximin strategy of the defender and $\ol{u}$ be the corresponding maximin utility, i.e.,  $\ol{\cc} \in \arg\max_{\cc\in \mathcal{C}} \min_{i \in T} u^\xd(\cc, i)$ and $\ol{u} = \min_{i \in T} u^\xd(\ol{\cc}, i)$.
Consider the following solution $(\beta, \zz, t)$:
\begin{itemize}
\item
$z_i = \max \left\{ 0, \, \frac{\ol{u} - p_i^\xd}{r_i^\xd - p_i^\xd} \right \}$ for all $i \in T$; \hfill \eqnum\label{eq:optsolution_z}

\item
$t \in \BR_\theta(\zz)$ is an arbitrary best response of a type-$\theta$ attacker; \hfill \eqnum\label{eq:optsolution_t}

\item
$\beta = (\rr, \pp)$, where
$r_i = \begin{cases}
          -p_i^\xd, & \mbox{if } i \neq t \\
          -\min\{p_t^\xd,\, \ol{u}\}, & \mbox{if } i = t
        \end{cases}$,
and $p_i = -r_i^\xd \text{ for all } i \in T$.
\hfill \eqnum\label{eq:optsolution_beta}
\end{itemize}

We show that (i) $\zz$ is a maximin defender strategy, (ii) $(\beta, \zz, t)$ is a feasible solution of Program~\eqref{eq:op1} and (iii) it is optimal.

We first focus on the case when $p_t^\xd \le \ol{u}$, and will show how the proof can be modified to show the same results when $p_t^\xd > \ol{u}$. When $p_t^\xd \le \ol{u}$ we have $(r_i, p_i) = (- p_i^\xd, - r_i^\xd)$ for all $i \in T$ by the specification in \eqref{eq:optsolution_beta}, so for any $\cc \in \mathcal{C}$,
\begin{equation}
\label{eq:optsolution-observation-a}
u_\beta^\xa(\cc, i) = - u^\xd(\cc, i).
\end{equation}
Now we show (i)--(iii).

\paragraph{(i) Maximin.}
For all $i \in T$, since $z_i = \max \left\{ 0, \, \frac{\ol{u} - p_i^\xd}{r_i^\xd - p_i^\xd}\right \} \ge \frac{\ol{u} - p_i^\xd}{r_i^\xd - p_i^\xd}$, we have
\begin{equation}
\label{eq:optsolution-maximin}
u^\xd(\zz, i) \ge u^\xd\left(\frac{\ol{u} - p_i^\xd}{r_i^\xd - p_i^\xd},\, i \right)
= \frac{\ol{u} - p_i^\xd}{r_i^\xd - p_i^\xd} \cdot r_i^\xd + \left(1-\frac{\ol{u} - p_i^\xd}{r_i^\xd - p_i^\xd}\right) \cdot p_i^\xd
= \ol{u}.
\end{equation}
It follows that
$$\min_{i \in T} u^\xd(\zz, i) \ge \ol{u} \ge \min_{i\in T} u^\xd(\cc, i)$$
for all $\cc \in \mathcal{C}$, so $\zz$ is indeed a maximin strategy.
We still need to make sure that $\zz$ is feasible, i.e., $\zz \in \mathcal{C}$. Observe that when $z_i > 0$, \eqref{eq:optsolution-maximin} becomes an equality, so we have $u^\xd(\zz, i) = \ol{u} \le u^\xd(\ol{\cc}, i)$ which implies $z_i \le \ol{c}_i$ by monotonicity. Thus, $\sum_{i \in T} z_i \le \sum_{i \in T} \ol{c}_i \le m$.
It remains to show that $0\le z_i \le 1$ for all $i$.
Trivially, by \eqref{eq:optsolution_z}, $z_i \ge 0$ for all $i$.
To see that $z_i \le 1$, it suffices to show that $\frac{\ol{u} - p_i^\xd}{r_i^\xd - p_i^\xd} \le 1$.
Indeed, this holds as $\ol{u} = \min_{j\in T} u^\xd(\ol{\cc}, j) \le u^\xd(\ol{\cc}, i) \le r_i^\xd$.

\paragraph{(ii) Feasibility}
We show that $(\beta, \zz, t)$ satisfies all the constraints of Program~\eqref{eq:op1}. Clearly, \eqref{eq:op1-c} is satisfied because $r_i \ge - p_i^\xd > - r_i^\xd = p_i$ for all $i \in T$ by \eqref{eq:optsolution_beta}.
To see that it also satisfies \eqref{eq:op1-a}, first observe that when $p_t^\xd \le \ol{u}$ we have
\begin{equation}
\label{eq:optsolution-fsb-ubetaa}
u_\beta^\xa(\zz, t)
= u_\beta^\xa\left(\frac{\ol{u} - p_t^\xd}{r_t^\xd - p_t^\xd}, t\right)
= \left(1-\frac{\ol{u} - p_t^\xd}{r_t^\xd - p_t^\xd}\right) \cdot (-p_t^\xd) + \frac{\ol{u} - p_t^\xd}{r_t^\xd - p_t^\xd} \cdot (-r_t^\xd)
= -\ol{u}.
\end{equation}
Combining this with \eqref{eq:optsolution-observation-a} and \eqref{eq:optsolution-maximin} gives, for all $i$, \begin{equation}
\label{eq:optsolution-fsb-tinBR}
u_\beta^\xa(\zz, t) = - \ol{u} \ge - u^\xd(\zz, i) = u_\beta^\xa(\zz, i).
\end{equation}
Thus, $t\in \BR_\beta(\zz)$.

Now that $t\in \BR_\beta(\zz)$ and in (i) we have shown that $\zz \in \mathcal{C}$, if we suppose \eqref{eq:op1-a} is not satisfied, we would have $u^\xd(\zz', t') > u^\xd(\zz, t)$ for some $\zz' \in \mathcal{C}$ and $t' \in \BR_\beta(\zz')$.
Applying \eqref{eq:optsolution-observation-a}, we find the following for all $i$:
$$u^\xd(\zz', i) = - u_\beta^\xa(\zz', i) \ge - u_\beta^\xa(\zz', t') = u^\xd(\zz', t'),$$
where the inequality is due to the fact that $t'\in \BR_\beta(\zz')$. Thus,
\begin{align}
\label{eq:optsolution-ud}
u^\xd(\zz', i) \ge u^\xd(\zz', t') > u^\xd(\zz, t) \ge  \min_{i \in T} u^\xd(\zz, i) = \ol{u}.
\end{align}
It follows that
$$\min_{i\in T} u^\xd(\zz', i) > \ol{u} = \max_{\cc\in \mathcal{C}} \min_{i \in T} u^\xd(\cc, i),$$
which is a contradiction given that $\zz' \in \mathcal{C}$.

\paragraph{(iii) Optimality.}
Suppose that $(\beta, \zz, t)$ is not optimal. Thus, there exists a feasible solution $(\beta', \zz', t')$ such that $u_\theta^\xa(\zz', t') > u_\theta^\xa(\zz, t)$.
By \ref{eq:optsolution_t}, $t \in \BR_\theta(\zz)$, so we have $u_\theta^\xa(\zz, t') \le u_\theta^\xa(\zz, t) < u_\theta^\xa(\zz', t')$, which implies $z'_{t'} < z_{t'}$ by monotonicity.
Since it is defined $z_{t'} = \max \left\{ 0, \, \frac{\ol{u} - p_{t'}^\xd}{r_{t'}^\xd - p_{t'}^\xd}\right \}$, now that $z_{t'} > z'_{t'} \ge 0$, it must be that $z'_{t'} < z_{t'} = \frac{\ol{u} - p_{t'}^\xd}{r_{t'}^\xd - p_{t'}^\xd}$.
Substituting this into the defender's utility function gives
\begin{align*}
  u^\xd(\zz', t')
< u^\xd\left(\frac{\ol{u} - p_{t'}^\xd}{r_{t'}^\xd - p_{t'}^\xd},\, t'\right)
&= \frac{\ol{u} - p_{t'}^\xd}{r_{t'}^\xd - p_{t'}^\xd} \cdot r_i^\xd + \left(1- \frac{\ol{u} - p_{t'}^\xd}{r_{t'}^\xd - p_{t'}^\xd}\right) \cdot p_i^\xd \\
&\qquad\qquad = \ol{u}
= \max_{\cc\in \mathcal{C}} \min_{i \in T} u^\xd(\cc, i)
\le \max_{\cc\in \mathcal{C}, i \in \BR_\beta(\cc)} u^\xd(\cc, i).
\end{align*}
Thus, $(\beta', \zz', t')$ violates~\eqref{eq:op1-a}, contradicting the assumption that $(\beta', \zz', t')$ is a feasible solution.

\smallskip

It remains to deal with the case when $p_t^\xd > \ol{u}$.
The only difference in this case is that now $r_t = - \ol{u} > - p_t^\xd$ by \eqref{eq:optsolution_beta}, so \eqref{eq:optsolution-observation-a} only holds for $i \neq t$.
In our proof above, the arguments that rely on the assumption that $p_t^\xd \le \ol{u}$ and \eqref{eq:optsolution-observation-a} are the equations in \eqref{eq:optsolution-fsb-ubetaa}, \eqref{eq:optsolution-fsb-tinBR}, and \eqref{eq:optsolution-ud}, where the latter two now only hold for all $i \neq t$. However, observe the following:
\begin{itemize}
\item
When $p_t^\xd > \ol{u}$, we have $\frac{\ol{u} - p_t^\xd}{r_t^\xd - p_t^\xd} < 0$ and thus $z_t = 0$ by \eqref{eq:optsolution_z}. It follows that $u_\beta^\xa(\zz, t) = u_\beta^\xa(0, t) = r_t = - \ol{u}$, so \eqref{eq:optsolution-fsb-ubetaa} holds as well.

\item
\eqref{eq:optsolution-fsb-tinBR} holds trivially for $i = t$.

\item
When $p_t^\xd > \ol{u}$, we have $u^\xd(\zz', t) \ge p_t^\xd > \ol{u}$, thus establishing \eqref{eq:optsolution-ud} for $i=t$ as well.
\end{itemize}

Therefore, (i)--(iii) hold when $p_t^\xd > \ol{u}$ and the proof is completed.
\end{proof}

\subsection{Proof of Theorem~\ref{thm:maximin-fully-mixed}}

\begin{proof}
In Theorem~\ref{thm:maximin}, we have shown the existence of an optimal solution containing a maximin strategy of the defender. By Lemma~\ref{lmm:maximin-unique}, $\ol{\cc}$ is the only maximin strategy if it is fully mixed. Thus, there exits an optimal solution that contains $\ol{\cc}$.
We fix $\ol{\cc}$ in Program~\eqref{eq:op1}, and show that in the resultant program, an optimal solution $(\beta', t')$ is such that $\beta' = (-\pp^\xd, -\rr^\xd)$ and $t' \in \arg\max_{i\in T} u_\theta^\xa(\ol{\cc}, i)$.
In fact, now that $\ol{\cc}$ is fixed, the optimality of $(\beta', t')$ follows directly from the specification $t' \in \arg\max_{i\in T} u_\theta^\xa(\ol{\cc}, i)$.
It remains to show that $(\beta', t')$ is feasible.
Since the game with attacker type $\beta'$ is a zero-sum game, by Lemma~\ref{lmm:maximin-sse}, $(\ol{\cc}, t')$ forms an SSE on type $\beta'$. Thus, Constraint~\eqref{eq:op1-a} is satisfied, and $(\beta', t')$ is feasible.
Therefore, $(\beta', \ol{\cc}, t')$ is an optimal solution to Program~\eqref{eq:op1}.

Now we show the first part of the theorem, i.e., the uniqueness of the induced defender strategy.
Suppose for a contradiction that there exists an SSE $(\hat{\cc}, \hat{i})$ on attacker type $\beta$, such that $\hat{\cc} \neq \ol{\cc}$. Consider the two possibilities under this condition.

\paragraph{Case (i).}
$\hat{c}_i \le \ol{c}_i$ for all $i \in T$, and this is strictly satisfied for some $i$. It follows that $\sum_{i\in T} \hat{c}_i < \sum_{i\in T} \ol{c}_i \le m$, which contradicts Lemma~\ref{lmm:sseBR}~{(i)}.

\paragraph{Case (ii).}
$\hat{c}_{j} > \ol{c}_{j}$ for some $j \in T$.
If it is also the case that $\hat{c}_i < 1$ for all $i \in T$, by Lemma~\ref{lmm:sseBR}~{(i)}, we have $j \in \BR_\beta(\hat{\cc})$;
If otherwise $\hat{c}_{i} = 1$ for some $i \in T$, by Lemma~\ref{lmm:sseBR}~{(ii)}, there exists $j' \in \BR_\beta(\hat{\cc})$ such that $\hat{c}_{j'} = 1 > \ol{c}_{j'}$.
In both cases, we find some $j \in \BR_\beta(\hat{\cc})$ such that $\hat{c}_{j} > \ol{c}_{j}$; hence, $u^\xd(\hat{\cc}, j) > u^\xd(\ol{\cc}, j)$ by monotonicity of $u^\xd(\cdot, j)$.
We have
\begin{align*}
u^\xd(\hat{\cc}, \hat{i}) = \max_{i \in \BR_\beta(\hat{\cc})} u^\xd(\hat{\cc}, i) \ge u^\xd(\hat{\cc}, j) > u^\xd(\ol{\cc}, j) = u^\xd(\ol{\cc}, \hat{i}),
\end{align*}
where the last equality follows by Corollary~\ref{crl:maximin-sse}.
Thus, by the monotonicity, we have $\hat{c}_{\hat{i}} > \ol{c}_{\hat{i}}$ and, in turn, $u_{\theta}^\xa(\hat{\cc}, \hat{i}) < u_{\theta}^\xa(\ol{\cc}, \hat{i})$.
This gives
$$u_{\theta}^\xa(\hat{\cc}, \hat{i}) < u_{\theta}^\xa(\ol{\cc}, \hat{i}) \le  \max_{i\in T} u_\theta^\xa(\ol{\cc}, i) = u_{\theta}^\xa(\ol{\cc}, t'),$$
so $(\beta', \ol{\cc}, t')$ is a better solution than $(\beta, \zz, t)$ and this contradicts the assumption in the statement of the theorem.

Both cases lead to contradictions. This completes the proof.
\end{proof}

\section{The EoP Measure}

\subsection{Proof of Proposition~\ref{prp:maximin-is-best}}

\begin{proof}
Let $(\zz, j) = \pi(\beta)$, which is the outcome a type-$\beta$ attacker would get if he reports truthfully. By definition, $j \in \BR_\beta(\zz)$.
By Lemma~\ref{lmm:maximin-sse}, $\ol{\cc}$ as the defender's maximin strategy is exactly her SSE strategy in a zero-sum game and $(\ol{\cc}, t)$ forms an SSE for any $t \in \BR_\beta(\ol{\cc})$.
Thus, $u^\xd(\zz,j) \le u^\xd(\ol{\cc}, t)$ and $u_\beta^\xa(\ol{\cc}, t) = \max_{i \in T} u_\beta^\xa(\ol{\cc}, i)$. Since $\beta$ makes the game zero-sum, $u_\beta^\xa(\cc, i) = u^\xd(\cc, i)$ for any $\cc$ and $i$.
It follows that
\begin{align*}
u_\beta^\xa(\zz, j )
= - u^\xd(\zz, j) \ge -  u^\xd(\ol{\cc}, t)
= u_\beta^\xa(\ol{\cc}, t)
= \max_{i \in T} u_\beta^\xa(\ol{\cc}, i)
= - \min_{i \in T} u^\xd(\ol{\cc}, i)
= - \ol{u}.
\end{align*}
Now suppose towards a contradiction that $u^\xd(\pi(\gamma)) > \ol{u}$.
Then we have
\begin{align*}
u_\beta^\xa(\pi(\gamma)) = - u^\xd(\pi(\gamma)) < - \ol{u} \le u_\beta^\xa(\zz, j ) = u_\beta^\xa( \pi (\beta) ),
\end{align*}
so the attacker would be strictly better-off reporting $\beta$. This contradicts the assumption that $\gamma$ is an optimal reporting strategy of a type-$\beta$ attacker.
\end{proof}

\subsection{Proof of Proposition~\ref{prp:eop-0-1}}

\begin{proof}
Clearly, $\EoP(\pi) \ge 0$ as the payoffs are shifted to be non-negative. We show that $\EoP(\pi) \le 1$.

Let $\hat{u} (\theta) = \max_{\cc \in \mathcal{C}, i \in \BR_\theta(\cc)} u^\xd(\cc, i)$ denote the defender's utility in an SSE on attacker type $\theta$, and let $\beta$ be an attacker type that provides the best defender utility in an SSE, i.e., $\beta \in \arg\max_{\theta\in\Theta} \hat{u}(\theta)$.
Consider the best reporting strategy $\gamma$ of a type-$\beta$ attacker in response to $\pi$. For the outcome $(\zz, t) = \pi(\gamma)$ to be feasible, we must have $t \in \BR_{\gamma}(\zz)$; thus, $u^\xd(\pi(\gamma)) = u^\xd(\zz, t) \le \max_{\cc\in \mathcal{C}, i\in\BR_\gamma(\cc)} u^\xd(\cc, i) = \hat{u}(\gamma) \le \hat{u}(\beta)$.
It follows that
\begin{align*}
\EoP(\pi)
= \min_{\theta \in \Theta} \eop_\theta(\pi)
\le \eop_\beta(\pi)
= \frac{u^\xd(\pi(\gamma))}{\hat{u}(\beta)}
\le 1. & \qedhere
\end{align*}
\end{proof}

\section{Correctness of Algorithm~\ref{alg:optimal-policy-finite}}

\begin{Lemma}
  \label{lmm:report-truthfully}
  Let $(\hat{\cc}, \hat{i})$ be an arbitrary SSE on an attacker type $\theta\in \Theta$.
  For any policy $\pi$, truthful report guarantees a type-$\theta$ attacker his SSE utility, i.e., $u_\theta^\xa(\pi(\theta)) \ge u_\theta^\xa(\hat{\cc}, \hat{i})$.
\end{Lemma}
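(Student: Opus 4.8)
The plan is to reduce the statement to a claim purely about the SSE, namely that the defender's SSE coverage $\hat{\cc}$ \emph{minimizes the attacker's best-response utility} over all feasible coverages. First I would unpack the left-hand side. Writing $\pi(\theta) = (\cc^\theta, i^\theta)$, feasibility of the policy forces $i^\theta \in \BR_\theta(\cc^\theta)$, so the truthful type-$\theta$ attacker receives $u_\theta^\xa(\pi(\theta)) = u_\theta^\xa(\cc^\theta, i^\theta) = \max_{j\in T} u_\theta^\xa(\cc^\theta, j)$. Since $\pi$ is arbitrary and $\cc^\theta$ ranges over all of $\mathcal{C}$, and since $u_\theta^\xa(\hat{\cc}, \hat{i}) = \max_{j\in T} u_\theta^\xa(\hat{\cc}, j)$, the lemma is equivalent to the assertion that the function $g$ defined by $g(\cc) = \max_{j\in T} u_\theta^\xa(\cc, j)$ attains its minimum over $\mathcal{C}$ at $\hat{\cc}$.

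To prove this minimization claim I would argue by contradiction, assuming there is some $\cc' \in \mathcal{C}$ with $g(\cc') < g(\hat{\cc})$, and split into the two cases of Lemma~\ref{lmm:sseBR}. In the interior case ($\hat{c}_i < 1$ for all $i$), the lemma gives $\{i : \hat{c}_i > 0\} \subseteq \BR_\theta(\hat{\cc})$ and $\sum_{i\in T} \hat{c}_i = m$; hence every target $j$ in the support of $\hat{\cc}$ is a best response and satisfies $u_\theta^\xa(\hat{\cc}, j) = g(\hat{\cc})$. The assumption $g(\cc') < g(\hat{\cc})$ forces $u_\theta^\xa(\cc', j) < g(\hat{\cc}) = u_\theta^\xa(\hat{\cc}, j)$ for each such $j$, and strict monotonicity of $u_\theta^\xa(\cdot, j)$ in $c_j$ then yields $c'_j > \hat{c}_j$ for every $j$ in the support. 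Summing over the support gives $\sum_{j\in T} c'_j \ge \sum_{j : \hat{c}_j > 0} c'_j > \sum_{j\in T} \hat{c}_j = m$, contradicting $\cc' \in \mathcal{C}$.

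In the boundary case ($\hat{c}_i = 1$ for some $i$), I would invoke part~(ii) of Lemma~\ref{lmm:sseBR} to obtain a best response $j^* \in \BR_\theta(\hat{\cc})$ with $\hat{c}_{j^*} = 1$. Then $u_\theta^\xa(\hat{\cc}, j^*) = g(\hat{\cc})$ equals the smallest value that $u_\theta^\xa(\cdot, j^*)$ can take on $\mathcal{C}$, since $u_\theta^\xa(\cc, j^*)$ is decreasing in $c_{j^*}$ and is attained at the maximal coverage $c_{j^*} = 1$. Consequently, for any $\cc' \in \mathcal{C}$ we have $g(\cc') \ge u_\theta^\xa(\cc', j^*) \ge u_\theta^\xa(\hat{\cc}, j^*) = g(\hat{\cc})$, again contradicting $g(\cc') < g(\hat{\cc})$. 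Combining the two cases establishes that $\hat{\cc}$ minimizes $g$, and unwinding the reduction gives $u_\theta^\xa(\pi(\theta)) = g(\cc^\theta) \ge g(\hat{\cc}) = u_\theta^\xa(\hat{\cc}, \hat{i})$ as required.

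I expect the main obstacle to be the reduction step itself: recognizing that ``truthful report guarantees the SSE utility'' is exactly the statement that the SSE coverage is the harshest the defender can ever be on the attacker, i.e.\ that it minimizes the attacker's attainable best-response payoff, rather than anything about the defender's own payoff. Once this reformulation is in hand, both cases are routine applications of Lemma~\ref{lmm:sseBR} together with the strict monotonicity of $u_\theta^\xa$ in the coverage, and no further machinery is needed.
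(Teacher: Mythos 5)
Your proof is correct, but it takes a genuinely different route from the paper's. The paper argues by a direct perturbation on the defender's side: assuming $u_\theta^\xa(\pi(\theta)) < u_\theta^\xa(\hat{\cc},\hat{i})$ and writing $\pi(\theta)=(\zz,t)$, it uses $t\in\BR_\theta(\zz)$ to get $u_\theta^\xa(\hat{\cc},\hat{i}) > u_\theta^\xa(\zz,t) \ge u_\theta^\xa(\zz,\hat{i})$, then by continuity finds $\phi\in(\hat{c}_{\hat{i}}, z_{\hat{i}}]$ with $u_\theta^\xa(\phi,\hat{i})=u_\theta^\xa(\zz,t)$, and modifies $\zz$ by raising the coverage of $\hat{i}$ to $\phi$; the resulting strategy is feasible, keeps $\hat{i}$ a best response, and gives the defender strictly more than $u^\xd(\hat{\cc},\hat{i})$ by monotonicity of $u^\xd(\cdot,\hat{i})$ --- contradicting SSE optimality. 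That argument needs nothing beyond the SSE definition, continuity, and monotonicity, and in particular does not invoke Lemma~\ref{lmm:sseBR}. You instead prove the stronger structural fact that the SSE coverage $\hat{\cc}$ minimizes the attacker's best-response value $g(\cc)=\max_{j\in T} u_\theta^\xa(\cc,j)$ over $\mathcal{C}$ --- i.e., the attacker's SSE utility is exactly his minimax value --- and derive the lemma as an immediate corollary. Both of your cases are sound: in the interior case the support is nonempty (since $\sum_{i\in T}\hat{c}_i = m \ge 1$), so the strict per-coordinate inequalities $c'_j > \hat{c}_j$ do yield $\sum_{j\in T} c'_j > m$; in the boundary case the fully covered best-response target $j^*$ pins $g(\hat{\cc})$ at the global minimum of $u_\theta^\xa(\cdot,j^*)$ over $[0,1]$, so no coverage can do better against the attacker. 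What your route buys is a cleaner, reusable characterization that handles all policies at once; the cost is the dependence on Lemma~\ref{lmm:sseBR}, whose own proof runs through KKT conditions, whereas the paper's exchange argument is self-contained and shorter.
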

\begin{proof}
Suppose towards a contradiction that $u_\theta^\xa(\pi(\theta)) < u_\theta^\xa(\hat{\cc}, \hat{i})$.
Let $\pi(\theta) = (\zz, t)$.
For $\pi$ to be feasible, $t$ must be a best response of a type-$\theta$ attacker to $\zz$.
Thus, $u_\theta^\xa(\zz, t) \ge u_\theta^\xa(\zz, i)$ for all $i \in T$; in particular, $u_\theta^\xa(\zz, t) \ge u_\theta^\xa(\zz, \hat{i})$.
We have
\begin{equation*}
u_\theta^\xa(\hat{\cc}, \hat{i}) > u_\theta^\xa(\pi(\theta)) = u_\theta^\xa(\zz, t) \ge u_\theta^\xa(\zz, \hat{i}).
\end{equation*}
Since $u_\theta^\xa(\cc, \hat{i})$ changes continuously with respect to $c_{\hat{i}}$, the above inequality implies the existence of a number $\phi \in (\hat{c}_{\hat{i}}, z_{\hat{i}}]$, such that $u_\theta^\xa(\phi, \hat{i}) = u_\theta^\xa(\zz, t)$.

Consider a defender strategy $\zz'$ with $z'_{\hat{i}} = \phi$ and $z'_i = z_i$ for all $i \in T\setminus\{\hat{i}\}$.
We have $0\le z_i \le 1$ and $\sum_{i\in T} z'_i \le \sum_{i \in T} z_i \le m$, so $\zz' \in \mathcal{C}$.
In addition, $u_\theta^\xa(\zz, i) = u_\theta^\xa(\zz', i)$ for all $i\in T\setminus\{\hat{i}\}$. Thus,
\begin{equation*}
  u_\theta^\xa(\zz', {\hat{i}}) =  u_\theta^\xa(\phi, {\hat{i}}) = u_\theta^\xa(\zz,t) \ge u_\theta^\xa(\zz, i) = u_\theta^\xa(\zz', i),
\end{equation*}
which means $\hat{i}$ is a best response of a type-$\theta$ attacker to $\zz'$, i.e, ${\hat{i}} \in \BR_\theta(\zz')$.
This gives rise to the following contradiction:
\begin{align*}
\max_{\cc \in \mathcal{C}, i \in \BR_\theta(\cc)} u^\xd(\cc, i)
\ge u^\xd(\zz', {\hat{i}})
= u^\xd(\phi, {\hat{i}})
> u^\xd(\hat{\cc}, \hat{i})
= \max_{\cc\in \mathcal{C}, i\in\BR_\theta(\cc)} u^\xd(\cc, i),
\end{align*}
where $u^\xd(\phi, \hat{i}) > u^\xd(\hat{\cc}, \hat{i})$ since $\phi \in (\hat{c}_{\hat{i}}, z_{\hat{i}}]$.
\end{proof}

\subsection{Proof of Lemma~\ref{lmm:theta-ell-optimal-report}}

\begin{proof}
We will write $\pi=(\cc^{\theta}, i^\theta)_{\theta\in\Theta}$.
Let $\beta \in \Theta$ be an optimal report of a type-$\theta_\ell$ attacker in response to $\pi$, i.e., $u_{\theta_\ell}^\xa(\pi(\beta)) \ge u_{\theta_\ell}^\xa(\pi(\beta'))$ for all $\beta' \in \Theta$.
We first show a couple of useful observations.

\paragraph{\emph{Claim 1.}}
$u_{\theta_\ell}^\xa(\hh, i^\beta) \ge u_{\theta_\ell}^\xa(\pi(\beta))$, i.e., a type-$\theta_\ell$ attacker (weakly) prefers outcome $(\hh, i^\beta)$ to $\pi(\beta)$.

\begin{proof}[Proof of Claim 1]
suppose towards a contradiction that $u_{\theta_\ell}^\xa(\hh, i^\beta) < u_{\theta_\ell}^\xa(\pi(\beta)) = u_{\theta_\ell}^\xa(\cc^\beta, i^\beta)$.
By monotonicity of $u_{\theta_\ell}^\xa(\cdot, i^\beta)$, we have $c_{i^\beta}^\beta < h_{i^\beta} = \max\left\{ 0, \ \frac{ \xi \cdot \hat{u}(\theta_\ell) - p_{i^\beta}^\xd } { r_{i^\beta}^\xd - p_{i^\beta}^\xd}, \  \max_{\theta\in\{\theta_1,\dots,\theta_{\ell-1}\}} \frac{ u_{\theta}^\xa( \pi(\theta) ) - r_{i^\beta}^{\theta} } { p_{i^\beta}^{\theta} - r_{i^\beta}^{\theta}} \right\}$.
Since $c_{i^\beta}^\beta \ge 0$, we have $h_{i^\beta} > 0$, so either
(i) $c_{i^\beta}^\beta < h_{i^\beta} =  \frac{ \xi \cdot \hat{u}(\theta_\ell) - p_{i^\beta}^\xd } { r_{i^\beta}^\xd - p_{i^\beta}^\xd}$,
or (ii) $ c_{i^\beta}^\beta < h_{i^\beta} = \frac{ u_{\theta}^\xa( \pi(\theta) ) - r_{i^\beta}^{\theta} } { p_{i^\beta}^{\theta} - r_{i^\beta}^{\theta}}$ for some $\theta\in\{\theta_1,\dots,\theta_{\ell-1}\}$.
We show that both cases lead to contradictions.

\paragraph{Case (i).}
$c_{i^\beta}^\beta <  \frac{ \xi \cdot \hat{u}(\theta_\ell) - p_{i^\beta}^\xd } { r_{i^\beta}^\xd - p_{i^\beta}^\xd}$.
It follows by monotonicity of $u^\xd(\cdot, i^\beta)$, that $u^\xd(\pi(\beta)) = u^\xd(\cc^\beta, i^\beta) < u^\xd \left(\frac{ \xi \cdot \hat{u}(\theta_\ell) - p_{i^\beta}^\xd } { r_{i^\beta}^\xd - p_{i^\beta}^\xd}, i^\beta \right) = \xi \cdot \hat{u}(\theta_\ell)$; thus, $\EoP(\pi) \le \eop_{\theta_\ell}(\pi) = \frac{u^\xd(\pi(\beta))} {\hat{u}(\theta_\ell)} < \xi$, which contradicts the assumption that $\pi$ is a satisfying policy.

\paragraph{Case (ii).}
$ c_{i^\beta}^\beta < \frac{ u_{\theta}^\xa( \pi(\theta) ) - r_{i^\beta}^{\theta} } { p_{i^\beta}^{\theta} - r_{i^\beta}^{\theta}}$ for some $\theta\in\{\theta_1,\dots,\theta_{\ell-1}\}$.
It follows by monotonicity (decreasing) of $u^\xa(\cdot, i^\beta)$, that $u_\theta^\xa(\pi(\beta)) = u_\theta^\xa(\cc^\beta,i^\beta) > u_\theta^\xa\left( \frac{ u_{\theta}^\xa( \pi(\theta) ) - r_{i^\beta}^{\theta} } { p_{i^\beta}^{\theta} - r_{i^\beta}^{\theta}}, i^\beta \right) = u_{\theta}^\xa( \pi(\theta) )$, so a type-$\theta$ attacker would be strictly better-off reporting type $\beta$ in response to $\pi$, contradicting the assumption that $\pi$ is $(\ell-1)$-compatible.
\end{proof}

\paragraph{\emph{Claim 2.}}
$h_t \le \hat{c}_t^{\theta_\ell}$; and hence, $z_t = \min \{\hat{c}_t^{\theta_\ell}, h_t \} = h_t$.

\begin{proof}[Proof of Claim 2.]
By definition, $t\in \BR_{\theta_\ell}(\hh)$, so
\begin{equation}
\label{eq:theta-ell-opt-report:claim2-1}
u_{\theta_\ell}^\xa(\hh, t) = \max_{i \in T} u_{\theta_\ell}^\xa(\hh, i) \ge u_{\theta_\ell}^\xa(\hh, i^\beta).
\end{equation}
Since $\beta$ is the optimal report of a type-$\theta_\ell$ attacker, we have $u_{\theta_\ell}^\xa(\pi(\beta)) \ge u_{\theta_\ell}^\xa(\pi(\theta_\ell))$;
further, by Lemma~\ref{lmm:report-truthfully}, $u_{\theta_\ell}^\xa(\pi(\theta_\ell)) \ge u_{\theta_\ell}^\xa(\hat{\cc}^{\theta_\ell}, \hat{i}^{\theta_\ell})$; thus,
\begin{equation}
\label{eq:theta-ell-opt-report:claim2-2}
u_{\theta_\ell}^\xa(\pi(\beta)) \ge u_{\theta_\ell}^\xa(\pi(\theta_\ell)) \ge u_{\theta_\ell}^\xa(\hat{\cc}^{\theta_\ell}, \hat{i}^{\theta_\ell}).
\end{equation}
Combining \eqref{eq:theta-ell-opt-report:claim2-1}, Claim 1, and \eqref{eq:theta-ell-opt-report:claim2-2} gives:
\begin{equation}
\label{eq:theta-ell-opt-report:claim2-3}
u_{\theta_\ell}^\xa(\hh, t)
\ge u_{\theta_\ell}^\xa(\hh, i^\beta)
\ge u_{\theta_\ell}^\xa(\pi(\beta))
\ge u_{\theta_\ell}^\xa(\hat{\cc}^{\theta_\ell}, \hat{i}^{\theta_\ell}).
\end{equation}
It follows that
$u_{\theta_\ell}^\xa(\hh, t)
\ge u_{\theta_\ell}^\xa(\hat{\cc}^{\theta_\ell}, \hat{i})
= \max_{i\in T} u_{\theta_\ell}^\xa(\hat{\cc}^{\theta_\ell}, i)
\ge u_{\theta_\ell}^\xa(\hat{\cc}^{\theta_\ell}, t)$.
By monotonicity of $u_{\theta_\ell}^\xa(\cdot, t)$, we have $h_t \le \hat{c}_t^{\theta_\ell}$.
\end{proof}

Next, we show the following parts to complete this proof:
(i)~$(\zz, t)$ is indeed feasible as an outcome prescribed for report $\theta_\ell$, i.e., $\zz \in \mathcal{C}$ and $t\in \BR_{\theta_\ell} (\zz)$;
(ii)~$\tilde{\pi}$ is $\ell$-compatible;
(iii)~$\EoP(\tilde{\pi}) \ge \xi$.

\paragraph{Part (i).}

Since $(\hat{\cc}^{\theta_\ell}, i^{\theta_\ell})$ is an SSE, by definition, $\hat{\cc}^{\theta_\ell} \in \mathcal{C}$ and $\sum_{i\in T} \hat{c}_i^{\theta_\ell} \le m$.
Since $z_i = \min\{\hat{c}_i^{\theta_\ell}, h_i\} \le \hat{c}_i^{\theta_\ell}$ for all $i\in T$, we have $0\le z_i \le 1$ and $\sum_{i\in T}z_i \le \sum_{i\in T} \hat{c}_i^{\theta_\ell} \le m$. Thus, $\zz \in \mathcal{C}$.

To see that $t\in \BR_{\theta_\ell} (\zz)$, suppose towards a contradiction that it does not hold.
Thus, $u_{\theta_\ell}^\xa (\zz, i^*) > u_{\theta_\ell}^\xa (\zz, t)$ for some $i^* \in T$.
By Claim 2, $z_t = h_t$, so we have
\begin{align*}
u_{\theta_\ell}^\xa (\zz, i^*) > u_{\theta_\ell}^\xa (\zz, t) =  u_{\theta_\ell}^\xa (\hh, t) = \max_{i \in T} u_{\theta_\ell}^\xa(\hh, i) \ge u_{\theta_\ell}^\xa(\hh, i^*),
\end{align*}
which implies that $z_{i^*} < h_{i^*}$ by monotonicity of $u_{\theta_\ell}^\xa(\cdot, i^*)$. Hence, $z_{i^*} = \min \{ \hat{c}_{i^*}^{\theta_\ell}, h_{i^*} \} = \hat{c}_{i^*}^{\theta_\ell}$, and
\begin{align*}
u_{\theta_\ell}^\xa(\hat{\cc}^{\theta_\ell}, \hat{i})
= \max_{i \in T} u_{\theta_\ell}^\xa(\hat{\cc}^{\theta_\ell}, i)
\ge u_{\theta_\ell}^\xa(\hat{\cc}^{\theta_\ell}, i^*)
= u_{\theta_\ell}^\xa(\zz, i^*)
> u_{\theta_\ell}^\xa(\zz, t).
\end{align*}
This leads to the following contradiction:
\begin{align*}
  u_{\theta_\ell}^\xa(\zz, t)
= u_{\theta_\ell}^\xa(\hh, t)
\ge u_{\theta_\ell}^\xa(\hat{\cc}^{\theta_\ell}, \hat{i})
> u_{\theta_\ell}^\xa(\zz, t),
\end{align*}
where the first two (in)equalities follow by Claim 2 and \ref{eq:theta-ell-opt-report:claim2-3}, respectively.

\paragraph{Part (ii).}

We show that it is optimal for every type-$\theta$ attacker, $\theta \in \{\theta_1,\dots,\theta_\ell\}$, to report truthfully in response to $\tilde{\pi}$.

First we consider the case for a type-$\theta_\ell$ attacker and show that $u_{\theta_\ell}^\xa (\tilde{\pi}(\theta_\ell)) \ge u_{\theta_\ell}^\xa (\tilde{\pi}(\beta'))$ for all $\beta'\in\Theta\setminus\{\theta_\ell\}$.
Observe the following:
\begin{align*}
u_{\theta_\ell}^\xa (\zz, t)
= u_{\theta_\ell}^\xa (\hh, t)
\ge u_{\theta_\ell}^\xa(\hh, i^\beta)
\ge u_{\theta_\ell}^\xa (\pi(\beta))
\ge u_{\theta_\ell}^\xa (\pi(\beta')),
\end{align*}
where the first three (in)equalities follow by Claim 2, \eqref{eq:theta-ell-opt-report:claim2-1}, and Claim 1, respectively; and the last is due to the assumption that $\beta$ is the optimal reporting strategy of a type-$\theta_\ell$ attacker.
By definition, $\tilde{\pi}(\theta_\ell) = (\zz, t)$ and $\tilde{\pi}(\beta') = \pi(\beta')$ for all $\beta' \in \Theta\setminus\{\theta_\ell\}$, so $u_{\theta_\ell}^\xa (\tilde{\pi}(\theta_\ell)) = u_{\theta_\ell}^\xa (\zz, t)$ and $u_{\theta_\ell}^\xa (\tilde{\pi}(\beta')) = u_{\theta_\ell}^\xa (\pi(\beta'))$.
It follows that
\begin{align*}
u_{\theta_\ell}^\xa (\tilde{\pi}(\theta_\ell))
= u_{\theta_\ell}^\xa (\zz, t)
\ge u_{\theta_\ell}^\xa (\pi(\beta'))
= u_{\theta_\ell}^\xa (\tilde{\pi}(\beta')),
\end{align*}
which is the desired result.

Next, consider the case for each type $\theta\in\{\theta_1, \dots, \theta_{\ell-1} \}$.
By Claim 2, we have $z_t = h_t \ge \max_{\theta\in\{\theta_1,\dots,\theta_{\ell-1}\}} \frac{ u_{\theta}^\xa( \pi(\theta) ) - r_t^{\theta} } { p_t^{\theta} - r_t^{\theta}}$.
Thus, we have
\begin{align*}
u_\theta^\xa(\tilde{\pi}(\theta))
= u_\theta^\xa(\pi(\theta))
= u_\theta^\xa \left( \frac{ u_{\theta}^\xa( \pi(\theta) ) - r_t^{\theta} } { p_t^{\theta} - r_t^{\theta}}, t \right)
\ge u_{\theta}^\xa( \zz, t )
= u_{\theta}^\xa( \tilde{\pi}(\theta_\ell)).
\end{align*}
Since $\pi$ is $(\ell-1)$-compatible, $u_\theta^\xa(\pi(\theta)) \ge u_\theta^\xa(\pi(\beta'))$ for all $\beta' \in \Theta$, so for those $\beta' \neq \theta_\ell$ we have
\begin{align*}
u_\theta^\xa(\tilde{\pi}(\theta)) =
u_\theta^\xa(\pi(\theta))
\ge u_\theta^\xa(\pi(\beta'))
= u_\theta^\xa(\tilde{\pi}(\beta')).
\end{align*}
Therefore, $u_\theta^\xa(\tilde{\pi}(\theta)) \ge u_\theta^\xa(\tilde{\pi}(\beta'))$ holds for all $\beta'\in \Theta$; it is optimal for a type-$\theta$ attacker to report truthfully.

\paragraph{Part (iii).}

We show that $\eop_{\theta}(\tilde{\pi}) \ge \xi$ for every type $\theta\in\Theta$, which will imply $\EoP(\tilde{\pi}) = \min_{\theta \in \Theta} \eop_\theta \ge \xi$ and complete the proof.

Since we have shown that $\tilde{\pi}$ is $\ell$-compatible, truthful report is incentivized for every type in $\{\theta_1, \dots, \theta_\ell\}$.
Thus, for every $\theta \in \{\theta_1, \dots, \theta_\ell\}$, we have  $\eop_{\theta} (\tilde{\pi}) \ge \frac{u^\xd(\tilde{\pi}(\theta))}{\hat{u}(\theta)}$. (The reason we have an inequality here is due to the optimistic tie-breaking assumption in Definition~\ref{def:eop}.)
For type $\theta_\ell$, since $z_t = h_t \ge \frac{ \xi \cdot \hat{u}(\theta_\ell) - p_t^\xd } { r_t^\xd - p_t^\xd}$ by Claim 2, we have
\begin{align*}
\eop_{\theta_\ell} (\tilde{\pi}) \ge \frac{u^\xd(\tilde{\pi}(\theta_{\ell}))}{\hat{u}(\theta_\ell)} =  \frac{u^\xd(\tilde{\pi}(\zz, t))}{\hat{u}(\theta_\ell)}
\ge \frac{u^\xd\left(\frac{ \xi \cdot \hat{u}(\theta_\ell) - p_t^\xd } { r_t^\xd - p_t^\xd}, t\right)}{\hat{u}(\theta_\ell)} = \frac{\xi \cdot \hat{u}(\theta_\ell)}{\hat{u}(\theta_\ell)} = \xi.
\end{align*}

For types $\theta \in \{\theta_1, \dots, \theta_{\ell-1}\}$, we have
\begin{align*}
\eop_\theta(\tilde{\pi}) \ge \frac{u^\xd(\tilde{\pi}(\theta))}{\hat{u}(\theta)} = \frac{u^\xd(\pi(\theta))}{\hat{u}(\theta)} = \eop_\theta(\pi) \ge \EoP(\pi) \ge \xi.
\end{align*}

For the other types $\theta' \in \{ \theta_{\ell+1}, \dots,  \theta_\lambda \}$, if their optimal report remains the same as under $\pi$, for the same argument above, $\eop_{\theta'}(\tilde{\pi}) = \eop_{\theta'}(\pi) \ge \EoP(\pi) \ge \xi$.
Otherwise, since $\tilde{\pi}$ and $\pi$ differs only in the outcomes prescribed for type $\theta_\ell$, if the attacker's optimal reporting strategy changes under $\tilde{\pi}$, it will only change to $\theta_\ell$, in which case we have
\begin{align*}
\eop_{\theta'} (\tilde{\pi}) = \frac{u^\xd(\tilde{\pi}(\theta_{\ell}))}{\hat{u}(\theta')} \ge \frac{u^\xd(\tilde{\pi}(\theta_{\ell}))}{\hat{u}(\theta_\ell)} = \eop_{\theta_\ell} (\tilde{\pi}) \ge \xi,
\end{align*}
where the first inequality holds because Algorithm~\ref{alg:optimal-policy-finite} orders attacker types in a way such that $\hat{u}(\theta') \le \hat{u}(\theta_\ell)$ for all $\theta' \in \{ \theta_{\ell+1}, \dots,  \theta_\lambda \}$.
\end{proof}

\section{Complexity of Computing Optimal Policy in \emph{General} Stackelberg Games}
\label{sec:complexity-optimal-leader-policy}

\newcommand{\fone}{1^{\textnormal{F}}} 
\newcommand{\ftwo}{2^{\textnormal{F}}} 
\newcommand{\fthr}{3^{\textnormal{F}}} 
\newcommand{\ffou}{4^{\textnormal{F}}} 

We show the complexity of computing the optimal leader policy in \emph{general} Stackelberg games where payoff parameters of each player (or player type) are given by a matrix, with no restriction on the values.
We let $u^\xl \in \mathbb{R}^{m \times n}$ denote the leader's payoff matrix, and $u_\theta^\xf \in \mathbb{R}^{m \times n}$ denote a type-$\theta$ follower's payoff matrix for each follower type $\theta \in \Theta$, where $m$ and $n$ denote the numbers of the leader's and the follower's actions (i.e., pure strategies), respectively.
The entries $u^\xd(i,j)$ and $u_\theta^\xa(i, j)$ are, respectively, the utilities of the leader and a type-$\theta$ follower, when the leader plays her $i$-th action and the follower plays his $j$-th action.
In an SSE, the leader plays a mixed strategy $\xx \in \Delta_m$ and the follower best responds to $\xx$ with a pure strategy $j$, yielding leader utility $u^\xl(\xx, j) = \sum_{i = 1}^m x_i \cdot u^\xl(i, j)$ and follower utility $u^\xf(\xx, j) = \sum_{i = 1}^m x_i \cdot u^\xf(i, j)$. All other definitions and notation are the same as in Section~\ref{sec:model}.
In contrast to the tractability of computing the optimal defender policy in an SSG, the problem is hard in general Stackelberg games.

\begin{Theorem}
It is NP-complete to decide whether there exists a leader policy $\pi$ with $\EoP(\pi) \ge \xi$.
\end{Theorem}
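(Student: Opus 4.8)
The plan is to show both membership in NP and NP-hardness.

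For membership, I would take the policy itself as the certificate. Since $\Theta$ is finite, a policy is just a list $\pi = (\cc^\theta, i^\theta)_{\theta\in\Theta}$ of $|\Theta|$ outcomes, each consisting of a leader mixed strategy $\cc^\theta \in \Delta_m$ and a follower action $i^\theta$. Given such a list I can verify in polynomial time that (i) every outcome is feasible, i.e.\ $i^\theta \in \BR_\theta(\cc^\theta)$, which is a system of linear inequalities; (ii) for each true type $\theta$ the attacker's optimal report $\beta_\theta^\pi$ can be found by evaluating $u_\theta^\xa(\pi(\beta))$ over the $|\Theta|$ reports and breaking ties in the defender's favour; and (iii) $u^\xd(\pi(\beta_\theta^\pi)) \ge \xi\cdot\hat u(\theta)$, where $\hat u(\theta)$ is computed by the polynomial-time SSE algorithm. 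The only subtlety is bounding the bit-size of the certificate: once the report pattern (which type reports which) and the best-response action of each outcome are fixed, all remaining constraints are linear in the $\cc^\theta$, so a satisfying policy, if one exists, can be taken to be a vertex of the resulting polytope and hence has polynomially many bits. This places the problem in NP.

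For hardness I would reduce from \textsc{3-Sat}. Given a formula with variables $x_1,\dots,x_r$ and clauses $C_1,\dots,C_s$, the idea is to let the \emph{choice of the policy} encode a truth assignment and to let the \emph{reporting game} verify the clauses. Concretely, I would introduce one \emph{variable type} $\theta_i$ per variable and one \emph{clause type} $\psi_j$ per clause, together with leader actions corresponding to the literals. The leader payoffs are designed so that the EoP constraint on each variable type $\theta_i$ can be met only by one of exactly two feasible outcomes, a ``true'' outcome and a ``false'' outcome; thus any policy achieving the target EoP commits to a consistent Boolean assignment. The follower payoffs of a clause type $\psi_j$ are then set so that $\psi_j$'s attacker-optimal report routes it to the outcome the designer chose for one of the three variables occurring in $C_j$, and so that the defender utility the clause type collects there reaches $\xi\cdot\hat u(\psi_j)$ \emph{iff} the grabbed outcome corresponds to a literal that satisfies $C_j$. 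With the threshold $\xi$ and the payoff scales calibrated accordingly, a policy with $\EoP(\pi)\ge\xi$ exists if and only if some assignment satisfies every clause.

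The main obstacle is the gadget design, and in particular making the attacker's optimal-reporting behaviour simulate clause evaluation \emph{exactly}. Two things must be pinned down simultaneously: the variable-type EoP constraints must genuinely force a clean binary choice (so that no ``fractional'' or mixed outcome lets the designer cheat a clause), and the clause-type follower payoffs must make the utility-maximising report deterministic and land precisely on the literal-outcomes of that clause, with the defender-side value straddling the threshold $\xi\cdot\hat u(\psi_j)$ only in the satisfying case. Care is also needed with the tie-breaking convention of Definition~\ref{def:eop} (ties resolved in the defender's favour), which I would neutralise by perturbing payoffs by infinitesimal amounts so that every relevant optimal report is strict. Verifying the two directions of correctness --- that a satisfying assignment yields a policy with $\EoP\ge\xi$, and conversely that any such policy decodes to a satisfying assignment --- is then a matter of checking the gadget inequalities, which is routine once the construction is fixed.
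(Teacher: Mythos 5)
Your NP-membership argument is fine and matches the paper's (the paper simply notes that a given policy can be verified efficiently; your extra care about certificate bit-size via vertices of the feasibility polytope is a reasonable refinement). The problem is the hardness half: what you give is a blueprint, not a reduction. You never specify the payoff matrices, the threshold $\xi$, or the SSE utilities $\hat u(\cdot)$ of your gadget types, and you explicitly defer ``the gadget design'' --- but the gadget \emph{is} the proof. In particular, your sketch does not address the central obstacle of this problem: the EoP constraint for a type binds at the outcome of that type's \emph{optimal report}, not at its own entry of the policy, so nothing in your outline prevents the defender from simply assigning each clause type $\psi_j$ its own dedicated outcome meeting $\xi\cdot\hat u(\psi_j)$, independently of any ``assignment'' encoded by the variable outcomes. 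For the reduction to work you must design payoffs so that \emph{every} outcome that is simultaneously attacker-optimal for $\psi_j$ and good enough for the defender is forced to coincide with a literal outcome of $C_j$; asserting that this can be calibrated is precisely the claim that needs proof. The same issue afflicts your variable gadget: you assert the EoP constraint on $\theta_i$ ``can be met only by one of exactly two feasible outcomes,'' but since $\theta_i$ may misreport, constraining $\pi(\theta_i)$ itself requires an argument you do not supply.

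For contrast, the paper reduces from \textsc{Vertex Cover} and resolves exactly this difficulty by an interlocking construction: a special type $\theta_0$ whose EoP-$1$ requirement forces its optimal report onto an outcome $(\xx,\ftwo)$ with $x_{a_0}=0$; this outcome then guarantees every edge type $\theta_e$ a deviation payoff of $0.9$, which combined with the requirement that the leader get utility $1$ on $\theta_e$ (its unique SSE gives $\hat u(\theta_e)=1$ via the pure strategy $a_0$) forces $\theta_e$'s optimal report onto one of the $k$ slot types $\theta_\ell$, whose strategies must be supported entirely on $\{a_{v_1},a_{v_2}\}$ --- so the $k$ slot outcomes decode to a vertex cover. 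Notice that both directions hinge on quantitative facts about the gadget (e.g., that $\fone$ yields a $\theta_0$-follower utility $0$, that $j^\gamma=\fone$ plus utility $\ge 0.9$ forces $x_{a_{v_1}}^\gamma+x_{a_{v_2}}^\gamma=1$). Your proposal contains no analogous mechanism, and ``checking the gadget inequalities\ldots once the construction is fixed'' is not routine here --- it is where all the work lies. As it stands, the hardness direction is a genuine gap.
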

\begin{proof}
The NP membership of the problem is straightforward as for any given policy $\pi$, we can efficiently verify whether $\EoP(\pi) \ge \xi$.
For the NP-hardness, we show a reduction from the \textsc{Vertex Cover} problem, which is well-known to be NP-complete.
A \emph{vertex cover} $V'$ of an undirected graph $G = (V, E)$ is a subset of $V$ such that  $v_1 \in V'$ or $v_2 \in V'$ for every edge $\{v_1,v_2\} \in E$.
An instance of the \textsc{Vertex Cover} problem is given by a graph $G = (V, E)$ and an integer $k \le |V|$. It is a yes-instance if there exists a vertex cover of $G$ of size at most $k$.

For a \textsc{Vertex Cover} instance, we construct the following game and show that the \textsc{Vertex Cover} instance is a yes-instance if and only if there exists a leader policy $\pi$, $\EoP(\pi) \ge 1$.
In the game, the leader has $|V| + 1$ actions $\{a_v: v \in V\} \cup \{a_0\}$.
The follower has three actions $\{\fone, \ftwo, \fthr\}$.
The set of possible follower types is $\Theta = \{\theta_1, \dots, \theta_k\} \cup \{\theta_e : e \in E \} \cup \{\theta_0\}$.
The payoffs are given in Figure~\ref{fig:payoffs-complexity-opteop-leader}.

\begin{figure}[t]
\renewcommand{\arraystretch}{1.2}
\newcolumntype{R}{>{$}r<{$}}
\newcolumntype{C}{>{$}c<{$}}
\begin{minipage}{1\linewidth}
\centering
	\begin{tabular}{r|C|C|C|}
		\multicolumn{1}{R}{}&	\multicolumn{1}{C}{\fone}	&  \multicolumn{1}{C}{\ftwo}  &  \multicolumn{1}{C}{\fthr} \\\cline{2-4}
		\rule[-0.5em]{0pt}{1.5em} $\forall i$ &	1	& 	0.5    &      \\\cline{2-4}
        \multicolumn{1}{r}{} & \multicolumn{3}{c}{\rule[-0.5em]{0pt}{2.5em} {leader}} \\
	\end{tabular}\\[7mm]
	\begin{tabular}{r|C|C|C|}
		\multicolumn{1}{R}{}&	\multicolumn{1}{C}{\fone}	&  \multicolumn{1}{C}{\ftwo}  &  \multicolumn{1}{C}{\fthr} \\\cline{2-4}
		\rule[-0.5em]{0pt}{1.5em} $\forall i \in \{a_{v_1}, a_{v_2}\}$	&	0.9	& 	0.9    & 1     \\\cline{2-4}
		\rule[-0.5em]{0pt}{1.5em} $a_0$	&	0.4	&     & 0.4     \\\cline{2-4}
		\rule[-0.5em]{0pt}{1.5em} \emph{otherwise} &		& 0.9	& 1     \\\cline{2-4}
		\multicolumn{1}{r}{} & \multicolumn{3}{c}{\rule[-0.5em]{0pt}{2.5em} {flw. type} $\theta_e$} \\[-2mm]
\multicolumn{1}{r}{} & \multicolumn{3}{c}{\rule[-0.5em]{0pt}{2em} \!\!\!($e = \{v_1, v_2\} \in E$)\!\!\!} \\
	\end{tabular}
\hspace{4mm}
	\begin{tabular}{r|C|C|C|}
        \multicolumn{1}{R}{}&	\multicolumn{1}{C}{\fone}	&  \multicolumn{1}{C}{\ftwo}  &  \multicolumn{1}{C}{\fthr} \\\cline{2-4}
		\rule[-0.5em]{0pt}{1.5em} $\forall i$ &	1	& 	    &      \\\cline{2-4}
        \multicolumn{4}{r}{\rule[-0.5em]{0pt}{2.5em}  \!\!{flw. type} $\theta_\ell$} \\[-2mm]
        \multicolumn{4}{r}{\rule[-0.5em]{0pt}{2em}   ($\ell = 1,\dots, k$)\!\!} \\
	\end{tabular}
\hspace{4mm}
	\begin{tabular}{r|C|C|C|}
		\multicolumn{1}{R}{}&	\multicolumn{1}{C}{\fone}	&  \multicolumn{1}{C}{\ftwo}  &  \multicolumn{1}{C}{\fthr} \\\cline{2-4}
		\rule[-0.5em]{0pt}{1.5em} $a_0$ &		& 	    &   1   \\\cline{2-4}
        \rule[-0.5em]{0pt}{1.5em} $\forall i\neq a_0$ &		& 	1    &  1    \\\cline{2-4}
        \multicolumn{1}{r}{} & \multicolumn{3}{c}{\rule[-0.5em]{0pt}{2.5em} {flw. type} $\theta_0$} \\
	\end{tabular}
\end{minipage}
\vspace{5mm}
\caption{Payoff parameters (blank entries are all $0$). \label{fig:payoffs-complexity-opteop-leader}}
\vspace{5mm}
\end{figure}

We first make several observations about the SSEs of this game (in the truthful situation). Below we let $\hat{u}^\xl(\theta)$ denote the leader's utility in an SSE when she plays against a type-$\theta$ follower, i.e., $\hat{u}^\xl(\theta) = \max_{\xx\in\Delta_m, j \in \BR(\xx)} u^\xl(\xx, j)$.

\begin{itemize}
\item The leader's utility only depends on the follower's action, with $\fthr$ being the most detrimental action the leader would anyhow avoid the follower to choose, followed by $\ftwo$, and $\fone$ is the most preferred follower action.

\item The \emph{only} SSE strategy of the leader when she plays against a type-$\theta_e$ follower, $e \in E$, is the pure strategy $a_0$. When $a_0$ is played, the follower finds his best responses to be $\BR_{\theta_s}(a_0) = \{\fthr, \fone\}$ and breaks the tie in favor of the leader, playing $\fone$; the leader obtains $u^\xl(a_0,\fone) = 1$, which is obviously the highest possible utility she can obtain; hence, $(a_0, \fone)$ forms an SSE and $\hat{u}^\xl(\theta_e) = 1$.
    To see that this is the only SSE, observe that if the leader plays any other pure strategy with some probability, the follower would strictly prefer $\fthr$ to $\fone$ and would not respond by playing $\fone$, in which case the leader cannot get utility $1$.

\item Every leader strategy is an SSE strategy when the follower she plays against has type $\theta_\ell$, $\ell = 1, \dots, k$, because the follower will always respond by playing $\fone$ irrespective of the strategy the leader plays, which always gives the leader utility $1$. We have $\hat{u}^\xl (\theta_\ell) = 1$.

\item Every mixed leader strategy over pure strategies $i \neq a_0$ is an SSE strategy of the leader when the follower she plays against has type $\theta_0$. When such a strategy is played, the follower finds his best response set to be $\{ \fthr, \ftwo \}$ and breaks the tie in favor of the leader, playing $\ftwo$; this is the best the leader can hope for because a type-$\theta_0$ follower will never play $\fone$ as it is strictly dominated by $\fthr$. We have $\hat{u}^\xl(\theta_0) = 0.5$.
\end{itemize}

Suppose that there exists a vertex cover $V' = \{v'_1, \dots, v'_k\} \subseteq V$ of size $k$. The following leader policy $\pi$ achieves EoP $1$.
\begin{equation*}
  \pi(\theta) = \begin{cases}
   \ (a_0,\ \ \fone), & \mbox{for each } \theta \in \{\theta_e: e \in E\};  \\
   \ (a_{v'_\ell},\, \fone), & \mbox{for each } \theta = \theta_\ell \in \{\theta_1, \dots, \theta_k\};\\
   \ (a_{v'_1},\, \ftwo), & \mbox{for } \theta = \theta_0.
  \end{cases}
\end{equation*}

Clearly, all the outcomes prescribed are feasible, so $\pi$ is a feasible policy.
Further, it can be verified that when the leader commits to $\pi$, the optimal reporting strategy of the follower is the following.
\begin{itemize}
\item For every type-$\theta_e$ follower, $e \in E$, it is optimal to report a type $\theta_\ell$ with $v'_{\ell} \in V'$ bing an end point of $e$. Such a $\theta_\ell$ always exists given that $V'$ is a vertex cover.
    The leader obtains utility $1$ when $\theta_\ell$ is reported, so $\eop_{\theta_e}(\pi) \ge \frac{1}{\hat{u}(\theta_e)} = 1$.

\item For every type-$\theta_\ell$ follower, $\ell \in \{1,\dots, k\}$, it is optimal to report truthfully. The leader obtains utility $1$ when $\theta-\ell$ is reported, so $\eop_{\theta_\ell}(\pi) \ge \frac{1}{\hat{u}(\theta_\ell)} = 1$.

\item For a type-$\theta_0$ follower, it is optimal to report truthfully. The leader obtains utility $0.5$ when $\theta_0$ is reported, so $\eop_{\theta_0}(\pi) \ge \frac{0.5}{\hat{u}(\theta_0)} = 1$.
\end{itemize}
Therefore, $\EoP(\pi) = \min_{\theta\in\Theta} \eop_\theta (\pi) = 1$.
(We have $\EoP(\pi) \le 1$ by Proposition~\ref{prp:eop-0-1}; in fact, $\eop_\theta (\pi)$ is also upper-bounded by $1$ for types $\theta_e$ and $\theta_\ell$ above.)

Conversely, suppose that there exists a policy $\pi$ with $\EoP(\pi) = 1$. We show that there exists a vertex cover of $G$ of size at most $k$.
For $\EoP(\pi) = 1$, we need $\eop_{\theta}(\pi) \ge 1$ for all $\theta \in \Theta$. Thus, the actual utility the leader obtains must be: at least $1$ on each $\theta_e$ and $\theta_\ell$ ($\ell \neq 0$), and at least $0.5$ on $\theta_0$.

Now consider the reporting strategy of a type-$\theta_0$ follower in response to $\pi$; let $\beta \in \Theta$ be the optimal reporting strategy of a type-$\theta_0$ follower, and let $\pi(\beta) = (\xx^\beta, j^\beta)$.
For the leader to obtain actual utility at least $0.5$ on type $\theta_0$, we need $j^\beta \in \{\fone, \ftwo\}$.
Observe that a type-$\theta_0$ follower gets utility $0$ if $j^\beta = \fone$, in which case he would be better-off reporting truthfully to avoid being induced to ``best'' respond by playing $\fone$.
Thus, the only possibility is $j^\beta = \ftwo$. Now that $j^\beta = \ftwo$, it must also be that $x_{a_0}^\beta =0$ since otherwise the follower obtains less than $1$ by reporting $\beta$ and would, again, be better-off reporting truthfully (in which case he is guaranteed utility $1$ by the best response $\fthr$).

Given this, a type-$\theta_e$ follower, $e = \{v_1, v_2\} \in E$, is able to obtain utility $0.9$ by reporting $\beta$.
Let $\gamma$ be the type a type-$\theta_e$ follower is incentivized to report, and $\pi(\gamma) = (\xx^\gamma, j^\gamma)$; we therefore have $u_{\theta_e}^\xf (\xx^\gamma, j^\gamma) \ge 0.9$.
For the leader to obtain utility at least $1$ on type $\theta_e$, we need $j^\gamma = \fone$, in which case $u_{\theta_e}^\xf (\xx^\gamma, j^\gamma) \ge 0.9$ only if $x_{a_{v_1}}^\gamma + x_{a_{v_2}}^\gamma = 1$ (i.e., only the first row of the payoff matrix of $\theta_e$ is chosen), so we have $x_{a_0}^\gamma = 0$.
Therefore, $\gamma \notin \{\theta_{e'}: e' \in E\}$, because this would lead to $\BR_{\gamma} (\xx^\gamma) = \{\fthr\} \not\ni \fone$ given that $x_{a_0}^\gamma = 0$. For the same reason, we also have $\gamma \neq \theta_0$, so the remaining possibility is that $\gamma = \theta_\ell$ for some $\ell \in \{1,\dots, k\}$.

Let $V^\gamma = \{v \in V: x_{a_v}^\gamma \ge 0\}$, and let $v'_\gamma$ be the first vertex in $V^\gamma$ in lexicographical order. Since $x_{a_{v_1}}^\gamma + x_{a_{v_2}}^\gamma = 1$, we have $v'_\gamma \in e$, and moreover, $\{v'_{\theta_\ell} : \ell = 1,\dots, k \} \cap e \neq \varnothing$ given that $\gamma = \theta_\ell$ for some $\ell$.
This holds for all $e \in E$. Thus, $V' = \{v'_{\theta_\ell} : \ell = 1,\dots, k \}$ forms a vertex cover of $G$, and $|V'| \le k$.
\end{proof}

\section{Additional Experiment Results}
\label{sec:experiments}

\paragraph{EoP Comparison}

Figures~\ref{fig:exp} and \ref{fig:exp-zs} show additional results of the EoP comparison.
In both figures, (a)--(c) show the variance of EoP with respect to $\rho$, with type sets of different scales ($\lambda=10$, $100$, and $1000$, respectively); (d)--(f) show the variance of the EoP with respect to the scale of the game, under different target-resource ratios ($\frac{n}{m} = 10$, $5$, and $2$, respectively). In Figure~\ref{fig:exp}, attacker types are generated with the covariance model, while in Figure~\ref{fig:exp-zs}, the zero attacker type is always included in $\Theta$ in addition to types generated by the covariance model.


\begin{NoHyper}

\setlength\figurewidth{.31\linewidth}
\setlength\figureheight{.28\linewidth}

\begin{figure*}[h!]
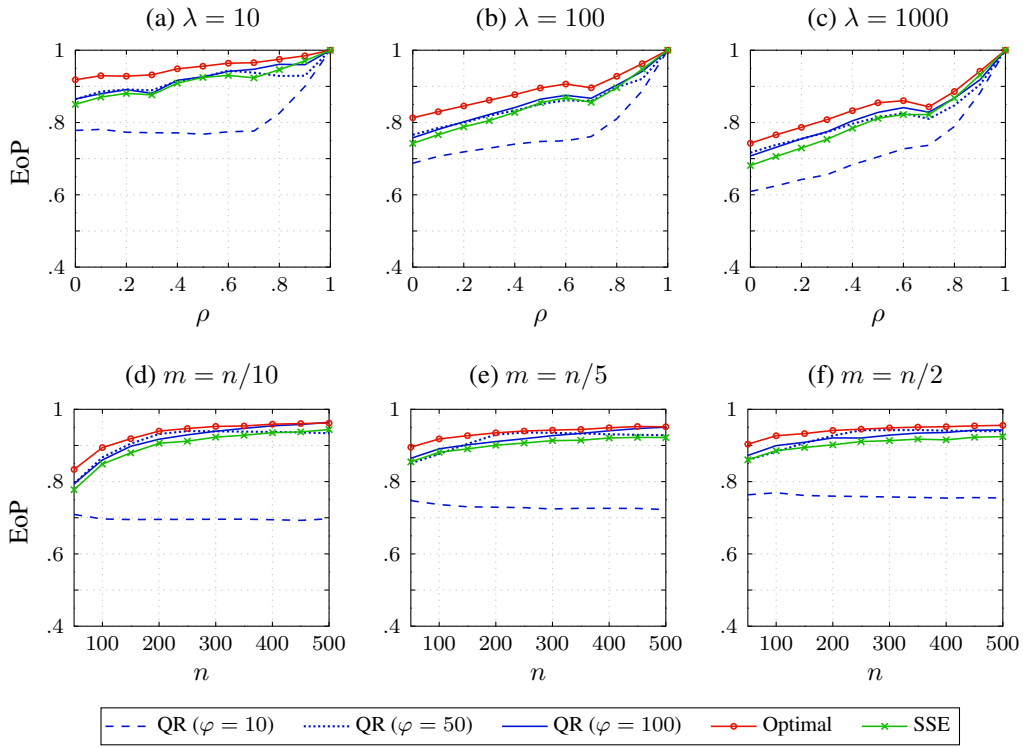

\setlength\seplength{-5.5mm}
\center
\vspace{-3mm}
\hspace{-7mm}	
\begin{rtplotxalpha}
    {
        title={(a) $\lambda=10$}, 
        xlabel={$\rho$},
		ylabel={EoP},
		legend entries={ {QR ($\varphi=10$)}, {QR ($\varphi=50$)}, {QR ($\varphi=100$)}, {Optimal}, {SSE}},%
		legend to name=leg:all,			
    }
	\addplot coordinates{	(0,0.778024657272621)	(0.1,0.781055526768951)	(0.2,0.773128161025272)	(0.3,0.771759279937722)	(0.4,0.771280413774591)	(0.5,0.767599830543514)	(0.6,0.774375818539108)	(0.7,0.776271728709195)	(0.8,0.825722837157204)	(0.9,0.899712396246112)	(1,1)	};	
	\addplot coordinates{	(0,0.863990510707119)	(0.1,0.885733443847897)	(0.2,0.891447289106082)	(0.3,0.889049454349457)	(0.4,0.914861951182597)	(0.5,0.925659508442838)	(0.6,0.943520142646922)	(0.7,0.937932622644178)	(0.8,0.928773593383869)	(0.9,0.929453845210516)	(1,1)	};	
	\addplot coordinates{	(0,0.865307926102572)	(0.1,0.879664019235857)	(0.2,0.890835834678289)	(0.3,0.880789901917584)	(0.4,0.916720266546766)	(0.5,0.926096175114145)	(0.6,0.941370297720141)	(0.7,0.947301197089337)	(0.8,0.961060771128436)	(0.9,0.95988581089943)	(1,1)	};	
	\addplot coordinates{	(0,0.918331778049469)	(0.1,0.929441648125648)	(0.2,0.928167497316996)	(0.3,0.931782747705778)	(0.4,0.948560963670413)	(0.5,0.955677501161893)	(0.6,0.963993707299233)	(0.7,0.965457899769147)	(0.8,0.974924901922544)	(0.9,0.984533465703328)	(1,0.999999940395355)	};	
	\addplot coordinates{	(0,0.850240695101198)	(0.1,0.870334374878525)	(0.2,0.880516401574017)	(0.3,0.876446192056285)	(0.4,0.908923038255042)	(0.5,0.924786679979072)	(0.6,0.930424082560071)	(0.7,0.923553777635868)	(0.8,0.946147536119729)	(0.9,0.970618290869074)	(1,1)	};	
	\end{rtplotxalpha}
	%
	\hspace{\seplength}
	\begin{rtplotxalpha}
    {
        title={(b) $\lambda=100$},
        xlabel={$\rho$},
    }
	\addplot coordinates{	(0,0.687378781887923)	(0.1,0.706488818621362)	(0.2,0.718761502935184)	(0.3,0.729117362739888)	(0.4,0.740219888942436)	(0.5,0.747490279353388)	(0.6,0.749440921679936)	(0.7,0.76116504187501)	(0.8,0.810178187790508)	(0.9,0.887752251334961)	(1,1)	};	
	\addplot coordinates{	(0,0.765951367996269)	(0.1,0.784975035267434)	(0.2,0.799560449657951)	(0.3,0.819045939354441)	(0.4,0.832765347607253)	(0.5,0.850858581444037)	(0.6,0.861419295785018)	(0.7,0.859070215835397)	(0.8,0.897451629951157)	(0.9,0.920799735937399)	(1,1)	};	
	\addplot coordinates{	(0,0.757649528911991)	(0.1,0.780944785682825)	(0.2,0.801993500462567)	(0.3,0.822114528624615)	(0.4,0.841395539545744)	(0.5,0.864305898248015)	(0.6,0.875191813540233)	(0.7,0.867119373037623)	(0.8,0.903497045642352)	(0.9,0.941371092571287)	(1,1)	};	
	\addplot coordinates{	(0,0.81304950316747)	(0.1,0.829848333994547)	(0.2,0.845712387164434)	(0.3,0.861740760207176)	(0.4,0.877096464633942)	(0.5,0.895386632084846)	(0.6,0.906455095211665)	(0.7,0.896044267813364)	(0.8,0.927881801923116)	(0.9,0.962797886133194)	(1,0.999999940395355)	};	
	\addplot coordinates{	(0,0.742282190708239)	(0.1,0.766370179024589)	(0.2,0.788010643784038)	(0.3,0.805327068312445)	(0.4,0.828002656755261)	(0.5,0.85459196975989)	(0.6,0.868408073699799)	(0.7,0.855851648693613)	(0.8,0.8963510184396)	(0.9,0.947716659098634)	(1,1)	};	
	\end{rtplotxalpha}
	\hspace{\seplength}
    \begin{rtplotxalpha}
    {
        title={(c) $\lambda=1000$},
        xlabel={$\rho$},
    }
	\addplot coordinates{	(0,0.609195754137164)	(0.1,0.62453383056992)	(0.2,0.642258688166019)	(0.3,0.655637132823061)	(0.4,0.683356516943075)	(0.5,0.704724638981591)	(0.6,0.727146936926017)	(0.7,0.737102511498349)	(0.8,0.788967195341096)	(0.9,0.881311932884085)	(1,1)	};	
	\addplot coordinates{	(0,0.716582509910273)	(0.1,0.738007350054114)	(0.2,0.754906987935867)	(0.3,0.773157794814305)	(0.4,0.796618896551838)	(0.5,0.814146891706887)	(0.6,0.825833270874381)	(0.7,0.810083116881651)	(0.8,0.847388966246211)	(0.9,0.904596337517138)	(1,1)	};	
	\addplot coordinates{	(0,0.707481825783281)	(0.1,0.731232651640567)	(0.2,0.754832467864084)	(0.3,0.77430798702065)	(0.4,0.804599431287185)	(0.5,0.827960451970288)	(0.6,0.841081617282589)	(0.7,0.828759445095251)	(0.8,0.867594751549226)	(0.9,0.918000653808283)	(1,1)	};	
	\addplot coordinates{	(0,0.742665002346039)	(0.1,0.765852299332619)	(0.2,0.786644830107689)	(0.3,0.807870045900345)	(0.4,0.832780959010124)	(0.5,0.854495849609375)	(0.6,0.860378220677376)	(0.7,0.843304345011711)	(0.8,0.885835029482841)	(0.9,0.941968647837639)	(1,0.999999940395355)	};	
	\addplot coordinates{	(0,0.681071150277645)	(0.1,0.705855704186728)	(0.2,0.729419263826496)	(0.3,0.75336029395304)	(0.4,0.784206474362925)	(0.5,0.811690823106402)	(0.6,0.822025566513488)	(0.7,0.821843179757392)	(0.8,0.867576054658392)	(0.9,0.931629864940408)	(1,1)	};	
	\end{rtplotxalpha}
	\\[2mm]
\hspace{-7mm}
\setlength\seplength{-4.2mm}
	\begin{rtplot}{title={(d) $m=n/10$}, 
xlabel={$n$},
ylabel={EoP},
}
	\addplot coordinates{	(50,0.709196728571317)	(100,0.696606447928819)	(150,0.695065612104676)	(200,0.695414940598254)	(250,0.695480320822578)	(300,0.695863627103997)	(350,0.696169147894079)	(400,0.694563028739894)	(450,0.692810916553728)	(500,0.697104600363532)		};	
	\addplot coordinates{	(50,0.794237636543632)	(100,0.867281506209709)	(150,0.905015859039054)	(200,0.931732377428234)	(250,0.939799365058926)	(300,0.939198278538063)	(350,0.938269886095)	(400,0.937528486114869)	(450,0.936078911001829)	(500,0.933927575580872)		};	
	\addplot coordinates{	(50,0.793986792551922)	(100,0.859958112421881)	(150,0.897744110749241)	(200,0.917204336406374)	(250,0.929288828195107)	(300,0.939293283506097)	(350,0.947274647476451)	(400,0.954368506764684)	(450,0.958112561608162)	(500,0.964055119120674)		};	
	\addplot coordinates{	(50,0.833596155473164)	(100,0.894090481996536)	(150,0.918743824958801)	(200,0.939722753763199)	(250,0.946894897222519)	(300,0.95267250418663)	(350,0.953906193971634)	(400,0.958840473890304)	(450,0.96057543516159)	(500,0.96192633152008)		};	
	\addplot coordinates{	(50,0.777661515804653)	(100,0.84861954950965)	(150,0.879497971620208)	(200,0.906052140210664)	(250,0.911820795314871)	(300,0.923065748769331)	(350,0.928226854470435)	(400,0.93543489465244)	(450,0.937535129043924)	(500,0.943957834500826)		};	
	\end{rtplot}
    \hspace{\seplength}
	\begin{rtplot}{title={(e) $m=n/5$},
xlabel={$n$},
}
	\addplot coordinates{	(50,0.747298248400765)	(100,0.73656288076387)	(150,0.730260810139295)	(200,0.729223205912855)	(250,0.727969909263494)	(300,0.724312891223111)	(350,0.726208613971219)	(400,0.726019252563704)	(450,0.72589432241031)	(500,0.722463823972416)		};	
	\addplot coordinates{	(50,0.850621441547878)	(100,0.876508537427778)	(150,0.904350110169679)	(200,0.930111450794455)	(250,0.936010490951409)	(300,0.934850681312129)	(350,0.933086458805098)	(400,0.930891125759197)	(450,0.929374174801427)	(500,0.928212817403222)		};	
	\addplot coordinates{	(50,0.864483598098201)	(100,0.890703281277673)	(150,0.901681770401618)	(200,0.910735338252883)	(250,0.918670751586879)	(300,0.927146408344923)	(350,0.933397836029766)	(400,0.940560006946029)	(450,0.946801562342032)	(500,0.950589846811421)		};	
	\addplot coordinates{	(50,0.89525712949889)	(100,0.918051744699478)	(150,0.927085968255997)	(200,0.934903141260147)	(250,0.939955327510834)	(300,0.942471233606338)	(350,0.944275958538055)	(400,0.948939366340637)	(450,0.952424833774567)	(500,0.951631376743317)		};	
	\addplot coordinates{	(50,0.854620531940889)	(100,0.881234128876898)	(150,0.89055450469061)	(200,0.900654645763403)	(250,0.906737343170658)	(300,0.913416070712142)	(350,0.914382010644971)	(400,0.920686534463909)	(450,0.922359703388188)	(500,0.921970272264824)		};	
	\end{rtplot}
    \hspace{\seplength}
	\begin{rtplot}{title={(f) $m=n/2$},
xlabel={$n$},
}
	\addplot coordinates{	(50,0.763563722328605)	(100,0.769102782078368)	(150,0.761639527655874)	(200,0.759769508164529)	(250,0.758817929920712)	(300,0.75750823126048)	(350,0.756324924565992)	(400,0.754463630568544)	(450,0.755800757801352)	(500,0.754629746157049)		};	
	\addplot coordinates{	(50,0.85969652504325)	(100,0.882919800070834)	(150,0.906953586262604)	(200,0.927152841201764)	(250,0.942194780493869)	(300,0.942317014927482)	(350,0.942703455189084)	(400,0.94093985925326)	(450,0.939682034835278)	(500,0.939314801546249)		};	
	\addplot coordinates{	(50,0.872287688260127)	(100,0.899519266702851)	(150,0.908832223512663)	(200,0.92092512632292)	(250,0.920608825405666)	(300,0.928604640581763)	(350,0.934208880057608)	(400,0.935848986206473)	(450,0.94229854707207)	(500,0.942867865359734)		};	
	\addplot coordinates{	(50,0.903607686247144)	(100,0.92688243150711)	(150,0.93293123960495)	(200,0.941535325050354)	(250,0.945056433677673)	(300,0.948553696870804)	(350,0.95077831864357)	(400,0.951924932003021)	(450,0.954195698499679)	(500,0.955784804821014)		};	
	\addplot coordinates{	(50,0.86043884895575)	(100,0.885409290688523)	(150,0.894444843063984)	(200,0.901770241706107)	(250,0.910853122562561)	(300,0.913265768175919)	(350,0.917499318650615)	(400,0.915529148541783)	(450,0.922745463102492)	(500,0.92455483391523)		};	
    \end{rtplot}
\\[2mm]
\ref{leg:all}\\[3mm]
	\caption{Comparison of the EoP. Other parameters are set to $n=50$, $m=10$ in (a)--(c); and $\rho = 0.5$, $\lambda=100$ in (d)--(f).}\label{fig:exp}
\end{figure*}

\begin{figure*}[h!]
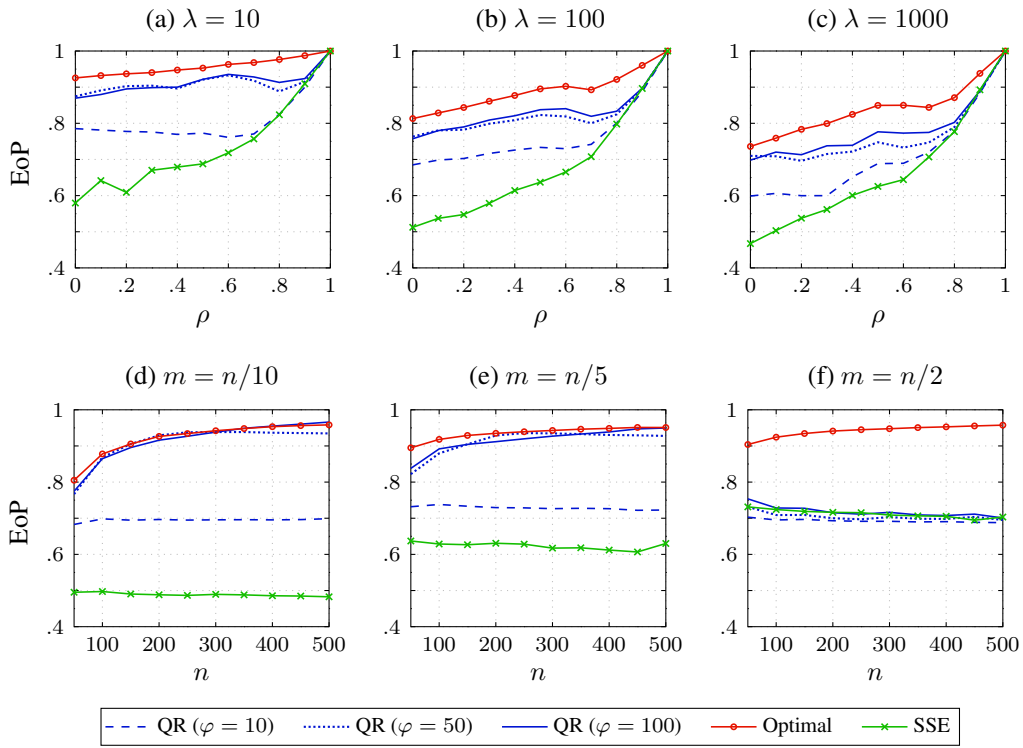

\setlength\seplength{-5.5mm}
\center
\vspace{4mm}
\hspace{-7mm}	
\begin{rtplotxalpha}
    {
        title={(a) $\lambda=10$}, 
        xlabel={$\rho$},
		ylabel={EoP},		
    }
	\addplot coordinates{	(0,0.785309727588719)	(0.1,0.781314005919677)	(0.2,0.77753367701746)	(0.3,0.775904711602904)	(0.4,0.769352429517261)	(0.5,0.772813335904751)	(0.6,0.76106509462211)	(0.7,0.770589277666213)	(0.8,0.825183821178054)	(0.9,0.899986111903812)	(1,1)	};	
	\addplot coordinates{	(0,0.874668173615254)	(0.1,0.891025492202454)	(0.2,0.902814757618159)	(0.3,0.904126974260496)	(0.4,0.896308263349901)	(0.5,0.920795910610774)	(0.6,0.932855428567116)	(0.7,0.918113593833544)	(0.8,0.88826182519566)	(0.9,0.915235509095762)	(1,1)	};	
	\addplot coordinates{	(0,0.869567630641134)	(0.1,0.879687955688754)	(0.2,0.895070493381631)	(0.3,0.898632763914373)	(0.4,0.900090255378425)	(0.5,0.921592007018544)	(0.6,0.93542331763516)	(0.7,0.928239686348036)	(0.8,0.913152790089355)	(0.9,0.923735300782853)	(1,1)	};	
	\addplot coordinates{	(0,0.925553344935179)	(0.1,0.932071929723024)	(0.2,0.936856284588575)	(0.3,0.940324399322271)	(0.4,0.947395395934582)	(0.5,0.952830377370119)	(0.6,0.962955097258091)	(0.7,0.968043201863766)	(0.8,0.976533151119947)	(0.9,0.98729897916317)	(1,0.999999940395355)	};	
	\addplot coordinates{	(0,0.579462040287053)	(0.1,0.641547430039893)	(0.2,0.609038633113625)	(0.3,0.670139854445749)	(0.4,0.678943070356909)	(0.5,0.687766047037196)	(0.6,0.718427241511534)	(0.7,0.756812289647139)	(0.8,0.823602114508)	(0.9,0.909480021949018)	(1,1)	};	
	\end{rtplotxalpha}
	%
	\hspace{\seplength}
	\begin{rtplotxalpha}
    {
        title={(b) $\lambda=100$},
        xlabel={$\rho$},
    }
	\addplot coordinates{	(0,0.685034485323082)	(0.1,0.698379025870598)	(0.2,0.702712397853521)	(0.3,0.716634788145236)	(0.4,0.725936768072865)	(0.5,0.733365932480213)	(0.6,0.729771896214634)	(0.7,0.741800740042775)	(0.8,0.800576318434387)	(0.9,0.887785225833828)	(1,1)	};	
	\addplot coordinates{	(0,0.763263610682397)	(0.1,0.780436803114389)	(0.2,0.782360573434314)	(0.3,0.7988112824352)	(0.4,0.80872979324325)	(0.5,0.823088660732965)	(0.6,0.818875287128862)	(0.7,0.800118815415026)	(0.8,0.824775846532029)	(0.9,0.895739006038191)	(1,1)	};	
	\addplot coordinates{	(0,0.757280015904533)	(0.1,0.779758811089474)	(0.2,0.790026553897489)	(0.3,0.809010067889191)	(0.4,0.820903399209539)	(0.5,0.837600204163358)	(0.6,0.840439748311128)	(0.7,0.81939768388814)	(0.8,0.833498737799376)	(0.9,0.897100635806999)	(1,1)	};	
	\addplot coordinates{	(0,0.81311219483614)	(0.1,0.828628754615784)	(0.2,0.843676309734583)	(0.3,0.861009861081838)	(0.4,0.876965850889683)	(0.5,0.895192077457905)	(0.6,0.902419411242008)	(0.7,0.892697500437498)	(0.8,0.921424704492092)	(0.9,0.960241207480431)	(1,0.999999940395355)	};	
	\addplot coordinates{	(0,0.512262800984233)	(0.1,0.537110145102925)	(0.2,0.547581180731545)	(0.3,0.578592997788854)	(0.4,0.61412352667972)	(0.5,0.637010256212241)	(0.6,0.665214337963338)	(0.7,0.707621276696857)	(0.8,0.797814890067558)	(0.9,0.896343378725488)	(1,1)	};	
	\end{rtplotxalpha}
	\hspace{\seplength}
    \begin{rtplotxalpha}
    {
        title={(c) $\lambda=1000$},
        xlabel={$\rho$},
    }
	\addplot coordinates{	(0,0.598852546747095)	(0.1,0.606120351848512)	(0.2,0.599764351227621)	(0.3,0.599869252973555)	(0.4,0.651827535775611)	(0.5,0.688461140312951)	(0.6,0.689382555240512)	(0.7,0.720009438585581)	(0.8,0.778748845726098)	(0.9,0.883600599599326)	(1,0.999999999999999)	};	
	\addplot coordinates{	(0,0.709933713103744)	(0.1,0.708753224590711)	(0.2,0.696242844832302)	(0.3,0.71500636486423)	(0.4,0.721827395276745)	(0.5,0.747519667502702)	(0.6,0.732774731554003)	(0.7,0.746891759314831)	(0.8,0.789507212838161)	(0.9,0.887521361074561)	(1,0.999999999999999)	};	
	\addplot coordinates{	(0,0.698131213902594)	(0.1,0.720337652769295)	(0.2,0.713107855772091)	(0.3,0.737780762497576)	(0.4,0.739134524329566)	(0.5,0.776469352890742)	(0.6,0.772873360649723)	(0.7,0.774810587457654)	(0.8,0.802432674271887)	(0.9,0.889772660580908)	(1,0.999999999999999)	};	
	\addplot coordinates{	(0,0.735995051860809)	(0.1,0.75891630411148)	(0.2,0.78337188243866)	(0.3,0.799294734001159)	(0.4,0.824703011512756)	(0.5,0.849550218582153)	(0.6,0.849926400184631)	(0.7,0.843949573040008)	(0.8,0.870983884334564)	(0.9,0.938063311576843)	(1,0.999999940395355)	};	
	\addplot coordinates{	(0,0.467257717373933)	(0.1,0.503187379808209)	(0.2,0.537464251952902)	(0.3,0.561511283796037)	(0.4,0.600661452951428)	(0.5,0.625462457479584)	(0.6,0.644096008691396)	(0.7,0.7068662379088)	(0.8,0.776937506907558)	(0.9,0.892697222820739)	(1,1)	};	
	\end{rtplotxalpha}
	\\[2mm]
\hspace{-7mm}
\setlength\seplength{-4.2mm}
	\begin{rtplot}{title={(d) $m=n/10$}, 
xlabel={$n$},
ylabel={EoP},
}
	\addplot coordinates{	(50,0.682809042990274)	(100,0.698244646190595)	(150,0.694656960047587)	(200,0.696508396540319)	(250,0.694888002215281)	(300,0.695587932403241)	(350,0.695972625726068)	(400,0.695425452812854)	(450,0.696224399513031)	(500,0.698801692519849)		};	
	\addplot coordinates{	(50,0.766636273522589)	(100,0.86856622952095)	(150,0.905445864029465)	(200,0.929284657997505)	(250,0.937743103996861)	(300,0.938416938829045)	(350,0.938430390615245)	(400,0.936573929469434)	(450,0.93562112402146)	(500,0.934663087059371)		};	
	\addplot coordinates{	(50,0.774766585924743)	(100,0.864743743433626)	(150,0.895544403260744)	(200,0.915906934812298)	(250,0.926810486415419)	(300,0.938245361695714)	(350,0.948867508795083)	(400,0.954837737152935)	(450,0.960344724223917)	(500,0.966067935173629)		};	
	\addplot coordinates{	(50,0.805140974322955)	(100,0.877906359136105)	(150,0.905496850013733)	(200,0.926365482509136)	(250,0.934234737157821)	(300,0.942050304114818)	(350,0.948194967210293)	(400,0.953300967514515)	(450,0.956280857920647)	(500,0.958265792131424)		};	
	\addplot coordinates{	(50,0.495238658312201)	(100,0.497248329304361)	(150,0.490349386399708)	(200,0.488242773531428)	(250,0.486568978925146)	(300,0.48942994342678)	(350,0.487989961990002)	(400,0.485589758457113)	(450,0.484732360476521)	(500,0.482673550320045)		};	
	\end{rtplot}
    \hspace{\seplength}
	\begin{rtplot}{title={(e) $m=n/5$},
xlabel={$n$},
}
	\addplot coordinates{	(50,0.731859425832728)	(100,0.738060688026097)	(150,0.733254100021292)	(200,0.729413608647969)	(250,0.728686025563181)	(300,0.726486698504219)	(350,0.727485014889813)	(400,0.726423109848429)	(450,0.721951071689262)	(500,0.722776960781122)		};	
	\addplot coordinates{	(50,0.822039111105165)	(100,0.879512930955067)	(150,0.904636433035023)	(200,0.928878122086341)	(250,0.935596396619819)	(300,0.934671508173614)	(350,0.932783346317729)	(400,0.930401248970525)	(450,0.929349446024813)	(500,0.927712344459168)		};	
	\addplot coordinates{	(50,0.837902994205072)	(100,0.891801603820219)	(150,0.904357987857433)	(200,0.911787964335114)	(250,0.919701350442054)	(300,0.927077460277779)	(350,0.932905997736883)	(400,0.938707427548254)	(450,0.94718798808762)	(500,0.94959298706699)		};	
	\addplot coordinates{	(50,0.894850058853626)	(100,0.918057377040386)	(150,0.929041848778725)	(200,0.934876990616321)	(250,0.939299845695496)	(300,0.942818291485309)	(350,0.946325578689575)	(400,0.948574209809303)	(450,0.951177855730057)	(500,0.951002247333526)		};	
	\addplot coordinates{	(50,0.637067983766298)	(100,0.628742321337077)	(150,0.626714100214383)	(200,0.630779927780373)	(250,0.628312804572322)	(300,0.617371596342635)	(350,0.618168671852344)	(400,0.611923787175852)	(450,0.606712136639548)	(500,0.630113900700728)		};	
	\end{rtplot}
    \hspace{\seplength}
	\begin{rtplot}{title={(f) $m=n/2$},
xlabel={$n$},
}
	\addplot coordinates{	(50,0.70292700120311)	(100,0.695275821786083)	(150,0.696726494199734)	(200,0.69306837808285)	(250,0.691474686555574)	(300,0.691546312259482)	(350,0.68980743051954)	(400,0.690556385326421)	(450,0.688653348605516)	(500,0.687796300095124)		};	
	\addplot coordinates{	(50,0.731627989649489)	(100,0.708774354367466)	(150,0.70904196898956)	(200,0.700047936039551)	(250,0.697070534866687)	(300,0.703211157664042)	(350,0.698485056334115)	(400,0.698111612440672)	(450,0.703855938438781)	(500,0.695123104715028)		};	
	\addplot coordinates{	(50,0.753610862663518)	(100,0.728383591992899)	(150,0.72752267710796)	(200,0.715197145299556)	(250,0.711228824791078)	(300,0.716300573548352)	(350,0.709159174007447)	(400,0.707520180344198)	(450,0.711353792793165)	(500,0.700748732945334)		};	
	\addplot coordinates{	(50,0.903975360691547)	(100,0.92423792630434)	(150,0.934395559132099)	(200,0.941213063299656)	(250,0.944932200908661)	(300,0.947651385068893)	(350,0.950766570270061)	(400,0.952794786095619)	(450,0.955304236412048)	(500,0.957530363798141)		};	
	\addplot coordinates{	(50,0.731410982296095)	(100,0.724064157066331)	(150,0.718951926698614)	(200,0.71633417906535)	(250,0.715345010944302)	(300,0.70890674347738)	(350,0.706492133526342)	(400,0.705391024178965)	(450,0.694655069317406)	(500,0.703305048338271)		};	
	\end{rtplot}
\\[2mm]
\ref{leg:all}\\[3mm]
\caption{Comparison of the EoP when the zero-sum attacker type is always included in $\Theta$ in addition to types generated by the covariance model. Other parameters are specified in the same way as in Figure~\ref{fig:exp}.}\label{fig:exp-zs}
\end{figure*}
\end{NoHyper}

\clearpage

\paragraph{Algorithm Runtime}

Figure~\ref{fig:runtime} shows results of the runtime test of our algorithms. All results are obtained on a platform with a 2.60 GHz CPU and a 8.0 GB memory. The time for computing SSEs is excluded in the results as this is handled by an existing algorithm, the performance of which is not our focus. Both our algorithms for computing the optimal policy and the QR policy exhibit good scalability, capable of solving problems of $5000$ attacker types and $500$ targets in a reasonable amount of time. The computation of QR policy is extremely efficient thanks to its simplicity.


\newlength{\wordlength}
\newcommand{\wordbox}[3][c]{\settowidth{\wordlength}{#3}\makebox[\wordlength][#1]{#2}}
\newcommand{\mathwordbox}[3][c]{\settowidth{\wordlength}{$#3$}\makebox[\wordlength][#1]{$#2$}}

\makeatletter
\newcommand\FirstWord[1]{\@thefirstword#1 \@nil}%
\newcommand\@thefirstword{}%
\def\@thefirstword#1 #2\@nil{#1}%
\makeatother

\newcommand*{\MinNumber}{-800}%
\newcommand*{\MidNumber}{60} %
\newcommand*{\MaxNumber}{800}%

\newcommand{\ApplyGradient}[1]{%
        \ifdim #1 pt > \MidNumber pt
            \pgfmathsetmacro{\PercentColor}{100.0*max(min((\FirstWord{#1} - \MidNumber)/(\MaxNumber-\MidNumber),1),0.00)} %
            \hspace{-1mm}\colorbox{red!\PercentColor!white}{\makebox[3em]{\wordbox[r]{\strut{#1}}{000.00}}}
        \else
            \pgfmathsetmacro{\PercentColor}{100.0*max(min((\FirstWord{#1} - \MinNumber)/(\MidNumber-\MinNumber),1),0.00)} %
            \hspace{-1mm}\colorbox{white!\PercentColor!blue}{\makebox[3em]{\wordbox[r]{\strut{#1}}{000.00}}}
        \fi
}

\newcommand{\ApplyGradientB}[1]{%
        \ifdim #1 pt > \MidNumber pt
            \pgfmathsetmacro{\PercentColor}{100.0*max(min((\FirstWord{#1} - \MidNumber)/(\MaxNumber-\MidNumber),1),0.00)} %
            \hspace{-1mm}\colorbox{red!\PercentColor!white}{\makebox[3em]{\wordbox[r]{\strut{#1}}{0.00}}}
        \else
            \pgfmathsetmacro{\PercentColor}{100.0*max(min((\FirstWord{#1} - \MinNumber)/(\MidNumber-\MinNumber),1),0.00)} %
            \hspace{-1mm}\colorbox{white!\PercentColor!blue}{\makebox[3em]{\wordbox[r]{\strut{#1}}{0.00}}}
        \fi
}

\begin{figure}[h!]
\newcolumntype{X}{>{\raggedleft\arraybackslash\collectcell\ApplyGradient}p{12mm}<{\endcollectcell}}
\newcolumntype{Y}{>{\raggedleft\arraybackslash\collectcell\ApplyGradientB}p{12mm}<{\endcollectcell}}
\renewcommand{\arraystretch}{0}
\newcommand{\resetcol}[1]{\multicolumn{1}{c}{\hspace{-1mm}\makebox[2em]{#1}}}
\setlength{\fboxsep}{1mm} 
\setlength{\tabcolsep}{0pt}
\vspace{2mm}
\small
\centering
\begin{tabular}{@{~}r@{\quad}*{5}{X}}
\multicolumn{6}{c}{(a) Optimal policy} \\[3mm]
\noalign{\hrule height 1pt}\\[1.5mm]
& \resetcol{\!\!\!\!\!$n:100$}	& \resetcol{$200$}	& \resetcol{$300$}	& \resetcol{$400$}	& \resetcol{$500$}
  \\[2mm]
\noalign{\hrule height 0.5pt}
$\lambda: 1000$	& 3.54	& 7.45	& 10.65	& 14.26	& 17.96	\\
$2000$	& 14.84	& 30.81	& 46.84	& 75.99	& 78.75	\\
$3000$	& 32.42	& 63.50	& 96.92	& 126.96	& 161.37	\\
$4000$	& 57.47	& 115.05	& 171.00	& 228.20	& 288.19	\\
$5000$	& 88.84	& 184.77	& 273.51	& 365.91	& 480.54\\
\noalign{\hrule height 1pt}
\end{tabular}
\quad
\begin{tabular}{@{~}r@{\quad}*{5}{Y}}
\multicolumn{6}{c}{(b) QR policy} \\[3mm]
\noalign{\hrule height 1pt}\\[1.5mm]
& \resetcol{\!\!\!\!\!$n:100$}	& \resetcol{$200$}	& \resetcol{$300$}	& \resetcol{$400$}	& \resetcol{$500$}
  \\[2mm]
\noalign{\hrule height 0.5pt}
$\lambda: 1000$	& 0.00	& 0.02	& 0.02	& 0.03	& 0.03	\\
$2000$	& 0.01	& 0.03	& 0.04	& 0.06	& 0.10	\\
$3000$	& 0.02	& 0.05	& 0.07	& 0.10	& 0.12	\\
$4000$	& 0.03	& 0.06	& 0.09	& 0.12	& 0.16	\\
$5000$	& 0.04	& 0.08	& 0.10	& 0.16	& 0.20	\\
\noalign{\hrule height 1pt}	
\end{tabular}
\caption{Algorithm runtime (seconds).}\label{fig:runtime}
\end{figure}

\end{document}